\keywords{separation logic, concurrency, program refinement, Iris, Coq}
\newcommand\basedir{}%
\newcolumntype{.}{@{}}
\newcolumntype{M}{@{\mskip\thickmuskip}}
\definecolor{StringRed}{rgb}{.637,0.082,0.082}
\definecolor{CommentGreen}{rgb}{0.0,0.55,0.3}
\definecolor{KeywordBlue}{rgb}{0.0,0.3,0.55}
\definecolor{LinkColor}{rgb}{0.55,0.0,0.3}
\definecolor{CiteColor}{rgb}{0.55,0.0,0.3}
\definecolor{HighlightColor}{rgb}{0.0,0.0,0.0}
\definecolor{grey}{rgb}{0.5,0.5,0.5}
\newcommand{\ie}{\emph{i.e.,} }
\newcommand{\cf}{\emph{c.f.} }
\newcommand{\eg}{\emph{e.g.,} }
\newcommand{\etc}{\emph{etc}\xperiod}
\newcommand{\etal}{\emph{et~al}\xperiod}
\newcommand{\wrt}{w.r.t.~}
\newcommand{\reloc}{ReLoC\xspace}
\newcommand{\relocTwo}{ReLoC Reloaded\xspace}
\newcommand{\PersistentProp}{\Prop_{\always}}
\newcommand{\ctxrel}{\precsim_{\mathit{ctx}}}
\newcommand{\ctxref}[4]{\ifthenelse{\equal{#1}{}}{}{#1 \proves} #2 \ctxrel #3 : #4}
\newcommand{\ctxequivsymb}{\mathrel{\simeq}_{\mathit{ctx}}}
\newcommand{\ctxequiv}[4]{\ifthenelse{\equal{#1}{}}{}{#1 \proves} #2 \ctxequivsymb #3 : #4}
\newcommand{\lrel}{\precsim}
\DeclareDocumentCommand{\REL}{O{} m m m}
  {\mathsf{REL}_{#1}\spac#2 \lrel #3 \ifthenelse{\equal{#4}{}}{}{ : #4}}
\DeclareDocumentCommand{\logrel}{O{} m O{} m m m}
  {
  \ifthenelse{\equal{#2}{}}{%
    \ifthenelse{\equal{#3}{}}{%
      {}
    }{#3}
  }{#2 \ifthenelse{\equal{#3}{}}{%
    }{\mid #3}}
  \ifthenelse{\equal{#1}{}}{\ifthenelse{\equal{#2}{}\and\equal{#3}{}}{}{\mathrel{\models}}}{\mathrel{\models_{#1}}}
    #4 \lrel #5 \ifthenelse{\equal{#6}{}}{}{ : #6}}
\newcommand{\logrelV}[4]{\Sem{#1}_{#2}(#3, #4)}
\newcommand{\mapstoI}{\mathrel{\mapsto_{\mathsf{i}}}}
\newcommand{\mapstoS}{\mathrel{\mapsto_{\mathsf{s}}}}
\newcommand{\fmapstoS}[1][]{\fmapsto[#1]_\mathsf{s}}
\newcommand{\loc}{\ell}
\newcommand{\validSubst}[3]{\Sem{#2}^{\ast}_{#1}(#3)}
\newcommand{\HeapLang}{\textlog{HeapLang}\xspace}
\newcommand{\Type}{\textdom{Type}}
\newcommand{\TVar}{\textdom{TVar}}
\newcommand{\Pctx}{\textdom{Ctx}}
\newcommand{\Ectx}{\textdom{ECtx}}
\newcommand{\efs}{\vec{\expr_f}}
\newcommand{\cfgg}{\rho}
\newcommand{\pureexec}[2]{#1 \ra_{\mathsf{pure}} #2}
\newcommand{\conS}[2]{{#1}\mathrel{\textnormal{::}}{#2}}
\newcommand{\lvar}[1]{\textit{#1}}
\newcommand{\langkwStyle}[1]{\textsf{\color{blue} #1}}
\newcommand{\langconstStyle}[1]{\textsf{\textbf{#1}}}
\reservestyle{\langkw}{\langkwStyle}
\reservestyle{\langconst}{\langconstStyle}
\reservestyle{\langother}{\mathit}
\def\Match#1with#2{\<match> \spac #1 \spac \<with> \spac #2}
\def\Let#1=#2in{\<let> #1 \mathrel{=} #2 \<in> }
\def\If#1then{\<if> #1 \<then>}
\def\Else{\<else>}
\def\Ref(#1){\<ref>(#1)}
\def\Rec#1 #2={\<rec> {#1} \  {#2} \mathrel{=} }
\newcommand*\Fork[1]{\<fork>\spac\set{#1}}
\newcommand\deref{\mathop{!}}
\let\gets\leftarrow
\newcommand{\diverge}{\textlog{diverge}}
\newcommand{\unpack}[2]{\<unpack>\;{#1} \<in>#2}
\newcommand{\CAS}[3]{\<CAS>(#1, #2, #3)}
\newcommand{\tapp}[1]{#1 \langle\rangle}
\newcommand{\tlam}[1]{\Lambda. #1}
\newcommand{\proj}{\pi}
\newcommand{\NONE}{\<None>}
\newcommand{\SOME}{\<Some>}
\newcommand{\textType}[1]{\mathsf{#1}}
\newcommand{\tvar}{\textType{\alpha}}
\newcommand{\tunit}{\textType{unit}}
\newcommand{\tint}{\textType{int}}
\newcommand{\tbool}{\textType{bool}}
\newcommand{\trec}[1]{\mu #1}
\newcommand{\tforall}[1]{\forall #1}
\newcommand{\texists}[1]{\exists #1}
\newcommand{\tref}{\textType{ref}}
\newcommand{\EqType}{\mathrm{EqType}}
\newcommand{\none}{\bot}
\newcommand\bigcdot{\mathrel{\raisebox{1pt}{$\scriptscriptstyle\bullet$}}}
\newcommand\holed[1]{[\,#1\,]}
\newcommand\hole{\holed\bigcdot}
\newcommand\contextE[1]{\kern1pt\holed{#1}}
\newcommand{\typedctx}[7]{#1 : (#2 \mid #3 \vdash #4) \Rightarrow (#5 \mid #6 \vdash #7)}
\newcommand{\typed}[4]{\ifthenelse{\equal{#2}{}}{#1}{#1 \mid #2} \vdash #3 : #4}
\newcommand\doubleplus{+\kern-1.3ex+\kern0.8ex}
\newcommand{\specctx}{\textlog{specCtx}}
\newcommand{\specinv}{\textlog{spec\_inv}}
\newcommand{\tpto}{\mathbin{\Mapsto}}
\newcommand{\lkvar}{\mathit{lk}}
\newcommand{\newlock}{\mathsf{newlock}}
\newcommand{\acquire}{\mathsf{acquire}}
\newcommand{\release}{\mathsf{release}}
\newcommand{\Ticket}{\mathsf{TL}}
\newcommand{\newlockTicket}{\newlock_\Ticket}
\newcommand{\acquireTicket}{\acquire_\Ticket}
\newcommand{\releaseTicket}{\release_\Ticket}
\newcommand{\bitbool}{\textlog{bitbool}}
\newcommand{\bitnat}{\textlog{bitnat}}
\newcommand{\flipnat}{\textlog{flipnat}}
\newcommand{\tbit}{\textlog{TBit}}
\newcommand{\CGincrement}{\textlog{inc}_s}
\newcommand{\CGcounter}{\textlog{counter}_s}
\newcommand{\FGincrement}{\textlog{inc}_i}
\newcommand{\FGcounter}{\textlog{counter}_i}
\newcommand{\wkincr}{\textlog{wkincr}}
\newcommand{\counterread}{\textlog{read}}
\newcommand{\counterinv}{I_{\textlog{cnt}}}
\newcommand{\counterN}{\mathcal{N}}
\newcommand{\rand}{\mathsf{rand}}
\newcommand{\orFn}{\mathsf{or}}
\newcommand{\orOp}{\mathrel{\oplus}}
\newcommand{\newcoin}{\textlog{new\_coin}}
\newcommand{\flipcoin}{\textlog{flip}}
\newcommand{\readcoin}{\textlog{read}}
\newcommand{\newcoinLazy}{\textlog{new\_coin\_lazy}}
\newcommand{\flipcoinLazy}{\textlog{flip\_lazy}}
\newcommand{\readcoinLazy}{\textlog{read\_lazy}}
\newcommand{\newcoinLazyI}{\widehat{\newcoinLazy}}
\newcommand{\flipcoinLazyI}{\widehat{\flipcoinLazy}}
\newcommand{\readcoinLazyI}{\widehat{\readcoinLazy}}
\newcommand{\dinssingle}[2]{\left[#1:=#2\right]}
\DeclareDocumentCommand{\deltainsert}{m m m}
{\dinssingle{#1}{#2}\ifthenelse{\equal{#3}{}}{}{, #3}}
\newcommand{\vctxinsert}[3]{#1\mathop{:}#2, #3}
\newcommand{\rel}{R}
\newcommand{\tenv}{\Xi}
\newcommand{\pctx}{\mathcal{C}}
\newcommand{\lockInv}{\mathsf{lockInv}}
\newcommand{\lockInt}{\mathsf{lockInt}}
\newcommand{\isLock}[2]{\mathsf{isLock}_{\mathsf{s}}(#1, #2)}
\newcommand{\CLOSE}[2]{\mathsf{closeInv}_{#1}(#2)}
\newcommand{\islockI}{\mathsf{isLock}_{\mathsf{i}}}
\newcommand{\locked}[1]{\mathsf{locked}_{\mathsf{i}}(#1)}
\newcommand{\ticket}[2]{\mathsf{ticket}_{#1}({#2})}
\newcommand{\issuedTickets}[2]{\mathsf{issuedTickets}_{#1}({#2})}
\newcommand{\llo}{l_{\mathsf o}}
\newcommand{\lln}{l_{\mathsf n}}
\newcommand{\loname}{\gname_{\mathsf o}}
\newcommand{\lnname}{\gname_{\mathsf n}}
\newcommand{\lonamesp}{\namesp\!.{\mathsf o}}
\newcommand{\lnnamesp}{\namesp\!.{\mathsf n}}
\newcommand{\locknamesp}{\namesp\!.{\mathsf{inv}}}
\newcommand{\newProph}{\textlog{newproph}}
\newcommand{\resolveProph}[2]{\textlog{resolve}\spac #1\spac \textlog{to}\ #2}
\newcommand{\Proph}[2]{\textlog{proph}(#1, #2)}
\newcommand{\cnt}[3]{\textlog{cnt}_{#1}(#2, #3)}
\newcommand{\cntAuth}[2]{\textlog{cntAuth}_{#1}(#2)}
\newcommand{\Cnt}[2]{\textlog{isCnt}_{#2}(#1, \namesp)}
\newcommand{\CntN}[3]{\textlog{isCnt}_{#2}(#1, #3)}
\newcommand{\cntvar}{\mathit{c}}
\definecolor{precond}{HTML}{03aa07}
\definecolor{postcond}{HTML}{ec6262}
\crefname{thm}{Theorem}{Theorems}
\crefname{lem}{Lemma}{Lemmata}
\crefname{prop}{Proposition}{Propositions}
\crefname{defi}{Definition}{Definitions}
\begin{document}

\title[ReLoC Reloaded: A Mechanized Relational Logic for Concurrency]{ReLoC Reloaded: A Mechanized Relational Logic for Fine-Grained Concurrency and Logical Atomicity}

\author{Dan Frumin\rsuper{a}}
\address{\lsuper{a}University of Groningen and Radboud University}
\email{d.frumin@rug.nl}

\author{Robbert Krebbers\rsuper{b}}
\address{\lsuper{b}Radboud University and Delft University of Technology}
\email{mail@robbertkrebbers.nl}

\author{Lars Birkedal\rsuper{c}}
\address{\lsuper{c}Aarhus University}
\email{birkedal@cs.au.dk}

\maketitle

\begin{abstract}
We present a new version of \reloc: a relational separation logic for proving refinements of programs with higher-order state, fine-grained concurrency, polymorphism and recursive types.
The core of \reloc is its \emph{refinement judgment} $\logrel{}{\expr}{\expr'}{\type}$, which states that a program $\expr$ refines a program $\expr'$ at type $\type$.
\reloc provides type-directed structural rules and symbolic execution rules in separation-logic style for manipulating the judgment, whereas in prior work on refinements for languages with higher-order state and concurrency, such proofs were carried out by unfolding the judgment into its definition in the model.
\reloc's abstract proof rules make it simpler to carry out refinement proofs, and enable us to generalize the notion of logically atomic specifications to the relational case, which we call \emph{logically atomic relational specifications}.

We build \reloc on top of the Iris framework for separation logic in Coq, allowing us
to leverage features of Iris to prove soundness of \reloc, and to carry out refinement proofs in \reloc.
We implement tactics for interactive proofs in \reloc, allowing us to mechanize several case studies in Coq, and thereby demonstrate the practicality of \reloc.

\relocTwo extends \reloc (LICS'18) with various technical improvements, a new Coq mechanization, and support for Iris's prophecy variables.
The latter allows us to carry out refinement proofs that involve reasoning about the program's \emph{future}.
We also expand \reloc's notion of logically atomic relational specifications with a new flavor based on the HOCAP pattern by Svendsen \etal.


\end{abstract}

\section{Introduction}
\label{reloc:sec:intro}
A fundamental question in computer science is \emph{when two programs are equivalent}?
The ``golden standard'' of program equivalence is \emph{contextual equivalence}, stated directly in terms of the operational semantics.
Intuitively, expressions $\expr$ and $\expr'$ are contextually equivalent
if no well-typed client can distinguish them, which formally means that for all well-typed contexts $\pctx$, the expression $\fillctx\pctx[\expr]$ has same observable behaviors as $\fillctx\pctx[\expr']$.
Contextual equivalence can be further decomposed into \emph{contextual refinement}.
An expression $\expr$ \emph{contextually refines} $\expr'$ 
if, for all contexts $\pctx$, if $\fillctx\pctx[\expr]$ has some observable behavior, then so does $\fillctx\pctx[\expr']$.
Expressions $\expr$ and $\expr'$ are contextually equivalent iff $\expr$ contextually refines $\expr'$ and \textit{vice versa}.

Contextual refinement and contextual equivalence have many applications in computer science.
One such application is to specify programs in terms of other programs.
For example, one can specify an implementation of a program module (say, a map) that internally uses an efficient but complicated data structure (say, a balanced search tree) by stating that it refines an implementation that internally uses an inefficient but easy to understand data structure (say, an unordered list).
In the context of a typed language that supports data abstraction, a specification of a program module in terms of refinement shows that clients of the program module cannot depend on the internal representation of data.
This can be seen as an instance of the \emph{representation independence} principle \cite{reynolds:1974,mitchell:1986}.

In the context of concurrency, contextual refinement is often used to specify a fine-grained concurrent program module by stating that it contextually refines a coarse-grained version.
This is similar to showing that a fine-grained program module is \emph{linearizable} \cite{herlihy:wing:1990,filipovic:ohearn:rinetzky:yang:2010}, \ie each fine-grained operation appears to take place instantaneously.
 A simple example is the specification of a fine-grained concurrent counter by a coarse-grained one, see \Cref{reloc:fig:impl} for the code.
The increment operation of the fine-grained version, $\FGcounter$, takes an ``optimistic'' lock-free approach to incrementing the value using a compare-and-set operation inside a loop.
If the value of the counter has been changed (for instance, by some other thread), then the fine-grained counter reattempts the increment from the beginning.
The increment operation of the coarse-grained version, $\CGcounter$, is performed inside a critical section guarded by a lock.
We can state the desired refinement as follows:
\begin{equation*}
\FGcounter \ctxrel \CGcounter : (\tunit \to \tint) \times (\tunit \to \tint).
\end{equation*}
Due to the instrumentation of the coarse-grained version with locks, this refinement expresses that each operation of the fine-grained version takes place instantaneously.
We will use the counter as a simple running example throughout the paper.

Another application of contextual refinement and contextual equivalence is to state algebraic properties of program constructs.
For example, let us consider the non-deterministic choice operator $\expr_1 \orOp \expr_2$, which non-deterministically executes the expression $\expr_1$ or $\expr_2$.
Using contextual equivalence, we can state that this operator is
commutative (${\expr_1 \orOp \expr_2} \ctxequivsymb {\expr_2 \orOp \expr_1}$),
associative (${\expr_1 \orOp (\expr_2 \orOp \expr_3)} \ctxequivsymb {(\expr_1 \orOp \expr_2) \orOp \expr_3}$),
and that sequential composition distributes over the operator
$({(\expr_1 \orOp \expr_2) ; \expr_3} \ctxequivsymb {(\expr_1 ; \expr_3) \orOp (\expr_2 ; \expr_3)}$).

\begin{figure}
\begin{align*}
\intertext{\raggedright\textbf{Fine-grained grained version (\ie the implementation)}:}
\counterread \eqdef{}& \Lam \cntvar. \deref \cntvar \\
\FGincrement \eqdef{}& \Rec {\lvar{inc}} \cntvar =
	\begin{array}[t]{@{} l}
  \Let n = \deref \cntvar in \\
  \If \CAS{\cntvar}{n}{1 + n} then n \Else \lvar{inc}\ \cntvar
  \end{array} \\
\FGcounter \eqdef{}& \Let \cntvar = \Ref(0) in ((\Lam \unittt.\counterread\ \cntvar),(\Lam \unittt. \FGincrement\ \cntvar)) \\
\intertext{\raggedright\textbf{Coarse-grained version (\ie the specification)}:}
\CGincrement \eqdef{}& \Lam \cntvar\, l.
	\acquire\ l;\  
	\Let n = \deref \cntvar in \cntvar \gets (1 + n);\
	\release\ l;\ n
	\\
\CGcounter \eqdef{}& \Let l = \newlock\ \unittt in
  \Let \cntvar = \Ref(0) in
  ((\Lam \unittt.\counterread\ \cntvar),(\Lam \unittt. \CGincrement\ \cntvar\ l))
\end{align*}
\caption{A fine-grained and coarse-grained concurrent counter.
(Note that the $\counterread$ operation is shared by both.)}
\label{reloc:fig:impl}
\end{figure}

\subsection*{Proving contextual refinement and contextual equivalence}
Contextual refinement $\expr \ctxrel {\expr'} : \type$ (and contextual equivalence $\expr \ctxequivsymb {\expr'} : \type$) are very strong notions because they relate the expressions $\expr$ and $\expr'$ in \emph{any} well-typed context $\pctx$ with a hole of type $\type$.
As a consequence, proving contextual refinement and equivalence directly is challenging---one has to consider arbitrary contexts $\pctx$, which are only known to be well-typed.
Contextual refinement and equivalence are therefore typically proved indirectly using approaches based on bisimulations (\eg~\cite{gordon:1999,pitts:2000,koutavas:wand:2006,sumii:pierce:2007}) or logical relations (\eg~\cite{pitts:2005,ahmed:2006,dreyer:ahmed:birkedal:2009,birkedal:Sieczkowski:thamsborg:2012,turon:thamsborg:ahmed:birkedal:dreyer:2013}).
In the present paper we focus on approaches based on logical relations because they scale well to increasingly rich programming languages with features such as impredicative polymorphism, recursive types, higher-order state, and fine-grained concurrency.

In the approaches based on logical relations, the key is a notion of \emph{logical refinement}, notation $\logrel {} \expr {\expr'} \type$.
Logical refinement is defined by structural recursion over the type $\type$, rather than by quantification over all contexts.
The soundness theorem of logical relations states that logical refinement implies contextual refinement, \ie that $\logrel {} \expr {\expr'} \type$ implies $\expr \ctxrel {\expr'} : \type$.
As a result, proving contextual refinement can be reduced to proving logical refinement, which is generally much easier.

Unfortunately, it is difficult to construct a suitable notion of logical refinement when considering language features like recursive types and higher-order state.
In the presence of (general) recursive types, no structurally-recursive definition over the type exists, and in the presence of higher-order references, one needs some notion of \emph{recursively-defined worlds}~\cite{birkedal:etal:11}.
The technique of \emph{step-indexing}~\cite{ahmed:appel:virga:2002,ahmed:thesis} has been used to stratify the definitions using recursion over a natural number, called the \emph{step-index}, which corresponds to the number of computation steps performed by the program.

Step-indexing has shown to be very effective by a large body of work on step-indexed logical relations, \eg~\cite{neis:dreyer:rossberg:2011,hur:dreyer:2011,birkedal:Sieczkowski:thamsborg:2012,cicek:paraskevopoulou:garg:2016,rajani:garg:2018}.
However, definitions and proofs are intricate because step-indices appear practically everywhere---they even appear in definitions and proofs related to language features (say, products or sums) for which step-indexing is orthogonal.
Dreyer \etal thus proposed the ``logical approach'' to logical relations~\cite{dreyer:ahmed:birkedal:2009,dreyer:neis:rossberg:birkedal:2010} to hide step-indices by abstracting and internalizing them in a modal logic using the \emph{later} modality ($\later$)~\cite{appel:mellies:richards:vouillon:2007}.
Turon \etal~\cite{turon:thamsborg:ahmed:birkedal:dreyer:2013,CaReSL} further developed the logical approach by using separation logic~\cite{ohearn:reynolds:yang:2001,ohearn:2007,brookes:2007} to abstract over program states and to handle (fine-grained) concurrency.

More recently, Krebbers \etal~\cite{irisIPM} and Timany~\cite{amin:thesis} defined a logical relation for program refinement based on the work of Turon \etal in the state-of-the-art higher-order concurrent separation logic Iris~\cite{irisJFP,iris1,iris2,iris3}.
Iris supports impredicative invariants~\cite{svendsen:birkedal:2014} and used-defined ghost state, which can be used to streamline the definition of the logical relation, and to carry out proofs of challenging program refinements.
The meta theory of Iris is mechanized in the Coq proof assistant, and Iris comes equipped with a \emph{proof mode}~\cite{irisIPM,MoSeL}---an extensive set of Coq tactics for separation logic proofs---which allowed them to mechanize all their results in Coq.

\subsection*{Problem statement and key idea}
To prove refinements of complicated program modules in a scalable fashion, it is important to decompose refinement proofs into smaller refinements that can be proved in isolation.
As a simple example, let us consider the refinement of the fine-grained and coarse-grained concurrent counter from \Cref{reloc:fig:impl}:
\begin{equation*}
\FGcounter \ctxrel \CGcounter : (\tunit \to \tint) \times (\tunit \to \tint).
\end{equation*}
We wish to decompose the proof of this refinement into refinements for the read and increment operations.
Naively, one might consider proving contextual refinements for these operations.
Unfortunately, such contextual refinements do not hold---they only hold \emph{conditionally} under the assumption that the internal state in both of the implementations is related (including the state of the lock used by the coarse-grained version).

Instead of performing composition at the level of contextual refinement, our key idea is to perform composition at the level of logical refinement.
By generalizing logical refinement to become an internal (\ie first-class) notion in (the Iris) separation logic, we can use the connectives of separation logic to express conditional refinements.
Logical refinements for the operations of the concurrent counter are as follows:
\begin{align*}
  \knowInv{}{\counterinv} \wand{}&
  \logrel{}{(\Lam \unittt. \counterread\ c_i)}{(\Lam \unittt. \counterread\ c_s)}{\tunit \to \tint}
   \\
  \knowInv{}{\counterinv} \wand{}&
  \logrel{}{(\Lam \unittt.\FGincrement\ c_i)}{(\Lam \unittt.\CGincrement\ c_s\ \lkvar)}{\tunit \to \tint}.
\end{align*}
We use the \emph{magic wand} ($\wand$, also known as \emph{separating implication}) to make these refinements conditional under the invariant $\counterinv$ (expressed using Iris's invariant connective $\knowInv {} I$), which is defined as $\counterinv \eqdef \Exists n \in \nat. c_i \mapstoI n * c_s \mapstoS n * \isLock{\lkvar}{\<false>}$.
The invariant $\counterinv$ intuitively expresses that in between function calls, the values of both counters are equal, and the lock (used in the coarse-grained implementation) is in unlocked state.
With logical refinements for the individual operations at hand, we can compose them into the logical refinement ${\logrel {} \FGcounter \CGcounter {(\tunit \to \tint) \times (\tunit \to \tint)}}$, which using soundness gives us the desired contextual refinement ${\FGcounter \ctxrel \CGcounter : (\tunit \to \tint) \times (\tunit \to \tint)}$.

Treating logical refinement as an internal notion in separation logic succinctly distinguishes our work from prior work.
In prior work on refinements for rich languages, \eg the aforementioned work by Turon \etal~\cite{turon:thamsborg:ahmed:birkedal:dreyer:2013,CaReSL}, Krebbers \etal~\cite{irisIPM}, and Timany~\cite{amin:thesis}, logical refinement is an external notion (\ie a proposition in ordinary mathematics, rather than in separation logic), which means that one cannot concisely state refinements that are conditional on the program state.
To state and prove such refinements, one needs to unfold the definition of the logical refinement into the model.

Apart from being able to decompose refinement proofs, internalizing logical refinement gives us a number of other tangible benefits.
First, it allows us to develop type-directed structural rules and symbolic execution rules for proving logical refinements.
Our symbolic execution rules closely resemble the typical rules for symbolic execution in separation logic, but come in two forms: for the program on the left-hand side and right-hand side of the refinement, making it possible to write concise proofs.

Second, by internalizing logical refinement we can state logical refinements that apply to the situation when the expression on the one side of the refinement contains a program subject to specification, while the expression on the other side is arbitrary.
We call such specifications \emph{relational specifications}.
Relational specifications take the ability to decompose refinement proofs one step further.
As a simple example, let us consider the example from \Cref{reloc:fig:impl}, where we proved that a fine-grained concurrent counter refines a coarse-grained version.
This refinement is insufficient if we want to prove that a program module that \emph{uses internally} the fine-grained counter (say, a ticket lock) refines another module that does not use the coarse-grained counter (say, a spin lock).
However, we can instead formulate a \emph{relational specification} for the program module that is proven just once, and derive different logical (and thus by soundness, contextual) refinements from it.

A key challenge in stating relational specifications for operations is to concisely capture that they behave as-if they were atomic, \ie they appear to take place instantaneously.
There has been a long line of work on \emph{logically atomic specifications} to reason about atomicity in the context of Hoare-style logics~\cite{jacobs:piessens:2011,HOCAP,daRochaPinto:dinsdale-young:gardner:2014,iris1,irisProph}.
We show that such logically atomic specifications generalize to the relational case, and call them \emph{logically atomic relational specifications}.
Concretely, we introduce relational specification patterns based on da Rocha Pinto \etal's TaDA-style~\cite{daRochaPinto:dinsdale-young:gardner:2014} and Svendsen \etal's HOCAP-style~\cite{HOCAP} logically atomic specifications.

\subsection*{The \reloc logic}

Based on the previously described key ideas, we develop a relational separation logic called \textbf{\reloc}.
\reloc is built on top of Iris, allowing the user to leverage the features of Iris such as invariants, (higher-order) ghost state, and prophecy variables.
Invariants and ghost state state are powerful mechanisms that support reasoning about concurrent programs through used-defined protocols.
Prophecy variables~\cite{abadi:lamport:1991,irisProph} allow for speculative reasoning about the future state of concurrent programs.
In Iris they come in the form of ghost variables whose value can be referenced before they are specified, thus allowing one to ``prophesize'' their potential value.
We show how these features can be used in \reloc to prove challenging refinements.

We have implemented \reloc as a shallow embedding on top of Iris in Coq~\cite{irisIPM,MoSeL}.
In addition to mechanizing all meta-theoretic results of \reloc, like its soundness theorem, we have implemented new tactics that support mechanized interactive reasoning about program refinements in \reloc in a practical and modular way.
To our knowledge, \reloc is the first fully mechanized relational logic enabling reasoning about contextual refinements of programs in a fine-grained concurrent higher-order imperative programming language.
The mechanization can be found at~\cite{appendix}.

\subsection*{Contributions and structure of the paper}%
\begin{itemize}
\item We present a relational logic \textbf{\reloc} for reasoning about contextual refinements of fine-grained concurrent higher-order imperative programs.
We present our target programming language (\Cref{reloc:sec:language}), an overview of \reloc (\Cref{reloc:sec:tour}), and a detailed description of its type-directed structural rules and symbolic execution rules (\Cref{reloc:sec:rules}).
\item We introduce relational specification patterns based on TaDA~\cite{daRochaPinto:dinsdale-young:gardner:2014} and  HOCAP-style~\cite{HOCAP} logically atomic specifications (\Cref{reloc:sec:atomicity}).
\item We show how to integrate \emph{prophecy variables} into ReLoC, thereby enabling speculative reasoning in proofs of program refinements (\Cref{reloc:sec:prophecies}).
\item We describe the logical relations model of \reloc in Iris (\Cref{reloc:sec:model}).
\item We describe the mechanization of \reloc in Coq, and explain how we support mechanized interactive reasoning in \reloc in a practical and modular way (\Cref{reloc:sec:formalization}).
\end{itemize}
We discuss further related work in \Cref{reloc:sec:related-work} and conclude in \Cref{reloc:sec:conclusion}.

In addition to the case studies presented in this paper, we have also verified a collection of refinements of concurrent programs from the literature.
We give a brief overview of these examples in \Cref{reloc:sec:other_examples}; and the proofs can be found in the accompanying Coq sources.

\subsection*{Differences with the conference version of this paper}
In the conference version of this paper~\cite{reloc} we described the first version of \reloc.
This paper extends the conference paper in two ways.
First, we introduce \relocTwo (in this paper referred to as just \reloc), which has several new features, especially in terms its Coq mechanization.
Second, we have expanded the presentation of, as well as the material covered by, the paper significantly.
Concretely, \relocTwo has the following new features compared to its original version:
\begin{itemize}
\item \relocTwo's primitive refinement judgment $\logrel{}{\expr}{\expr'}{\type}$ is defined for closed expressions (\ie without free variables), and the version for open expressions (\ie with free variables) is a derived notion (see \Cref{reloc:def:open_term_refinement}).
\item \relocTwo's underlying programming language is \HeapLang---the default language of Iris's Coq mechanization.
  By having a tight integration of \reloc with Iris's Coq ecosystem we managed to reuse more Coq code and integrate novel Iris features.
\item One such feature that we have integrated into \relocTwo is the support for prophecy
  variables (\Cref{reloc:sec:prophecies}), which was recently added to Iris~\cite{irisProph}.
\end{itemize}

Compared to the conference paper, we have significantly expanded \Cref{reloc:sec:language,reloc:sec:rules,reloc:sec:formalization}, and added \Cref{reloc:sec:prophecies,reloc:sec:model,reloc:sec:escape-hatch}, which are completely new.
We have extended \Cref{reloc:sec:atomicity} with HOCAP-style specifications, which we put into action by verifying a refinement between a ticket lock and a spin lock in \Cref{reloc:sec:ticket_lock}.


\section{The programming language}
\label{reloc:sec:language}
We consider a typed version of \HeapLang, the default language that is shipped with Iris's Coq development~\cite{irisWWW}.
\HeapLang is a call-by-value $\lambda$-calculus, with higher-order references, fork-based unstructured concurrency, and atomic operations for fine-grained concurrency, equipped with System-$\mathsf{F}$-style types.
The syntax is shown in \Cref{reloc:fig:language}.
We let $\tvar$ range over a countably infinite set $\TVar$ of type variables, which can be bound by the universal type $\tforall{\tvar. \type}$, existential type $\texists{\tvar . \type}$, and recursive type $\trec{\tvar. \type}$.
We omit the usual Boolean and arithmetic operations such as addition, multiplication, equality, negation.

\begin{figure}[t]
\begin{align*}
\type \in \Type \bnfdef{}&
  \tvar \ALT
  \tunit \ALT
  \tbool \ALT
  \tint \ALT
  \type_1 \times \type_2 \ALT
  \type_1 + \type_2 \ALT
  \type_1 \to \type_2 \ALT
  \tforall{\tvar. \type} \ALT
  \texists{\tvar. \type} \ALT
  \trec{\tvar. \type} \ALT
  \tref\ \type \\
\val \in \Val \bnfdef{}&
  i \ALT
  \loc \ALT
  \<true> \ALT
  \<false> \ALT
  (\val_1, \val_2) \ALT
  \<inl>(\val) \ALT
  \<inr>(\val) \ALT
  \Rec f \var = \expr \qquad i \in \mathbb{Z}, \loc \in \Loc \\ & \ALT
  \tlam{\expr}\ALT
  \<pack>\ \val \ALT
  \<fold>\ \val \\
\expr \in \Expr \bnfdef{}&
	\var \ALT
  \val \ALT
  \If \expr then \expr_1 \Else \expr_2 \ALT
  (\expr_1, \expr_2) \ALT
  \proj_i(\expr) \ALT
  \<inl>(\expr) \ALT
  \<inr>(\expr) \qquad i \in \{1, 2\} \\ & \ALT
  (\Match \expr with \<inl>(\var) \to \expr_1 \mid \<inr>(\var) \to \expr_2) \ALT
  \expr_1 (\expr_2) \ALT
  \tapp{\expr} \\ & \ALT
  \<pack>(\expr) \ALT
  \unpack{\expr_1}{\Ret \var. \expr_2} \ALT
  \<fold>\ \expr \ALT
  \<unfold>\ \expr \\ & \ALT
  \Ref(\expr) \ALT
  \deref \expr \ALT
  \expr_1 \gets \expr_2 \ALT
  \CAS{\expr_1}{\expr_2}{\expr_3} \ALT
  \Fork{\expr} \ALT
  \dots
\end{align*}
\caption{The syntax of the \HeapLang language.}
\label{reloc:fig:language}
\end{figure}

Most of the operations are standard, so we only discuss some subtleties.
Type abstraction $\tlam{\expr}$, type application $\tapp{\expr}$, and the $\<pack>$/$\<unpack>$ constructs for packing/unpacking existential types do not contain type annotations, following \eg~\cite{ahmed:2006}.
The $\<fold>$/$\<unfold>$ constructs are used to fold/unfold iso-recursive types.
The language includes standard operation on references $\Ref(\expr)$ for allocation, $\deref\expr$ for dereferencing, and $\expr_1 \gets \expr_2$ for assignment.
The \emph{atomic} compare-and-set operation $\CAS{\expr_1}{\expr_2}{\expr_3}$ checks if the value stored at the location $\expr_1$ is equal to $\expr_2$, and, if so, sets the value at $\expr_1$ to $\expr_3$.
The $\Fork\expr$ construct creates a new thread, which will execute the expression $\expr$.

\begin{figure}
\raggedright
\textbf{Selected typing rules}:
\begin{mathpar}
\inferH{var-typed}
  {\vctx(\var) = \type}
  {\typed{\tenv}{\vctx}{\var}{\type}}
\and
\inferH{proj-typed}
  {\typed{\tenv}{\vctx}{\expr}{\type_1 \times \type_2} \and i \in \{1, 2\}}
  {\typed{\tenv}{\vctx}{\proj_i(\expr)}{\type_i}}
\and
\inferH{rec-typed}
  {\typed{\tenv}{\vctxinsert{x}{\type_1}{\vctxinsert{f}{\type_1\to\type_2}{\vctx}}}{\expr}{\type_2}}
  {\typed{\tenv}{\vctx}{\Rec f x={\expr}}{\type_1 \to \type_2}}
\and
\inferH{tlam-typed}
  {\typed{\tenv, \tvar}{\vctx}{\expr}{\type} }
  {\typed{\tenv}{\vctx}{\tlam{\expr}}{\tforall{\tvar.\type}}}
\and
\inferH{tapp-typed}
  {\typed{\tenv}{\vctx}{\expr}{\tforall{\tvar.\type}}
   \and \tenv \vdash \type'}
  {\typed{\tenv}{\vctx}{\tapp{\expr}}{\subst{\type}{\tvar}{\type'}}}
\and
\inferH{tpack-typed}
  {\typed{\tenv}{\vctx}{\expr}{\subst{\type}{\tvar}{\type'}}}
  {\typed{\tenv}{\vctx}{\<pack>\ \expr}{\texists{\tvar.\type}}}
\and
\inferH{tunpack-typed}
  {\typed{\tenv}{\vctx}{\expr_1}{\texists{\tvar.\type_1}}
   \and
   \typed{\tvar, \tenv}{\vctxinsert{\var}{\type_1}{\vctx}}{\expr_2}{\type_2}
   \and \mbox{$\tvar$ is not free in $\vctx$ or $\type_2$}}
  {\typed{\tenv}{\vctx}{\unpack{\expr_1}{\Ret \var. \expr_2}}{\type_2}}
\and
\inferH{fold-typed}
  {\typed{\tenv}{\vctx}{\expr}{\subst{\type}{\tvar}{\trec{\type}}}}
  {\typed{\tenv}{\vctx}{\<fold>\ \expr}{\trec{\tvar . \type}}}
\and
\inferH{unfold-typed}
  {\typed{\tenv}{\vctx}{\expr}{\trec{\tvar. \type}}}
  {\typed{\tenv}{\vctx}{\<unfold>\ \expr}{\subst{\type}{\tvar}{\trec{\tvar. \type}}}}
\and
\inferH{alloc-typed}
  {\typed{\tenv}{\vctx}{\expr}{\type}}
  {\typed{\tenv}{\vctx}{\Ref(\expr)}{\tref\ \type}}
\and
\inferH{load-typed}
  {\typed{\tenv}{\vctx}{\expr}{\tref\ \type}}
  {\typed{\tenv}{\vctx}{\deref\expr}{\type}}
\and
\inferH{store-typed}
  {\typed{\tenv}{\vctx}{\expr_1}{\tref\ \type}
   \and \typed{\tenv}{\vctx}{\expr_2}{\type}}
  {\typed{\tenv}{\vctx}{\expr_1 \gets \expr_2}{\tunit}}
\and
\inferH{cas-typed}
  {\typed{\tenv}{\vctx}{\expr_1}{\tref\ \type}
   \quad \typed{\tenv}{\vctx}{\expr_2}{\type}
   \quad \typed{\tenv}{\vctx}{\expr_3}{\type}
   \quad \EqType(\type)}
  {\typed{\tenv}{\vctx}{\CAS{\expr_1}{\expr_2}{\expr_3}}{\tbool}}
\and
\inferH{fork-typed}
  {\typed{\tenv}{\vctx}{\expr}{\tunit}}
  {\typed{\tenv}{\vctx}{\Fork{\expr}}{\tunit}}
\end{mathpar}

\medskip
\textbf{Selected rules of pure reduction}
  $\pureexec {\expr_1} {\expr_2}$
  \textbf{and thread-local call-by-value head-reduction}
  $(\expr,\stateS) \hstep (\expr',\stateS')$:
\medskip
\begin{mathpar}
\axiomH{proj}
  {\pureexec {\proj_i\ (\val_1, \val_2)} {\val_i}}
\and
\axiomH{beta}
  {\pureexec {(\Rec f x = e)\ \val}
             {\subst{\subst{e}{x}{v}}{f}{\Rec f x = e}}}
\and
\axiomH{tbeta}
  {\pureexec {\tapp{(\tlam{\expr})}} \expr}
\and
\axiomH{unpack}
  {\pureexec {\unpack{(\<pack>\ \val)}{\Ret \var. \expr}}{\subst{\expr}{\var}{\val}}}
\and
\axiomH{unfold}
  {\pureexec {\<unfold>\ (\<fold>\ \val)} \val}
\and
\inferH{pure}
  {\pureexec {\expr_1} {\expr_2}}
  {(\expr_1, \stateS) \hstep (\expr_2, \stateS)}
\and
\inferH{alloc}
  {\stateS(\loc) = \none}
  {(\Ref(\val), \stateS) \hstep (\loc, \mapinsert{\loc}{\val}{\stateS})}
\and
\inferH{deref}
  {\stateS(\loc) = \val}
  {(\deref \loc, \stateS) \hstep (\val, \stateS)}
\and
\inferH{store}
  {\stateS(\loc) = \val}
  {(\loc \gets \val', \stateS) \hstep
  (\unittt, \mapinsert{\loc}{\val'}{\stateS})}
\and
\inferH{cas-fail}
  {\stateS(\loc) \neq \val_1}
  {(\CAS{\loc}{\val_1}{\val_2}, \stateS) \hstep (\<false>, \stateS)}
\and
\inferH{cas-suc}
  {\stateS(\loc) = \val_1}
  {(\CAS{\loc}{\val_1}{\val_2}, \stateS) \hstep (\<true>, \mapinsert{\loc}{\val_2}{\stateS})}
\end{mathpar}

\medskip
\textbf{Thread-pool reduction}
\((\vec\expr,\stateS) \tpstep (\vec{\expr'},\stateS')\):
\medskip
\begin{mathpar}
\infer
  {(\expr,\stateS) \hstep (\expr',\stateS')}
  {(\vec{\expr_1}\ \fillctx \lctx[\expr]\ \vec{\expr_2}, \stateS)
    \tpstep (\vec{\expr_1}\ \fillctx \lctx[\expr']\ \vec{\expr_2}, \stateS')}
\and
\infer
  {}
  {(\vec{\expr_1}\ \fillctx\lctx[\Fork{\expr}]\ \vec{\expr_2}, \stateS)
     \tpstep
   (\vec{\expr_1}\ \fillctx\lctx[\unittt]\ \vec{\expr_2}\ \expr, \stateS)}
\end{mathpar}
\caption{The type system and operational semantics of \HeapLang.}
\label{reloc:fig:heaplang}
\end{figure}

\paragraph{Syntactic sugar.}
We use syntactic sugar to define non-recursive functions, let-bindings, and sequential composition.
We let $(\Lam \var. \expr) \eqdef (\Rec \_ \var = \expr)$ and
$(\Let \var = \expr_1 in \expr_2) \eqdef ((\Lam \var. \expr_2)\ \expr_1)$ and
$(\expr_1 ; \expr_2) \eqdef (\Let \_ = \expr_1 in \expr_2)$.
The underscore $\_$ denotes an anonymous binder, \ie a fresh variable that is unused in the body of the binding expression.

\paragraph{Type system.}
Typing judgments take the form \(\typed \tenv \vctx \expr \type\),
where \(\vctx\) is a context assigning types to program variables, and $\tenv$ is a context of type variables.
The inference rules for the typing judgments are standard; a selection of representative rules is given in \Cref{reloc:fig:heaplang}.
The typing rule for the compare-and-set ($\<CAS>$) operation has a side-condition $\EqType(\type)$, which ensures that a compare-and-set can only be performed on word-sized data types, \ie the unit, Boolean, integer, and reference type.

\paragraph{Operational semantics.}
The operational semantics involves three reduction relations:
pure head reduction $\ra_{\mathsf{pure}}$,
thread-local head reduction \(\hstep\), and thread-pool reduction \(\tpstep\), see
\Cref{reloc:fig:heaplang} for the rules.
Head reduction \(\hstep\) is lifted to thread-pool reduction \(\tpstep\) using standard \emph{call-by-value evaluation contexts} (in the style of Felleisen and Hieb \cite{felleisen:hieb:1992}):
\[
\lctx \in \Ectx \bnfdef{}
  \hole \ALT
  \expr_1(\lctx) \ALT
  \lctx(\val_2) \ALT
  \expr_1 \gets \lctx \ALT
  \lctx \gets \val_2 \ALT \dots
\]
Thread-pool reduction \(\tpstep\) is defined on configurations \(\cfgg = (\vec\expr,\stateS)\)
consisting of a state \(\stateS\) (a finite partial map from locations to
values) and a thread-pool \(\vec\expr\) (a list of expressions corresponding to the
threads)
by interleaving, \ie by picking a thread and executing it, thread-locally, for one step.
The only special case is \(\Fork{\expr}\),
which spawns a thread \(\expr\), and reduces itself to the unit value \(\unittt\).

\paragraph{Contextual refinement.}
The notion of contextual refinement that we use is standard (see, \eg \cite{pitts:2005} or \cite[Chapters 46 \& 47]{harper:pfpl}).
It formalizes the situation when the set of observations that can be made about the first program is a subset of observations that can be made about the second program.
An observation about a program are made using a \emph{program context} $\pctx$, which is a program with a hole:
\begin{align*}
\pctx \in \Pctx \bnfdef{}&
	\Box \ALT
	\Rec f \var = \pctx \ALT
	\pctx (\expr_2) \ALT
	\expr_1 (\pctx) \ALT
	\tlam\pctx \ALT
	\tapp{\pctx} \ALT
  \dots
\end{align*}
Since we are in a typed setting, we consider only \emph{typed contexts}.
A program context is well-typed, denoted as $\typedctx{\pctx}{\tenv}{\vctx}{\type}{\tenv'}{\vctx'}{\type'}$, if
for any term $\typed{\tenv}{\vctx}{\term}{\type}$ we have $\typed{\tenv'}{\vctx'}{\fillctx\pctx[\term]}{\type'}$.
The typing relation on contexts is standard, and can be derived from the typing rules in \Cref{reloc:fig:heaplang}.

We then define contextual refinement as follows.
An expression $\expr_1$ \emph{contextually refines} an expression $\expr_2$ at type $\type$, denoted as $\ctxref{\tenv \mid \vctx}{\expr_1}{\expr_2}{\type}$, if no well-typed \emph{program context} $\pctx$ resulting in a closed program can distinguish the two:
\begin{align*}
\ctxref{\tenv \mid \vctx}{\expr_1}{\expr_2}{\type} \eqdef{}&
  \All \type'\,
  (\typedctx{\pctx}{\tenv}{\vctx}{\type}{\emptyset}{\emptyset}{\type'})\;
    \val\; \vec{\expr_f}\; \stateS. \\& \qquad\qquad
  (\fillctx\pctx[\expr_1], \emptyset) \tpstep^{\ast} (\conS{\val}{\efs}, \stateS) \implies \\ & \qquad\qquad
    \Exists \val'\; {\vec{\expr'_{f}}}\; \stateS'. (\fillctx\pctx[\expr_2], \emptyset) \tpstep^{\ast} (\conS{\val'}{\vec{\expr'_{f}}}, \stateS').
\end{align*}
\emph{Contextual equivalence} $\ctxequiv{\tenv \mid \vctx}{\expr_1}{\expr_2}{\type}$ is defined as the symmetric closure of contextual refinement, \ie
$(\ctxref{\tenv \mid \vctx}{\expr_1}{\expr_2}{\type}) \wedge (\ctxref{\tenv \mid \vctx}{\expr_2}{\expr_1}{\type})$.

Note that contextual refinement only takes termination into account, and does not require the resulting values $\val$ and $\val'$ to be equal.
Demanding the equality on the resulting values would make contextual refinement too strong.
For example, the terms $(\Lam x. x + 1)$ and $(\Lam x. 1 + x)$ of function type would not be deemed contextually equivalent, because they terminate to syntactically different values in the empty program context.

There are, however, equivalent formulations of contextual refinement which equate the resulting values $\val$ and $\val'$.
In order to do that, it is necessary to restrict the typed context $\typedctx{\pctx}{\tenv}{\vctx}{\type}{\emptyset}{\emptyset}{\type'}$ to those for which $\type'$ is a directly observable type, like Booleans or integers.
For example, we could have used the following equivalent\footnote{Proving that this definition is equivalent to the one presented earlier is not a very complicated, albeit laborious, task. See the Coq mechanization for the formal proof.} definition (a variation of $\<true>$-adequate contextual equivalence from \cite[Exercise 7.5.10]{pitts:2005}):
\begin{align*}
&  \All (\typedctx{\pctx}{\tenv}{\vctx}{\type}{\emptyset}{\emptyset}{\tbool})\;
    \vec{\expr_f}\; \stateS. \\& \qquad\qquad
  (\fillctx\pctx[\expr_1], \emptyset) \tpstep^{\ast} (\conS{\<true>}{\efs}, \stateS) \implies
    \Exists {\vec{\expr'_{f}}}\; \stateS'. (\fillctx\pctx[\expr_2], \emptyset) \tpstep^{\ast} (\conS{\<true>}{{\vec{\expr'_{f}}}}, \stateS').
\end{align*}



\section{A tour of \reloc}
\label{reloc:sec:tour}
This section gives a tour of \reloc by demonstrating its key logical connectives and proof rules.
We first describe \reloc's grammar, soundness statement, and rule format (\Cref{reloc:subsec:grammar}).
After that, we put \reloc to action by proving contextual refinements of two program modules.
The first is a bit module, which demonstrates \reloc's type-directed structural rules and symbolic execution rules for reasoning about pure programs (\Cref{reloc:subsec:representation_independence}).
The second is the concurrent counter module from \Cref{reloc:sec:intro}, which involves reasoning about internal state and concurrency.
Specifically we demonstrate how \reloc is used to reason about stateful programs
using symbolic execution (\Cref{reloc:sec:counter:symbolic_execution}),
concurrency using invariants (\Cref{reloc:sec:tour:invariants}),
and recursive functions and loops using L\"ob induction (\Cref{reloc:sec:counter:loeb_induction}).

\subsection{Grammar and soundness}
\label{reloc:subsec:grammar}
\reloc is based on higher-order intuitionistic separation logic, and the grammar of its propositions is:
\begin{align*}
\propMV, \propMVB \in \Prop \bnfdef{}&
	\TRUE \ALT \FALSE \ALT
	\All \var.\propMV \ALT
	\Exists \var.\propMV \ALT
        \propMV \wedge \propMVB \ALT
        \propMV \vee \propMVB \ALT
        \propMV \implies \propMVB \\ & \ALT
	\propMV * \propMVB \ALT
	\propMV \wand \propMVB \ALT
	\loc \mapstoI \val \ALT
	\loc \mapstoS \val \ALT
	(\logrel[\mask]{\Delta}{\expr_1}{\expr_2}{\type}) \\ & \ALT
        \logrelV{\type}{\Delta}{\val_1}{\val_2} \ALT
	\knowInv\namesp\propMV \ALT
	\later \propMV \ALT
	\always \propMV \ALT
	\pvs[\mask_1][\mask_2] \propMV \ALT \dots
\end{align*}

\reloc is an extension of Iris and therefore includes all connectives of Iris, in particular, the \emph{later} modality $\later$, \emph{persistence} modality $\always$, \emph{update} modality $\pvs[\mask_1][\mask_2]$, and \emph{invariant assertion} $\knowInv \namesp \propMV$.
We introduce these connectives in passing throughout this section.
Some of these connectives are annotated by \emph{invariant masks} $\mask \subseteq \InvName$ and \emph{invariant names} $\namesp \in \InvName$, which are needed for bookkeeping related to Iris's invariant mechanism.
Until we introduce invariants in \Cref{reloc:sec:tour:invariants}, we will omit these annotations.
Similarly, we will ignore the later modality $\later$ until we explain it in \Cref{reloc:sec:counter:loeb_induction}.

An essential difference to vanilla Iris is that \reloc has internal (or first-class) \emph{refinement judgments} $\logrel{\Delta}{\expr_1}{\expr_2}{\type}$, which should be read as ``the expression $\expr_1$ refines the expression $\expr_2$ at type $\type$''.
Just like contextual refinement, the refinement judgment in \reloc is indexed by a type $\type$.
The judgment contains an environment $\Delta$ which assigns \emph{interpretations} to type variables.
These interpretations are given by an Iris relation of type %
$\Val \times \Val \to \Prop$.
One such kind of relation, the \emph{value interpretation} relation $\logrelV{\type}{\Delta}{-}{-} : \Val \times \Val \to \Prop$ (for each syntactic type $\type$ of \HeapLang) will be discussed in \Cref{reloc:sec:rules}.
We elide the contexts $\Delta$ in refinement judgments whenever they are empty.

The intuitive meaning of $\logrel{\Delta}{\expr_1}{\expr_2}{\type}$ is that $\expr_1$ is safe, and all of its behaviors can be simulated by $\expr_2$.
It is a simulation in the sense that any execution step of $\expr_1$ can be matched by a (possibly empty) sequence of execution steps of $\expr_2$.
Borrowing the terminology from languages with non-determinism, we think of $\expr_1$ as being \emph{demonic} and $\expr_2$ as being \emph{angelic}.
That is, the non-deterministic choices of $\expr_1$ (\eg scheduling of forked-off threads) are selected by an external demon; whereas for the non-deterministic choices of $\expr_2$, an angle blesses the person proving the refinement with an ability to select a choice themselves.

Since we often use refinement judgments to specify programs, we refer to the left-hand side $\expr_1$ as the \emph{implementation}, and to right-hand side $\expr_2$ as the \emph{specification}.
The intuitive meaning is formally reflected by the soundness theorem \wrt contextual refinement.
\begin{thm}[Soundness]
\label[thm]{reloc:thm:soundness}
If the refinement judgment $\logrel{\emptyset}{\expr_1}{\expr_2}{\type}$ is derivable in \reloc, then $\ctxref{\emptyset \mid \emptyset}{\expr_1}{\expr_2}{\type}$.
\end{thm}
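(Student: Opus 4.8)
The plan is to follow the standard recipe for proving soundness of a logical-relations model, adapted to the fact that here the refinement judgment lives inside Iris rather than in the meta-logic. The argument factors into three parts: (i) a \emph{fundamental theorem} stating that the logical refinement is reflexive on well-typed terms, which rests on a family of \emph{compatibility lemmas}; (ii) a \emph{congruence} result, showing that logical refinement is preserved under plugging into a well-typed context; and (iii) an \emph{adequacy} result, connecting the internal refinement judgment at the top level to the operational statement of contextual refinement.

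First I would establish the compatibility lemmas --- one for each typing rule of \Cref{reloc:fig:heaplang}. Each lemma says that if the immediate subterms on the two sides are logically related at the appropriate types, then the compound terms are related; these are exactly the type-directed structural rules that \reloc exposes (\Cref{reloc:sec:rules}), now read as lemmas about the open-term refinement (\Cref{reloc:def:open_term_refinement}). From these, the fundamental theorem --- $\typed{\tenv}{\vctx}{\expr}{\type}$ implies the open logical refinement of $\expr$ with itself at $\type$ --- follows by a routine induction on the typing derivation, applying the matching compatibility lemma at each node. Next I would prove congruence: if $\logrel{\emptyset}{\expr_1}{\expr_2}{\type}$ holds and $\typedctx{\pctx}{\emptyset}{\emptyset}{\type}{\emptyset}{\emptyset}{\type'}$, then $\logrel{\emptyset}{\fillctx\pctx[\expr_1]}{\fillctx\pctx[\expr_2]}{\type'}$. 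This goes by induction on the derivation that $\pctx$ is well-typed: at each context constructor I apply the corresponding compatibility lemma, and for the off-hole subterms of $\pctx$ (ordinary well-typed expressions, possibly open under the context's binders) I invoke the fundamental theorem to relate them to themselves; the base case, where $\pctx$ is the hole, is just the hypothesis.

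The last and hardest step is adequacy: I must turn the internal judgment $\logrel{\emptyset}{\fillctx\pctx[\expr_1]}{\fillctx\pctx[\expr_2]}{\type'}$ into the purely operational claim that whenever $(\fillctx\pctx[\expr_1], \emptyset) \tpstep^{\ast} (\conS{\val}{\efs}, \stateS)$ there is a matching reduction $(\fillctx\pctx[\expr_2], \emptyset) \tpstep^{\ast} (\conS{\val'}{\vec{\expr'_f}}, \stateS')$. Here the model of the refinement judgment (\Cref{reloc:sec:model}) matters: unfolding it yields an Iris weakest-precondition for the implementation $\fillctx\pctx[\expr_1]$ whose postcondition asserts, via the specification-side resources ($\mapstoS$ together with a ghost copy of the specification thread-pool), that the specification has been driven to a corresponding configuration. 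I would initialize this spec-side ghost state with a ghost thread-pool holding the single thread $\fillctx\pctx[\expr_2]$ and an empty heap, and then appeal to Iris's adequacy theorem to extract, from the derivation in the logic, the meta-level facts that the implementation is safe and that on termination the ghost spec thread-pool has actually reached a value-headed configuration reachable by $\tpstep^{\ast}$.

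The main obstacle is precisely this interface between the logic and the operational semantics: one has to set up the specification-side resource algebra and invariant so that owning $\loc \mapstoS \val$ (and the thread-pool authority) faithfully reflects real $\tpstep$-steps of the specification, and then discharge the side conditions of Iris adequacy. Note that, because contextual refinement demands only \emph{termination} of the specification and not equality of result values, the value interpretation $\logrelV{\type'}{\emptyset}{\val}{\val'}$ obtained at the top level is not needed for the final extraction --- its mere existence already guarantees that the specification reduced to a value, which is all the conclusion requires.
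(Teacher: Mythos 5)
Your proposal is correct and follows essentially the same route as the paper: the paper proves soundness by combining a precongruence lemma (\Cref{reloc:lem:refines_precongruence}, proved by induction on the typed context using the type-directed structural rules and the fundamental theorem, both stated for the open-term judgment) with an adequacy theorem (\Cref{reloc:lem:refines_adequate}) that allocates the ghost thread-pool invariant $\specctx$ together with $0 \tpto \expr_2$, unfolds the judgment to a weakest precondition, and extracts the specification-side reduction via Iris's adequacy. Your closing observation that the value interpretation is not needed at the top level --- only the existence of a value-headed specification configuration --- also matches the paper's argument.
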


In this section we only consider closed programs $\expr_1$ and $\expr_2$; we will see how \reloc (and its soundness theorem) generalize to open terms in \Cref{reloc:sec:fundamental}.

Like ordinary separation logic, \reloc has \emph{heap assertions}.
Since \reloc is relational, these come in two forms: $\loc \mapstoI \val$ and \mbox{$\loc \mapstoS \val$}, which signify ownership of a location $\loc$ with value $\val$ on the implementation and specification side, respectively.

Contrary to earlier work on logical refinements in Iris, \eg~\cite{irisIPM,amin:thesis}, refinement judgments $\logrel{\Delta}{\expr_1}{\expr_2}{\type}$ in \reloc are first-class propositions.
As such, we can combine them in arbitrary ways with the other logical connectives, and state conditional refinements.
For example, the proposition
\begin{equation}
\label{reloc:eq:cond_refinement}
(\loc_1 \mapstoI \val_1 * \loc_2 \mapstoS \val_2 *
\logrel{\Delta}{\expr_1'}{\expr_2'}{\typeB})
\wand{}
\logrel{\Delta}{\expr_1}{\expr_2}{\type},
\end{equation}
states that the $\expr_1$ refines $\expr_2$, under the assumption of another refinement and that certain locations have specified values in the heap.
Having conditional refinements is crucial for modularity, as it allows us to formulate and prove refinements of individual methods of a data structure under the assumptions provided by the internal invariant of the data structure.
The fact that refinement judgments are first class also plays an important role in the presentation of \reloc's proof rules.

\subsection{Derivability and inference rules}
As standard in logic, Iris/\reloc has a derivability relation $P \proves Q$.
We say that $Q$ is derivable if $\TRUE \proves Q$.
In many situations, we use magic wand $\wand$ instead of the derivability relation $\proves$, because we have the standard deduction property:
\[
  P \proves Q \wand R \qquad\textnormal{iff}\qquad P \ast Q \proves R
\]
Most of the inference rules we present can be internalized as \reloc propositions by a magic wand or a derivability relation between the separating conjunction of the antecedents and the consequent.
We thus use the following notations:
\[
\begin{array}{c @{\quad \textnormal{is notation for}\quad} l}
\infer{P_1 \and \dotsb \and P_n}{Q} & (P_1 \ast \dotsb \ast P_n) \wand Q, \\[1.5em]
\inferB{P}{Q} & (P \wand Q) \wedge (Q \wand P).
\end{array}
\]
For instance, the conditional refinement in Formula \eqref{reloc:eq:cond_refinement} is presented as the following inference rule:
\[
  \infer
  {\loc_1 \mapstoI \val_1 \and \loc_2 \mapstoS \val_2 \and
\logrel{\Delta}{\expr_1'}{\expr_2'}{\typeB}}
  {\logrel{\Delta}{\expr_1}{\expr_2}{\type}}
\]
In rules like this, it is useful to think of premises $\loc_1 \mapstoI \val_1$ and $\loc_2 \mapstoS \val_2$ as side conditions, and of the premise $\logrel{\Delta}{\expr_1'}{\expr_2'}{\typeB}$ as the new goal that you get when you apply the rule.
This \emph{backwards-style} reasoning integrates well in the Coq proof assistant; we discuss it more in detail in \Cref{{reloc:sec:formalization}}.

We use the derivability relation $\proves$ explicitly to state rules that cannot be internalized, \eg
$\infer{\vdash P}{\vdash Q}$ states that if $P$ is derivable, then $Q$ is derivable.
This is weaker than $\infer{P}{Q}$, which denotes that $Q$ can be derived from $P$, \ie $P \vdash Q$.

\subsection{Example: Contextual equivalance of a bit module}
\label{reloc:subsec:representation_independence}
\begin{figure}
\raggedright
\textbf{Value interpretation rules}:
\medskip
\begin{mathpar}
\inferhrefB{val-var}{val-var'}
  {\Delta(\tvar)(\val_1, \val_2)}{\logrelV{\tvar}{\Delta}{\val_1}{\val_2}}
\and
\inferhrefB{val-unit}{val-unit'}
  {\val_1 = \val_2 = \unittt}
  {\logrelV{\tunit}{\Delta}{\val_1}{\val_2}}
\and
\inferhrefB{val-bool}{val-bool'}
  {\Exists b \in \mathbb{B}. \val_1 = \val_2 = b}
  {\logrelV{\tbool}{\Delta}{\val_1}{\val_2}}
\and
\inferhrefB{val-int}{val-int'}
  {\Exists n \in \mathbb{Z}. \val_1 = \val_2 = n}
  {\logrelV{\tint}{\Delta}{\val_1}{\val_2}}
\end{mathpar}

\smallskip
\textbf{Type-directed structural rules}:
\medskip
\begin{mathpar}
\inferhref{rel-return}{rel-return'}
  {\logrelV{\type}{\Delta}{\val_1}{\val_2}}
  {\logrel{\Delta}{\val_1}{\val_2}{\type}}
\and\!\!
\inferH{rel-pair}
  {\logrel{\Delta}{\expr_1}{\expr_2}{\type}
   \and \logrel{\Delta}{\expr'_1}{\expr'_2}{\typeB}}
  {\logrel{\Delta}{(\expr_1, \expr'_1)}{(\expr_2,\expr'_2)}{\type \times \typeB}}
\and\!\!
\inferH{rel-pack}
{\All \val_1,\val_2. \persistent{R(\val_1,\val_2)} \and
\logrel{\deltainsert{\tvar}{\rel}{\Delta}}{\expr_1}{\expr_2}{\type}}
 {\logrel{\Delta}{\<pack>\ \expr_1}{\<pack>\ \expr_2}{\texists\tvar.\type}}
\and
\inferH{rel-rec}
  {\always \big(\All \val_1, \val_2. \logrelV{\type}{\Delta}{\val_1}{\val_2} \wand
    \left(\logrel{\Delta}{(\Rec {f_1} {x_1} = e_1)\ \val_1}{(\Rec {f_2} {x_2} = e_2)\ \val_2}{\typeB}\right)\big)}
  {\logrel{\Delta}{(\Rec {f_1} {x_1} = e_1)}{(\Rec {f_2} {x_2} = e_2)}{\type \to \typeB}}
\end{mathpar}

\smallskip
\textbf{Symbolic execution rules}:
\medskip
\begin{mathpar}
\inferH{rel-pure-l}
  {\pureexec{\expr_1}{\expr_1'} \and \later (\logrel{\Delta}{\fillctx\lctx[\expr_1']}{\expr_2}{\type})}
  {\logrel{\Delta}{\fillctx\lctx [\expr_1]}{\expr_2}{\type}}
\and
\inferH{rel-pure-r}
  {\pureexec{\expr_2}{\expr_2'} \and \logrel[\mask]{\Delta}{\expr_1}{\fillctx\lctx [\expr_2']}{\type}}
  {\logrel[\mask]{\Delta}{\expr_1}{\fillctx\lctx [\expr_2]}{\type}}
\and
\inferH{rel-alloc-l'}
  {\All \loc. \loc \mapstoI \val \wand
    \logrel{\Delta}{\fillctx\lctx [\loc]}{\expr_2}{\type}}
  {\logrel{\Delta}{\fillctx\lctx [\Ref(\val)]}{\expr_2}{\type}}
\and
\inferH{rel-alloc-r}
  {\All \loc. \loc \mapstoS \val \wand
    \logrel[\mask]{\Delta}{\expr_1}{\fillctx\lctx [\loc]}{\type}}
  {\logrel[\mask]{\Delta}{\expr_1}{\fillctx\lctx [\Ref(\val)]}{\type}}
\and
\inferH{rel-load-l-inv}
	{\knowInv{\namesp}{\propMV}
	 \and
	 \big(\later\propMV * \CLOSE{\namesp}{\propMV} \big) \wand
	 \Exists \val.\loc \mapstoI \val \mathrel{*}
	 \later \left(\loc \mapstoI \val \wand \logrel[\top \setminus \namesp]{\Delta}{\fillctx\lctx[\val]}{\expr_2}{\type}\right)
	}
	{\logrel{\Delta}{\fillctx\lctx[\deref \loc]}{\expr_2}{\type}}
\and
\inferH{rel-load-r}
  {\loc \mapstoS \val
    \and
    \loc \mapstoS \val \wand
    \logrel[\mask]{\Delta}{\expr_1}{\fillctx\lctx [\val]}{\type}}
  {\logrel[\mask]{\Delta}{\expr_1}{\fillctx\lctx [\deref \loc]}{\type}}
\and
\inferH{rel-store-r}
  {\loc \mapstoS -
    \and
    \loc \mapstoS \val \wand
    \logrel[\mask]{\Delta}{\expr_1}{\fillctx\lctx [\unittt]}{\type}}
  {\logrel[\mask]{\Delta}{\expr_1}{\fillctx\lctx [\loc \gets \val]}{\type}}
\and
\inferH{rel-cas-l-inv}
	{\knowInv{\namesp}{\propMV}
	\and {
		\begin{array}{@{} l @{}}
		\later\propMV * \CLOSE{\namesp}{\propMV} \wand \\
		\qquad \Exists \val. \loc \mapstoI \val \mathrel{*} 
		\later \left( \begin{array}{@{} l @{} l @{}}
		\big(\val = \val_1 * \loc \mapstoI \val_2 &{}\wand
		  \logrel[\top \setminus \namesp]{\Delta}{\fillctx\lctx[\<true>]}{\expr_2}{\type} \big) \mathrel{\land} \\
		\big(\val \neq \val_1 * \loc \mapstoI \val &{}\wand
		  \logrel[\top \setminus \namesp]{\Delta}{\fillctx\lctx[\<false>]}{\expr_2}{\type} \big)
		\end{array} \right)
		\end{array}
	}}
	{\logrel{\Delta}{\fillctx\lctx[\CAS{\loc}{\val_1}{\val_2}]}{\expr_2}{\type}}
\end{mathpar}

\smallskip
\textbf{Invariants rules}:
\medskip
\begin{mathpar}
\inferhref{rel-inv-alloc}{rel-inv-alloc'}
  {\later\propMV \and
  \knowInv{\namesp}{\propMV} \wand \logrel{\Delta}{\expr_1}{\expr_2}{\type}}
  {\logrel{\Delta}{\expr_1}{\expr_2}{\type}}
\and
\inferH{rel-inv-restore}
	{\CLOSE{\namesp}{\propMV} \and \later\propMV \and
	\logrel[\mask]{\Delta}{\expr_1}{\expr_2}{\type}}
	{\logrel[\mask\setminus \namesp]{\Delta}{\expr_1}{\expr_2}{\type}}
\end{mathpar}
\caption{Selected rules of \reloc.}
\label{reloc:fig:rules_tour}
\end{figure}

We demonstrate the basic usage of \reloc by using its \emph{type-directed structural} and \emph{symbolic execution} rules to prove contextual equivalence of two implementations of a simple program module (representation independence).
The module we consider represents a single bit data structure---it contains an initial value for the bit, an operation for flipping the bit, and an operation for converting the values of the abstract type to Booleans.
We use an existential type (\ie abstract type) to hide the representation type and thus the type of the module:
\[
  \tbit \eqdef \texists{\tvar. \tvar \times (\tvar \to \tvar) \times (\tvar \to \tbool)}.
\]

Perhaps the simplest implementation of the bit interface is the one that uses Booleans for the internal state:
\[
  \bitbool : \tbit \eqdef \<pack>(\<true>, (\Lam b. \neg b), (\Lam b. b)).
\]
The second implementation models a bit by a number from the set $\{0, 1\}$:
\begin{align*}
  \flipnat : \tint \to \tint \eqdef{}& \Lam n. \If (n = 0) then 1 \Else 0 \\
  \bitnat : \tbit \eqdef{}& \<pack>(1, \flipnat, (\Lam n. n = 1)).
\end{align*}

Before we explain how the contextual equivalence of these two implementation is formally proved in \reloc, let us informally discuss why these implementations are equivalent.
Note that the underlying types ($\tint$ and $\tbool$) are not isomorphic.
This, however, is not going to be a problem, because the underlying types are hidden/existentially abstracted in the module signature.
As a consequence of that, a (well-typed) client has to be polymorphic in the type $\tvar$, and can thus only create and modify values of $\tvar$ through the functions provided by the module.
A client that uses the $\bitnat$ module can only construct integers $0$ and $1$ (using the initial value and applying the flip function a number of times).
Thus, requiring an isomorphism between the underlying types is too strict---for example, we do not care what Boolean value an integer $7$ might correspond to, because the number $7$ can never be constructed using the functions provided by $\bitnat$.

This intuitive reasoning signals the key idea behind the \emph{representation independence} principle \cite{mitchell:1986}, which states that in order to prove that two modules are equivalent, it suffices to pick a \emph{relation} between the underlying types and demonstrate that all the methods preserve this relation.
For this example, a sensible candidate for such a relation is
$
  \set{ (\<true>, 1), (\<false>, 0) }
$.
Note that our relation does not include any integers other than $0$ or~$1$, because as we previously explained, a well-typed client of $\bitnat$ cannot construct other integers.
With the relation at hand, the informal proof is as follows.
The initial values offered by the modules are related.
The flip function preserves this relation.
The function that converts ``bits'' to Booleans sends related values to the same Boolean.

We will now demonstrate how to carry out this argument formally in \reloc.
Specifically, we prove the following refinement using the rules in \Cref{reloc:fig:rules_tour}:
\[
  \logrel{}{\bitbool}{\bitnat}{\tbit}
\]
The other direction can be proved in a similar way, which using soundness (\Cref{reloc:thm:soundness}), gives us the contextual equivalence $\ctxequiv{}{\bitbool}{\bitnat}{\tbit}$.

From a high-level point of view, the proof of this example involves applying \reloc's type-directed structural rules following the structure of $\tbit$.
At the leaves of the proof, we continue with \reloc's symbolic execution rules to perform computation steps.

Since $\tbit$ is an existential type, and both $\bitbool$ and $\bitnat$ are $\<pack>$'s, we start off by applying the type-directed structural rule \ruleref{rel-pack}.
For that we need to pick a relation $R$, which will be the interpretation for the type variable $\tvar$, and should link together the underlying representations of bits in $\bitbool$ and $\bitnat$.
We define the relation $R$ as follows:
\[
  R(b, n) \eqdef (b = \<true> \wedge n = 1) \vee (b = \<false> \wedge n = 0).
\]
Starting with the initial goal $\logrel{}{\bitbool}{\bitnat}{\tbit}$, we apply \ruleref{rel-pack}.
As a side-condition, we have to prove that $R$ is \emph{persistent} for any $\val_1, \val_2$, written as $\persistent{R(\val_1,\val_2)}$, intuitively meaning that the proposition $R(\val_1, \val_2)$ does not assert ownership of any resources.
We discuss persistent propositions in more detail in \Cref{reloc:sec:tour:invariants,reloc:sec:persistence}, and for now we just note that the relation $R$ is indeed persistent.
After application of the \ruleref{rel-pack} rule the goal becomes:
\[
\logrel{\deltainsert{\tvar}{R}{}}
  {(\<true>, (\Lam b. \neg b), (\Lam b.b))}
  {(1, \flipnat, (\Lam n. n = 1))}
  {\tvar \times (\tvar \to \tvar) \times (\tvar \to \tbool)}.
\]
By repeatedly applying the type-directed structural rule \ruleref{rel-pair} we get three new goals:
\begin{enumerate}
\item $\logrel{\deltainsert{\tvar}{R}{}}{\<true>}{1}{\tvar}$;
\item $\logrel{\deltainsert{\tvar}{R}{}}{(\Lam b. \neg b)}{\flipnat}{\tvar \to \tvar}$;
\item $\logrel{\deltainsert{\tvar}{R}{}}{(\Lam b. b)}{(\Lam n. n = 1)}{\tvar \to \tbool}$.
\end{enumerate}
For the first goal, we can use the rules \ruleref{rel-return'} and \ruleref{val-var'}, leaving us with the obligation $R(\<true>, 1)$, which holds by the definition of $R$.

For the second and the third goal we need to prove refinements of two closures,
for which we use the type-directed structural rule \ruleref{rel-rec}.
Let us look at the third goal in detail.
After the application of \ruleref{rel-rec} we have to show:
\[
\always \left( \All \val_1, \val_2. \logrelV{\tvar}{{\deltainsert{\tvar}{R}{}}}{\val_1}{\val_2} \wand
\logrel{{\deltainsert{\tvar}{R}{}}}{(\Lam b. b)\ \val_1}{(\Lam n. n = 1)\ \val_2}{\tbool}
\right).
\]
The goal is wrapped in Iris's \emph{persistence modality} $\always$, which turns any proposition into a persistent one.
Once again, we postpone the details about the persistence modality until \Cref{reloc:sec:tour:invariants,reloc:sec:persistence}, and only remark that here we are allowed to prove the goal without the $\always$ modality.
Using this information, and the rule \ruleref{val-var'} we reduce our goal to show:
\[
R(\val_1, \val_2) \wand \logrel{{\deltainsert{\tvar}{R}{}}}{(\Lam b. b)\ \val_1}{(\Lam n. n = 1)\ \val_2}{\tbool},
\]
for arbitrary $\val_1, \val_2$.
We then unfold the definition of $R$ and observe that we need to distinguish two cases:
\begin{enumerate*}
\item $\val_1 = \<true>$ and $\val_2 = 1$;
\item $\val_1 = \<false>$ and $\val_2 = 0$.
\end{enumerate*}
Suppose we are in the first case (the second case is similar).
We have to show:
\[
\logrel{{\deltainsert{\tvar}{R}{}}}{(\Lam b. b)\ \<true>}{(\Lam n. n = 1)\ 1}{\tbool}.
\]
At this point we apply \reloc's \emph{symbolic execution} rules: we symbolically reduce both the left-hand and the right-hand side of the refinement.
For this we use the rules \ruleref{rel-pure-l} and \ruleref{rel-pure-r} (the later modalities ($\later$) in these rules can be ignored for now, they will be explained in \Cref{reloc:sec:counter:loeb_induction}).
These rules perform \emph{pure reductions}, \ie reductions that do not depend on the heaps.
In our case we have a $\beta$-reduction on the left-hand side, and a $\beta$-reduction and an evaluation of the binary operation (equality testing) on the right-hand side:
\[
  \pureexec{(\Lam b. b)\ \<true>}{\<true>} \qquad
  \pureexec{(\Lam n. n = 1)\ 1}{\pureexec{(1 = 1)}{\<true>}}.
\]
After the repeated application of the said rules we arrive at a goal
\[
  \logrel{{\deltainsert{\tvar}{R}{}}}{\<true>}{\<true>}{\tbool},
\]
which we discharge by \ruleref{rel-return'} and \ruleref{val-bool'}.
This completes the proof of the refinement.

\subsection{Example: Contextual refinement of a concurrent counter}
\label{reloc:sec:counterproof}
The previous example showcased how \reloc can be used to show contextual refinement and equivalence of pure program modules.
In this subsection we prove contextual refinement of the fine-grained concurrent counter in \Cref{reloc:fig:impl} from \Cref{reloc:sec:intro} by showing that it refines the coarse-grained counter.
Specifically, we prove the following refinement:
\[
\ctxref{}{\FGcounter}{\CGcounter}
	{(\tunit \to \tint) \times (\tunit \to \tint)}.
\]
Using soundness (\Cref{reloc:thm:soundness}), this contextual refinement can be reduced to proving the refinement judgment $\logrel{}{\FGcounter}{\CGcounter}
	{(\tunit \to \tint) \times (\tunit \to \tint)}$ in \reloc.

The previous example demonstrated the basic usage of symbolic execution rules of \reloc.
Those symbolic execution rules were confined to the pure fragment of the programming language.
In this example we show how to use \reloc's symbolic execution rules for stateful computations and concurrency primitives.
In addition to the type-directed structural rules and symbolic execution rules, the proof will require the usage of \emph{invariants} for linking together the values of the two counters.
We will use selected \reloc rules from \Cref{reloc:fig:rules_tour}.
To symbolically execute the operations on locks that appear in $\CGcounter$, we will also make use of the \emph{relational specification} for locks in \Cref{reloc:fig:locks}.
The lock specification 
is stated in terms of an abstract predicate $\isLock{\lkvar}{\<false>}$ (resp., $\isLock{\lkvar}{\<true>}$) stating that $\lkvar$ is a lock which is unlocked (resp., locked).
The relational specification for locks can then be seen as consisting of symbolic execution rules that manipulate that abstract predicate.\footnote{Because this specification is for the ``angelic'' right-hand side, it does not express mutual exclusion as it is common for separation logic specifications. We explain this by contrasting the specification with the one for the left-hand side in \Cref{{reloc:sec:simpl_specs_lhs}}.}
We will see in \Cref{reloc:subsec:specs_rhs} that these specifications can be proven for a simple spin lock.

\begin{figure*}
\begin{mathpar}
\inferH{newlock-r}
{\All \lkvar. \isLock{\lkvar}{\<false>} \wand \logrel[\mask]{\Delta}{\expr_1}{\fillctx\lctx[\lkvar]}{\type}}
{\logrel[\mask]{\Delta}{\expr_1}{\fillctx\lctx[\newlock\ \unittt]}{\type}}
\and
\inferH{acquire-r}
{\isLock{\lkvar}{\<false>} \and
\isLock{\lkvar}{\<true>} \wand
    \logrel[\mask]{\Delta}{\expr_1}{\fillctx\lctx[\unittt]}{\type}}
{\logrel[\mask]{\Delta}{\expr_1}{\fillctx\lctx[\acquire\ \lkvar]}{\type}}
\and
\inferH{release-r}
{\isLock{\lkvar}{b} \and
\isLock{\lkvar}{\<false>} \wand
   \logrel[\mask]{\Delta}{\expr_1}{\fillctx\lctx[\unittt]}{\type}}
{\logrel[\mask]{\Delta}{\expr_1}{\fillctx\lctx[\release\ \lkvar]}{\type}}
\end{mathpar}
\caption{Right-hand side relational specification for locks.}
\label{reloc:fig:locks}
\end{figure*}

\subsubsection{Symbolic execution}
\label{reloc:sec:counter:symbolic_execution}
Recall that performing symbolic execution means reducing the left-hand or right-hand side of the refinement according to the computational rules.
We have already seen the usage of \ruleref{rel-pure-l}, which allows us to perform pure computations.
For this example we also use stateful symbolic execution rules in \Cref{reloc:fig:rules_tour}.
To start with the refinement proof, we apply the stateful symbolic execution rule \ruleref{rel-alloc-l'} to the left-hand side to obtain:
\[
c_i \mapstoI 0 \wand
	\logrel{}{((\Lam \unittt. \counterread\ c_i),(\Lam \unittt. \FGincrement\ c_i))}{\CGcounter}{(\tunit \to \tint) \times (\tunit \to \tint)}.
\]
Note that after the application of the rule we gain access to the resource $c_i \mapstoI 0$ representing the value of the counter on the left-hand side.
Subsequently, using the symbolic execution rules \ruleref{rel-pure-r}, \ruleref{rel-alloc-r} and \ruleref{newlock-r} on the right-hand side the goal becomes:
\begin{multline*}
c_i \mapstoI 0 * c_s \mapstoS 0 * \isLock{\lkvar}{\<false>}
\wand \\
\logrel{}{((\Lam \unittt. \counterread\;c_i),(\Lam \unittt.\FGincrement\ c_i))}
  {(\Lam \unittt. \counterread\ c_s),(\Lam \unittt.\CGincrement\ c_s\ \lkvar))}
  {(\tunit \to \tint) \times (\tunit \to \tint)}.
\end{multline*}
In addition to gaining the resource $c_s \mapstoS 0$, representing the value of the right-hand side counter, we get access to the abstract predicate $\isLock{\lkvar}{\<false>}$, which keeps track of the state of the lock $\lkvar$ on the right-hand side.

\reloc's symbolic execution rules are inspired by the ``backwards''-style Hoare rules of~\cite{ishtiaq:ohearn:2001} and the weakest-precondition rules in Iris~\cite{iris3,irisJFP}.

\subsubsection{Invariants and persistent propositions}
\label{reloc:sec:tour:invariants}

At this point we wish to prove a refinement of two closures.
By the rule \ruleref{rel-pair} it would suffice to prove that both closures refine each other.
However, if we were to apply \ruleref{rel-pair}, we would be forced to split our resources in two: the resources needed for the refinement proof of the read function, and the resources needed for the refinement proof of the increment function.
But both of those operations require access to the counter locations $c_i \mapstoI -$ and $c_s \mapstoS -$.
To circumvent this issue we put said resources in a global \emph{invariant} $\knowInv{}{\propMV}$, which allows $\propMV$ to be shared between different parts of the program (and between different threads).
In our running example, we establish the invariant $\knowInv{\namesp}{\counterinv}$ (using \ruleref{rel-inv-alloc'}), where:
\[
\counterinv \eqdef \Exists n \in \nat. c_i \mapstoI n * c_s \mapstoS n * \isLock{\lkvar}{\<false>}.
\]
The invariant $\knowInv{\namesp}{\counterinv}$ not only allows us to share access to $c_i$ and $c_s$, but also ensures that the values of the respective counters match up.
For our invariant we pick any fresh invariant name $\namesp \in \InvName$ (more on the invariant names below).

Invariants $\knowInv {} \propMV$ are \emph{persistent}: once established, they will remain valid for the rest of the verification.
This differentiates them from \emph{ephemeral} propositions like $\loc \mapstoI \val$ and $\loc \mapstoS \val$, which could be invalidated in the future by actions of the program or proof.

The notion of being persistent is expressed in \reloc (and Iris) by means of the \emph{persistence} modality $\always$.
The purpose of $\always \propMV$ is to say that $\propMV$ holds without asserting any ephemeral propositions.
The most important rules for the $\always$ modality are $\always \propMV =  \always\propMV * \always\propMV$ and $\always\propMV \wand \propMV$, which allow to freely duplicate $\always \propMV$ and finally get $\propMV$ out.
We say that $\propMV$ is \emph{persistent}, written as $\persistent{\propMV}$, if $\propMV \proves \always \propMV$; otherwise, we say that $\propMV$ is \emph{ephemeral}.
To prove $\always \propMV$, one can only use persistent resources like $\knowInv{}{\propMV}$, and not ephemeral resources like $\loc \mapstoI \val$.
We refer to stripping off the persistence modality in the context of persistent hypotheses as \emph{introducing the $\always$ modality}.
We make that precise and give rules for the $\always$ modality in \Cref{reloc:sec:calculus}.

Once the invariant $\knowInv{}{\counterinv}$ for our running example has been established, we can duplicate it, and apply \ruleref{rel-pair} to obtain two goals:
\begin{align*}
  \knowInv{}{\counterinv} \wand{}&
  \logrel{}{(\Lam \unittt. \counterread\ c_i)}{(\Lam \unittt. \counterread\ c_s)}{\tunit \to \tint}
   \\
  \knowInv{}{\counterinv} \wand{}&
  \logrel{}{(\Lam \unittt.\FGincrement\ c_i)}{(\Lam \unittt.\CGincrement\ c_s\ \lkvar)}{\tunit \to \tint}.
\end{align*}
We first describe how to prove the refinement of $\counterread$.
As $\Lam x.e$ is syntactic sugar for $\Rec \_ x = e$, we can apply \ruleref{rel-rec} at the function type $\tunit \to \tint$ and obtain the new goal:
\[
  \knowInv{}{\counterinv} \wand
  \always \big(
    \All \val\, \val'. \logrelV{\tunit}{\Delta}{\val}{\val'} \wand
    \logrel{}{(\Lam \unittt. \deref c_i)\ \val}{(\Lam \unittt.\deref c_s)\ \val'}{\tint} \big).
\]
By \ruleref{val-unit'}, we obtain that $\logrelV{\tunit}{\Delta}{\val}{\val'}$ implies $\val = \val' = \unittt$.
Moreover, since $\knowInv{}{\counterinv}$ is our only hypothesis, and it is persistent, we can strip off the $\always$ modality
, arriving at the following goal:
\[
  \knowInv{}{\counterinv} \wand{} {\logrel{}{(\Lam \unittt. \deref c_i)\ \unittt}{(\Lam \unittt.\deref c_s)\ \unittt}{\tint}}.
\]

\paragraph{Accessing invariants.}
The fact that invariants are persistent (and thus can be duplicated, \ie $\knowInv {} \propMV \mathrel{=} \knowInv {} \propMV \mathrel{*} \knowInv {} \propMV$) comes with a cost---once a proposition $\propMV$ has been turned into an invariant $\knowInv {} \propMV$, one is only allowed to access $\propMV$ during a single \emph{atomic} execution step on the left-hand side.
This restriction is crucial as the scheduling of threads on the left-hand side is demonic.
When proving a refinement, we have to consider all possible interleavings of threads.
If we were to be able to access an invariant for the duration of multiple steps, another thread could be scheduled in between, and observe that the invariant was temporarily broken.

Scheduling on the right-hand side, however, is angelic.
That is, when proving a refinement, we have the ability to select the choice of scheduling.
As a consequence, \reloc allows us to execute multiple steps on the right-hand side while accessing an invariant.

Let us take a look at the way accessing invariants in \reloc works.
We do so by continuing the proof of our running example (after introducing $\always$ and performing pure symbolic execution steps):
\[
  \knowInv{}{\counterinv} \wand{}
  \logrel{}{\deref c_i}{\deref c_s}{\tint}.
\]
At this point we would like to access the locations $c_i$ and $c_s$ stored in the invariant $\knowInv{}{\counterinv}$.
For this we use the rule \ruleref{rel-load-l-inv} in \Cref{reloc:fig:rules_tour}.

This rule is quite a mouthful, so let us first take a look at its shape before going into detail about the mask annotations and later modalities $\later$.
The essence of \ruleref{rel-load-l-inv} is that it provides temporary access to the resources $\propMV$ guarded by the invariant.
In addition, it provides the \emph{invariant closing resource} $\CLOSE{\namesp}{\propMV}$, which can restore the invariant (using the rule \ruleref{rel-inv-restore}).
The resources $\propMV$ can be used to prove $\loc \mapstoI \val$, which is needed to justify the symbolic execution step on the left.
Afterwards, we are left with the goal $\logrel[\top \setminus \namesp]{\Delta}{\fillctx\lctx[\val]}{\expr_2}{\type}$.
We typically do not immediately restore the invariant (using \ruleref{rel-inv-restore}), but first use the resources $\propMV$ to perform matching symbolic execution steps on the right.

In our example, by applying \ruleref{rel-load-l-inv}, we obtain $c_i \mapstoI n$ and $c_s \mapstoS n$ and $\isLock{\lkvar}{\<false>}$, for some $n \in \nat$, reducing our goal to
$
\logrel[\top \setminus \counterN]{}{n}{\deref c_s}{\tint}
$.
We then use \ruleref{rel-load-r} to reduce our goal to
$
\logrel[\top \setminus \counterN]{}{n}{n}{\tint}.
$
Because these steps did not change the heap, \ruleref{rel-inv-restore}'s premises for closing the invariant are trivially met.
The refinement proof is then concluded by applying the structural rules \ruleref{rel-return'} and \ruleref{val-int'}.

Let us take a look at the rules \ruleref{rel-load-l-inv} and \ruleref{rel-inv-restore} in more detail.
A crucial aspect of these rules is that they ensure that access to the invariant $\knowInv {\namesp} \propMV$ is \emph{temporary}, \ie that $\propMV$ is only used during a single symbolic execution step on the left-hand side (but possibly several steps on the right), and that the same invariant cannot be opened twice.
This is achieved by tagging each invariant $\knowInv \namesp \propMV$ with a name $\namesp \in \InvName$, and by keeping track of which invariants have been accessed.
The latter is done in a way similar to Iris---like Iris's Hoare triples $\hoare \propMV \expr \propMVB[\mask]$, our refinement judgments $\logrel[\mask]{\Delta}{\expr_1}{\expr_2}{\type}$ are annotated with a \emph{mask} $\mask \subseteq \InvName$ of accessible invariants.
By default all invariants are accessible, so we write $\logrel{\Delta}{\expr_1}{\expr_2}{\type}$ for $\logrel[\top]{\Delta}{\expr_1}{\expr_2}{\type}$, where $\top$ is the set of all invariant names.

An invariant namespace is a (non-empty) list of strings or values: $\InvName = \List(\textdom{String} + \Val)$.
When opening an invariant and removing it from a mask, we coerce an invariant namespace $\namesp$ into a mask by taking its upwards extension $\namecl{\namesp} = \{ \namesp.x_1.\dots.x_n \mid n \in \mathbb{N},\, x_i \in \textdom{String} + \Val \}$.
Abusing the notation, we write $\mask \setminus \namesp$ for $\mask \setminus \namecl{\namesp}$.

When accessing an invariant, \eg using \ruleref{rel-load-l-inv} or \ruleref{rel-cas-l-inv}, its namespace is removed from the mask annotation of the judgment.
The removal of the namespace from the mask guarantees that invariants are only used for a single execution step on the left-hand side.
After all, all rules for symbolic execution on the left-hand side require a $\top$ mask, whereas those for the right-hand side allow for an arbitrary mask.
The only way of performing a subsequent step on the left-hand side is thus by first restoring the mask to $\top$, which can only be done by restoring the invariants that have been accessed (using the rule \ruleref{rel-inv-restore}).

One may wonder why refinement judgments are annotated with a mask instead of a Boolean that indicates if an invariant has been opened.
As we will show in \Cref{reloc:sec:rules}, \reloc allows one to access multiple invariants simultaneously.
To avoid \emph{reentrancy}---which means accessing the same invariant twice in a nested fashion---we need to know exactly which invariants are opened.

An additional aspect to note is that invariants $\knowInv \namesp \propMV$ in \reloc (and Iris) are \emph{impredicative}~\cite{svendsen:birkedal:2014,irisJFP}.
This means that $\propMV$ is allowed to contain other invariant assertions $\knowInv {\namesp'} \propMVB$ or even refinement judgments $\logrel{}{\expr}{\exprB}{\type}$.
As a consequence, to ensure soundness of the logic, all rules for invariants only provide access to $\later\propMV$, \ie $\propMV$ ``guarded'' by the \emph{later} modality $\later$.
When invariants are not used impredicatively (\ie invariants over so called \emph{timeless} propositions, which include connectives of first-order logic and heap assertions), these modalities can be soundly omitted.

\subsubsection{Later modality and L\"ob induction}
\label{reloc:sec:counter:loeb_induction}
The later modality $\later$ is not only used for resolving the impredicativity issues, but also for handling general recursion.
As is custom in logics based on step-indexing~\cite{appel:mcallester:2001}, such as Iris, the later modality $\later$ and L\"ob induction are used to reason about recursive functions.
Specifically, Iris provides the following rules for~$\later$:
\begin{mathpar}
\inferhref{$\later$-intro}{later-intro}
  {P}
  {\later P}
\and
\inferhref{$\later$-mono}{later-mono}
  {P \proves Q}
  {\later P \proves \later Q}
\and
\inferhref{L\"ob}{later-loeb}
  {\later P \proves P}
  {\proves P}
\end{mathpar}

In our example, this means that by L\"ob induction (rule \ruleref{later-loeb}), we may prove
$\logrel{} {\FGincrement\ c_i} {\CGincrement\ c_s\ \lkvar}{\tint}$,
under the assumption of the induction hypothesis
$\later (\logrel{} {\FGincrement\ c_i} {\CGincrement\ c_s\ \lkvar}{\tint})$.
The induction hypothesis is `guarded' by a $\later$, and can only be used after we have performed a step of symbolic execution on the left-hand side.
That is why the symbolic execution rules for the left-hand side contain the later modality in the premises.
Let us see how it works in the example.
We use \ruleref{rel-pure-l} to arrive at:
\begin{align*}
&\later (\logrel{} {\FGincrement\ c_i} {\CGincrement\ c_s\ \lkvar}{\tint}) \wand \\
&\later (\logrel{}%
{\Let c = \deref c_i in \If \CAS{c_i}{c}{1 + c} then c \Else \FGincrement\,c_i}%
{\CGincrement\ c_s\ \lkvar}{\tint}).
\end{align*}
By monotonicity (rule \ruleref{later-mono}), we can now remove $\later$ both from the induction hypothesis and from the goal.
Subsequently, we symbolically execute the load operation using the invariant, just like in the previous section, reaching the goal
\[
\logrel{}%
{\If \CAS{c_i}{n}{1 + n} then n \Else \FGincrement\,c_i}%
{\CGincrement\ c_s\ \lkvar}{\tint}
\]
for some $n \in \nat$.
To symbolically execute the compare-and-set ($\<CAS>$), we use \ruleref{rel-cas-l-inv}.
By this rule, we have to consider two outcomes, depending on whether the original value of the counter has changed between the load and compare-and-set operations or not.

\begin{enumerate}
\item Suppose that the value of the counter $c_i$ has changed.
In that case the compare-and-set operation fails and we are left with
\begin{align*}
& c_i \mapstoI m * c_s \mapstoS m * \isLock{\lkvar}{\<false>} \wand \\
& \logrel[\top \setminus \counterN]{}%
{\If \<false> then n \Else \FGincrement\,c_i}%
{\CGincrement\ c_s\ \lkvar}{\tint}
\end{align*}
for some $m \neq n$.
Because the symbolic heap has not been changed, we can easily restore the invariant and execute the $\If \<false> then \dots \Else \dots$ to obtain
$\logrel{}{\FGincrement\,c_i}{\CGincrement\ c_s\ \lkvar}{\tint}$,
which is exactly our induction hypothesis.

\item If the value has not changed, then the compare-and-set succeeds and we are left with the new goal:
\begin{align*}
&c_i \mapstoI (1 + n) * c_s \mapstoS n * \isLock{\lkvar}{\<false>} \wand \\
&\logrel[\top \setminus \counterN]{}%
{\If \<true> then n \Else \FGincrement\,c_i}%
{\CGincrement\ c_s\ \lkvar}{\tint}.
\end{align*}
At this point we use the symbolic execution rules \ruleref{rel-store-r}, \ruleref{rel-load-r} and the lock specifications from \Cref{reloc:fig:locks} to symbolically execute the right-hand side of the refinement and update the resources to match:
\begin{align*}
&c_i \mapstoI (1 + n) * c_s \mapstoS (1 + n) * \isLock{\lkvar}{\<false>} \wand \\
&\logrel[\top \setminus \counterN]{}%
{\If \<true> then n \Else \FGincrement\,c_i}%
{n}{\tint}.
\end{align*}
We can then restore the invariant and symbolically execute the left-hand side to finish the proof.

\end{enumerate}
Note that the point in the proof when we symbolically execute $\CGincrement\ c_s\ \lkvar$ on the right-hand side corresponds to the linearization point of $\FGincrement$.

This concludes the proof of the counter refinement.
For the purposes of the proof, we have used some derived rules and principles in \reloc.
In the next section we will present an overview of primitive rules---the very core of \reloc---and show how they can be used to recover the kind of intuitive reasoning we employed in this section.


\section{A closer look at \reloc}
\label{reloc:sec:rules}
We now explain some of the more technical details of \reloc, and show how the principles that we have used in \Cref{reloc:sec:tour} can be obtained from \reloc's primitive proof rules.
First, we describe how to work with invariants using Iris's update modality $\pvs$
(\Cref{reloc:sec:invariants_update}).
Then we explain the role and rules of persistent propositions (\Cref{reloc:sec:persistence}), and go through a selection of \reloc's primitive proof rules and explain how the symbolic execution and structural rules can be derived from them (\Cref{reloc:sec:calculus}).
Finally, we demonstrate how \reloc's rules can be used to prove the \emph{fundamental property}: if we can derive a typing judgment $\typed{}{}{\expr}{\type}$, then $\expr$ refines itself, \ie $\logrel{}{\expr}{\expr}{\type}$.
To prove the fundamental property, we need to generalize the relational judgment to open terms, and prove the structural rules for open terms as well (\Cref{reloc:sec:fundamental}).

A selection of \reloc's primitive proof rules are shown in \Cref{reloc:fig:rules}.
\begin{figure*}
\raggedright
\textbf{Value interpretation rules}:
\medskip
\begin{mathpar}
\inferHB{val-var}
  {\Delta(\tvar)(\val_1, \val_2)}
  {\logrelV{\tvar}{\Delta}{\val_1}{\val_2}}
\and
\inferHB{val-unit}
  {\val_1 = \val_2 = \unittt}
  {\logrelV{\tunit}{\Delta}{\val_1}{\val_2}}
\and
\inferHB{val-bool}
  {\Exists b \in \mathbb{B}. \val_1 = \val_2 = b}
  {\logrelV{\tbool}{\Delta}{\val_1}{\val_2}}
\and
\inferHB{val-int}
  {\Exists n \in \mathbb{Z}. \val_1 = \val_2 = n}
  {\logrelV{\tint}{\Delta}{\val_1}{\val_2}}
\and
\inferHB{val-prod}
  {\Exists \val_1,\val_2,\valB_1,\valB_2. \val = (\val_1,\val_2) \ast
    \valB = (\valB_1,\valB_2) \ast \logrelV{\type}{\Delta}{\val_1}{\valB_1}
  \ast \logrelV{\typeB}{\Delta}{\val_2}{\valB_2}}
  {\logrelV{\type \times \typeB}{\Delta}{\val}{\valB}}
\and
\inferHB{val-arr}
  {\always (\All \valB_1\, \valB_2. \logrelV{\type}{\Delta}{\valB_1}{\valB_2} \wand
      \logrel{\Delta}{\val_1\ \valB_1}{\val_2\ \valB_2}{\typeB})}
  {\logrelV{\type \to \typeB}{\Delta}{\val_1}{\val_2}}
\end{mathpar}  

\bigskip
\textbf{Monadic rules}:
\begin{mathpar}
\inferH{rel-return}
{\logrelV{\type}{\Delta}{\val_1}{\val_2}}
{\logrel{\Delta}{\val_1}{\val_2}{\type}}
\and
\inferH{rel-bind}
{\logrel{\Delta}{\expr_1}{\expr_2}{\type}
  \and
  {
  \begin{array}[b]{@{} r @{\spac} l @{}}
  \All \val_1\, \val_2.& \logrelV{\type}{\Delta}{\val_1}{\val_2} \wand \\
  &\logrel{\Delta}{\fillctx{\lctx_1}[\val_1]}{\fillctx{\lctx_2}[\val_2]}{\typeB}
  \end{array}}}
{\logrel{\Delta}{\fillctx{\lctx_1}[\expr_1]}{\fillctx{\lctx_2}[\expr_2]}{\typeB}}
\end{mathpar}

\bigskip
\textbf{Type-directed structural rules}:
\begin{mathpar}
\inferH{rel-fork}
  {\logrel{\Delta}{{\expr_1}}{{\expr_2}}{\tunit}}
  {\logrel{\Delta}{\Fork{\expr_1}}{\Fork{\expr_2}}{\tunit}}
\end{mathpar}

\bigskip
\textbf{Symbolic execution rules}:
\medskip
\begin{mathpar}
\inferH{rel-load-l}
  {\pvs[\top][\mask] {\left(
    \Exists \val. \loc \mapstoI \val *
    \later \big(\loc \mapstoI \val \wand
      \logrel[\mask]{\Delta}{\fillctx\lctx [\val]}{\expr_2}{\type}\big)
  \right)}}
  {\logrel{\Delta}{\fillctx\lctx [\deref \loc]}{\expr_2}{\type}}
\and
\inferH{rel-store-l}
  {\pvs[\top][\mask] {\left(
    \loc \mapstoI - *
    \later \big(\loc \mapstoI \val \wand
      \logrel[\mask]{\Delta}{\fillctx\lctx [\unittt]}{\expr_2}{\type}\big)
  \right) }}
  {\logrel{\Delta}{\fillctx\lctx [\loc \gets \val]}{\expr_2}{\type}}
\and
\inferH{rel-cas-l}
  {\pvs[\top][\mask] {\left(
  \begin{array}{@{} r @{\spac} l @{}}
  \Exists \val'. \loc \mapstoI \val' \mathrel{*} &
  \later \big(\val' \neq \val_1 \wand \later (\loc \mapstoI \val' \wand
    \logrel[\mask]{\Delta}{\fillctx\lctx [\<false>]}{\expr_2}{\type}) \big) \mathrel{\land} \\
  & \later \big(\val' = \val_1 \wand \later (\loc \mapstoI \val_2 \wand
    \logrel[\mask]{\Delta}{\fillctx\lctx [\<true>]}{\expr_2}{\type})\big)
  \end{array} \right) }}
  {\logrel{\Delta}{\fillctx\lctx [\CAS{\loc}{\val_1}{\val_2}]}{\expr_2}{\type}}
\and
\end{mathpar}

\bigskip
\textbf{Invariants rules} (\ruleref{inv-alloc} and \ruleref{inv-access} are inherited from Iris):
\medskip
\begin{mathpar}
\inferH{rel-upd}
  {\pvs[\mask_1][\mask_2] \logrel[\mask_2]{\Delta}{\expr_1}{\expr_2}{\type}}
  {\logrel[\mask_1]{\Delta}{\expr_2}{\expr_2}{\type}}
\and
\inferH{inv-alloc}
  {\later \propMV}
  {\pvs[\mask] \knowInv{\namesp}{\propMV}}
\and
\inferH{inv-access}
  {\namesp \subseteq \mask \and \knowInv{\namesp}{\propMV}}
  {\pvs[\mask][\mask \setminus \namesp] \later \propMV * (\later \propMV \vsW[\mask \setminus \namesp][\mask] \TRUE)}
\end{mathpar}
\caption{Selected primitive rules of \reloc.}
\label{reloc:fig:rules}
\end{figure*}

\subsection{Invariants and the update modality}
\label{reloc:sec:invariants_update}
The rules for invariants in \Cref{reloc:fig:rules_tour} in \Cref{reloc:sec:tour:invariants} are fairly restrictive, \eg they allow us to open at most one invariant at the same time.
Moreover, several of those rules, \eg \ruleref{rel-load-l-inv} and \ruleref{rel-cas-l-inv}, mix together symbolic execution and invariant manipulation.
We now present \reloc's more primitive proof rules, which integrate Iris's flexible mechanism for invariants and ghost state, and which can be used to derive rules such as like \ruleref{rel-load-l-inv} and \ruleref{rel-cas-l-inv}.

Invariants and ghost state in Iris are controlled via the \emph{update modality} $\pvs[\mask_1][\mask_2] \propMV$.
The intuition behind $\pvs[\mask_1][\mask_2] \propMV$ is to express that under the assumption that the invariants in $\mask_1$ are accessible initially, one can obtain $\propMV$, and end up in the situation where the invariants in $\mask_2$ are accessible.
Thus, for showing $P$ we can open the invariants from $\mask_1$ and have to restore the invariants from $\mask_2$ (the invariants from $\mask_1 \setminus \mask_2$ may remain open).
Furthermore, this modality allows one to perform changes to Iris's ghost state via \emph{frame preserving updates}; for a description of those we refer the reader to~\cite{irisJFP}.

The key rules of the update modality are:
\begin{mathpar}
\inferhref{$\pvs$-intro}{upd-intro}{\propMV}{\pvs[\mask][\mask] \propMV}
\and
\inferhref{$\pvs$-mono}{upd-mono}{\propMV \vdash \propMVB}{\pvs[\mask_1][\mask_2] \propMV \vdash \pvs[\mask_1][\mask_2] \propMVB}
\and
\inferhref{$\pvs$-idemp}{upd-idemp}{\pvs[\mask_1][\mask_2] \pvs[\mask_2][\mask_3] \propMV}{\pvs[\mask_1][\mask_3] \propMV}
\and
\inferhrefB{$\pvs$-sep}{upd-sep}{\propMV \ast \pvs[\mask_1][\mask_2] \propMVB}{\pvs[\mask_1][\mask_2] (\propMV \ast \propMVB)}
\end{mathpar}
These rules say that the update modality is a monad, which is indexed (due to the masks), and strong (due to rule \ruleref{upd-sep}).
In \reloc (and Iris) proofs, we often need to eliminate update modalities in the proof context, which is allowed if the goal is an update modality with corresponding source mask.
This is expressed by the following derived rule:
\begin{mathpar}
  \inferhref{$\pvs$-elim}{upd-elim}
  {\pvs[\mask_1][\mask_2] \propMV \and \propMV \wand \pvs[\mask_2][\mask_3] \propMVB}
  {\pvs[\mask_1][\mask_3] \propMVB}
\end{mathpar}
This rule is derivable from \ruleref{upd-mono}, and \ruleref{upd-idemp}.

Before we will describe the rules of the update modality related to invariants, let us describe some syntactic sugar that we inherit from Iris.
We write $\pvs[\mask] \propMV$ for $\pvs[\mask][\mask] \propMV$, and $\pvs \propMV$ for $\pvs[\top] \propMV$, where $\top$ is the set of all invariant names.
Moreover, since the update modality is often combined with the magic wand, we write $\propMV \vsW[\mask_1][\mask_2] \propMVB$ for $\propMV \wand \pvs[\mask_1][\mask_2] \propMVB$, and follow the same conventions for omitting masks on $\vsW$ as used for $\pvs$.

\reloc's main rule for interacting with the update modality is \ruleref{rel-upd}.
It allows to eliminate an update modality around a refinement judgment.
To get an idea of how this rule is used, let us take a look at the primitive rule \ruleref{inv-alloc} for allocating an invariant.
The derived rule \ruleref{rel-inv-alloc'} in \Cref{reloc:fig:rules_tour} is a composition of \ruleref{rel-upd} with Iris's rules \ruleref{upd-elim} and \ruleref{inv-alloc}.

By combining \ruleref{rel-upd} with Iris's rules \ruleref{upd-elim} and \ruleref{inv-access} for accessing invariants, one can turn an invariant $\knowInv \namesp \propMV$ into its content $\propMV$, together with a way of restoring the invariant $\later \propMV \vsW[\mask \setminus \namesp][\mask] \TRUE$.
It is important to notice that by using the combination of these rules, the mask on the refinement judgment changes from $\mask$ into $\mask \setminus \namesp$.
This prohibits access to the invariant $\namesp$ until it has been restored---thus preventing reentrancy.
Restoring the invariant is done by using the rule \ruleref{rel-upd} with the premise $\later \propMV \vsW[\mask \setminus \namesp][\mask] \TRUE$.
This requires one to give up $\propMV$, and in turn transforms the mask of the judgment back into $\mask$.
Note that one can use \ruleref{inv-access} multiple times to open multiple invariants.

\paragraph{Invariants and symbolic execution.}
Opening invariants through \ruleref{rel-upd} and \ruleref{inv-access} as described above is fairly limited.
Once we open an invariant, the mask at the refinement judgment changes from $\top$ into $\top \setminus \namesp$, which prevents any symbolic execution on the left-hand side.
The rules for symbolic execution on that side require the mask to be $\top$.
As we discussed in \Cref{reloc:sec:tour:invariants} already, this restriction to the $\top$ mask on left-hand side rules is crucial.
It is unsound to perform multiple symbolic execution steps on the left while an invariant is open.
To see why this is the case, consider the following refinement:
\[
  \logrel{}{(\Lam x. \Let n = \deref x  in x \gets n + 1; n)}{\FGincrement}{\tref\ {\tint} \to \tint}
\]
This refinement does not hold because the two programs can be distinguished by the context:
\[
  \Let \cntvar = \Ref(0) in \Let f = \hole in \Fork{f\ \cntvar}; f \cntvar.
\]
The left-hand side is basically the coarse-grained increment operation $\CGincrement$ without the lock protection.
Thus, the function on the left-hand side does not guarantee thread-safety: the value of the passed reference can change unpredictably if the function is invoked in parallel with itself.
By contrast, the $\FGincrement$ always increments the counter monotonically.

If we were allowed to perform multiple symbolic execution rules on the left-hand side, then we could have proven the above refinement, using an invariant of $\knowInv{}{\Exists n. \cntvar_s \mapstoS n \ast \cntvar_i \mapstoI n}$.

In order to support symbolic execution with invariants, \reloc provides additional rules to simultaneously access an invariant and perform a single atomic symbolic execution step on the left-hand side.
Examples of such rules are \ruleref{rel-load-l}, \ruleref{rel-store-l} and \ruleref{rel-cas-l}.

We can now explain the derived rule \ruleref{rel-load-l-inv} in terms of the primitive rules.
The proposition $\later \propMV \vsW[\mask \setminus \namesp][\mask] \TRUE$ is used for closing the invariant $\namesp$ because it changes the mask from $\mask \setminus \namesp$ to $\mask$.
Thus $\CLOSE{\namesp}{\propMV} \eqdef (\later \propMV \vsW[\top \setminus \namesp][\top] \TRUE)$.
To prove \ruleref{rel-load-l-inv} from \Cref{reloc:fig:rules_tour}, we apply \ruleref{rel-load-l} to obtain the goal:
\[
\pvs[\top][\top\setminus\namesp] {\left(
  \Exists \val. \loc \mapstoI \val \mathrel{*}
  \later \big(\loc \mapstoI \val \wand
    \logrel[\top\setminus\namesp]{\Delta}{\fillctx\lctx [\val]}{\expr}{\type}\big)
  \right) }.
\]
We then use \ruleref{inv-access} and \ruleref{upd-elim} to get the premise of \ruleref{rel-load-l-inv}.
In the same way \ruleref{rel-cas-l-inv} can be derived from \ruleref{rel-cas-l}.
Finally, the closing rule \ruleref{rel-inv-restore} is a consequence of the definition of $\CLOSE{\namesp}{\propMV}$ and \ruleref{rel-upd}.

Using \reloc's primitive symbolic execution rules such as \ruleref{rel-load-l}, \ruleref{rel-store-l} and \ruleref{rel-cas-l} one can also derive the following weaker, but perhaps more intuitive, symbolic execution rule:
\[
\inferH{rel-store-l'}
  {\loc \mapstoI \val \and
  \later \big(\loc \mapstoI \valB \wand
    \logrel{\Delta}{\fillctx\lctx [\unittt]}{\expr_2}{\type}\big)}
  {\logrel{\Delta}{\fillctx\lctx [\loc \gets \valB]}{\expr_2}{\type}}
\]

Since these rules have a $\top$ mask, they can only be used when no invariants have been opened.
Recall that by contrast, the symbolic execution rules for the right-hand side, such as \ruleref{rel-load-r}, \ruleref{rel-store-r} in \Cref{reloc:fig:rules_tour}, which are of a similar shape, can be performed even with invariants open because they allow the mask to be arbitrary.

\subsection{The persistence modality}
\label{reloc:sec:persistence}

Recall from \Cref{reloc:sec:tour:invariants} that a proposition $\propMV$ is persistent, written as $\persistent{\propMV}$, if $\propMV \vdash \always \propMV$, where $\always$ is Iris's \emph{persistence} modality.
The $\always$ modality plays an important role in \reloc because it makes it possible to express that if two expressions are related, they remain related forever.
For example, the persistence modality plays a crucial role in the rule \ruleref{rel-rec} in \Cref{reloc:fig:rules_tour}---it ensures that ephemeral resources (such as heap assertions) are not used for the verification of the closure's body.
After all, closures can be invoked arbitrarily many times at different points in time (possibly concurrently), and hence it is impossible to guarantee that ephemeral resources will still be available when the closure is called.
For example, without the $\always$ modality in the premise of \ruleref{rel-rec} one would be able to prove the following unsound refinement:
\[
\logrel{}{\Let \loc = \Ref(0) in \Lam \unittt. \loc \gets 1 + \deref\loc;\,\deref \loc\ }{\ \Lam \unittt. 1\ }{\ \tunit \to \tint}.
\]
One would use \ruleref{rel-alloc-l'} to obtain the heap assertion $\loc \mapstoI 0$, and subsequently use that assertion to verify the body of the closure.
Fortunately, the $\always$ modality in \ruleref{rel-rec} prevails---$\loc \mapstoI 0$ is ephemeral, not persistent, so cannot be moved under a $\always$.

In \Cref{reloc:sec:tour:invariants} we gave an idea of the core rules of the persistence modality.
Let us now take a look at the rules in more detail:
\begin{mathparpagebreakable}
\inferhrefB{$\always$-dup}{always-dup}{\always \propMV \ast \always \propMV}{\always \propMV}
\and
\inferhref{$\always$-elim}{always-elim}{\always \propMV}{\propMV}
\and
\inferhref{$\always$-mono}{always-mono}{\propMV \vdash \propMVB}{\always \propMV \vdash \always \propMVB}
\and
\inferhref{$\always$-idemp}{always-idemp}{\always \propMV}{\always \always \propMV}
\and
\inferhrefB{$\always$-sep}{always-sep}{\always \propMV \ast \always \propMVB}{\always (\propMV \ast \propMVB)}
\end{mathparpagebreakable}
The rules \ruleref{always-dup} and \ruleref{always-elim} say that the $\always \propMV$ is duplicable, and one can get $\propMV$ out.
The rule \ruleref{always-idemp} says that $\always \propMV$ itself is persistent.
The rules \ruleref{always-elim}, \ruleref{always-mono} and \ruleref{always-idemp} say that $\always$ is in fact a co-monad.
Finally, $\always$ commutes with most logical connectives, for example, the separating conjunction, as expressed by \ruleref{always-sep}.

If we wish to prove $\always \propMVB$ under the assumptions $\propMV_1, \dots, \propMV_n$, where each $\propMV_i$ is persistent, then we can introduce the $\always$ modality and prove $\propMVB$ from $\propMV_1, \dots, \propMV_n$:
\begin{mathpar}
  \inferhref{$\always$-intro}{always-intro}
  {\persistent{\propMV_1} \and \dots \and \persistent{\propMV_n}
    \and
    {\propMV_1 \ast \dots \ast \propMV_n \vdash \propMVB}
}
  {\propMV_1 \ast \dots \ast \propMV_n \vdash \always \propMVB}
\end{mathpar}
This rule is derivable from the definition of $\persistent{-}$, \ruleref{always-sep}, and \ruleref{always-mono}.

Note that $\persistent{\propMV}$ is defined through the validity relation $\propMV \proves \always \propMV$; \ie it is a meta-logical notion (in terms of the mechanization, $\persistent\propMV$ is a Coq-level predicate, not an Iris-level predicate).
As such, the rule above does not fit the description we have given to the inference rules in \Cref{reloc:subsec:grammar}.
Rather, it should be seen as a family of inference rules indexed by meta-level propositions $\persistent{\propMV_1}, \dots, \persistent{\propMV_n}$.

\subsection{Value interpretation and monadic rules}
\label{reloc:sec:calculus}
In addition to the refinement judgment $\logrel{\Delta}{\expr_1}{\expr_2}{\type}$, which relates expressions $\expr_1$ and $\expr_2$, \reloc provides the value interpretation $\Sem{\type}_\Delta(\val_1,\val_2)$, which relates values $\val_1$ and $\val_2$.
The rule \ruleref{rel-return} expresses that $\Sem{\type}_\Delta(\val_1,\val_2)$ implies $\logrel{\Delta}{\val_2}{\val_2}{\type}$.
However, the inverse direction does not hold, $\Sem{\type}_\Delta(\val_1,\val_2)$ is strictly stronger than $\logrel{\Delta}{\val_1}{\val_2}{\type}$ as its rules (in \Cref{reloc:fig:rules}) are bidirectional, whereas those for the expression judgment are unidirectional.
The bidirectionality is crucial for the rule \ruleref{rel-rec} in \Cref{reloc:fig:rules_tour}, as it contains $\Sem{\type}_\Delta(\val_1,\val_2)$ in negative position---\ie as a client of \ruleref{rel-rec} one gets $\Sem{\type}_\Delta(\val_1,\val_2)$ as an assumption and hence needs to eliminate it.

We want the value interpretation $\logrelV{\type}{\Delta}{\val_1}{\val_2}$ to be persistent, because our type system is not substructural, \ie types denote knowledge, but not ownership of data.
For example, in typing the expression $\expr_1 \gets \expr_2$ with \ruleref{store-typed}, we use the same context $\Gamma$ for type checking both $e_1$ and $e_2$.
In order to semantically validate such rules, we want the propositions $\logrelV{\type}{\Delta}{\val_1}{\val_2}$ to be duplicable.
To that end, we require all the interpretations in the context $\Delta$ to be persistent.
That is why the rule \ruleref{rel-pack} in \Cref{reloc:fig:rules_tour} has a side-condition $\All \val_1, \val_2. \persistent{R(\val_1, \val_2)}$.

The value interpretation also appears in the monadic rules \ruleref{rel-return} and \ruleref{rel-bind} in \Cref{reloc:fig:rules}.
These rules are used to derive all type-directed structural rules of \reloc, with the exception of \ruleref{rel-fork}, which is the sole primitive type-directed structural rule.
As an example, consider the type-directed structural rule for the first projection $\pi_1$.

\begin{lem}
\label[lem]{reloc:lem:related_fst}
The following rule is derivable:
\begin{mathpar}
\infer
  {\logrel{\Delta}{\expr_1}{\expr_2}{\type \times \typeB}}
  {\logrel{\Delta}{\pi_1(\expr_1)}{\pi_1(\expr_2)}{\type}}
\end{mathpar}
\end{lem}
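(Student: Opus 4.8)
The plan is to derive this rule from the monadic rules \ruleref{rel-bind} and \ruleref{rel-return}, using $\pi_1$ as the surrounding evaluation context and performing a single pure reduction on each side once the operands have become values.

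First I would apply \ruleref{rel-bind} with the evaluation contexts $\lctx_1 = \lctx_2 = \pi_1(\hole)$, instantiating the bound type of \ruleref{rel-bind} with $\type \times \typeB$ and its result type with $\type$. This splits the goal $\logrel{\Delta}{\pi_1(\expr_1)}{\pi_1(\expr_2)}{\type}$ into two premises. The first is $\logrel{\Delta}{\expr_1}{\expr_2}{\type \times \typeB}$, which is exactly the hypothesis of the rule we are deriving. The second is the continuation obligation
\[
\All \val_1\, \val_2.\; \logrelV{\type \times \typeB}{\Delta}{\val_1}{\val_2} \wand \logrel{\Delta}{\pi_1(\val_1)}{\pi_1(\val_2)}{\type},
\]
expressing that once both subexpressions have reduced to related values, their first projections are related at $\type$.

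For the continuation I would introduce $\val_1, \val_2$ together with the hypothesis $\logrelV{\type \times \typeB}{\Delta}{\val_1}{\val_2}$, and eliminate it using the product value-interpretation rule \ruleref{val-prod}. Here it is essential that the value interpretation rules are bidirectional (as stressed in the text preceding \Cref{reloc:lem:related_fst}): the backward direction of \ruleref{val-prod} lets us \emph{destruct} the assumption, yielding values $a_1, b_1, a_2, b_2$ with $\val_1 = (a_1, b_1)$, $\val_2 = (a_2, b_2)$ and, in particular, $\logrelV{\type}{\Delta}{a_1}{a_2}$ (the component at $\typeB$ is discarded). After substitution the goal becomes $\logrel{\Delta}{\pi_1((a_1,b_1))}{\pi_1((a_2,b_2))}{\type}$.

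Finally, I would symbolically execute both projections via the pure reduction $\pureexec{\pi_1((a_i,b_i))}{a_i}$ supplied by \ruleref{proj}: one use of \ruleref{rel-pure-l} on the left and one of \ruleref{rel-pure-r} on the right reduce the goal to $\logrel{\Delta}{a_1}{a_2}{\type}$, which is discharged by \ruleref{rel-return} from $\logrelV{\type}{\Delta}{a_1}{a_2}$. The $\later$ in the premise of \ruleref{rel-pure-l} is harmless and is removed with \ruleref{later-intro}, since there is no recursive hypothesis to guard. I do not anticipate any genuine obstacle here; the only step requiring care is invoking the backward direction of \ruleref{val-prod} to decompose the assumed product relation, which is precisely why the value interpretation is designed to be bidirectional.
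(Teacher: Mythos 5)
Your proposal is correct and follows exactly the paper's own proof: apply \ruleref{rel-bind} with the contexts $\pi_1(\hole)$, destruct the resulting value-interpretation hypothesis with \ruleref{val-prod}, symbolically execute both projections with \ruleref{rel-pure-l} and \ruleref{rel-pure-r}, and conclude with \ruleref{rel-return}. The remark about stripping the $\later$ via \ruleref{later-intro} is a minor detail the paper leaves implicit, but otherwise the two arguments coincide step for step.
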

\begin{proof}
By \ruleref{rel-bind} it suffices to show:
\begin{itemize}
\item $\logrel{\Delta}{\expr_1}{\expr_2}{\type \times \typeB}$, but this is exactly our assumption;
\item for any $\val, \valB$: $\logrelV{\type \times \typeB}{\Delta}{\val}{\valB} \wand \logrel{\Delta}{\pi_1(\val)}{\pi_1(\valB)}{\type}$.
\end{itemize}

By \ruleref{val-prod} we have values $\val_i, \valB_i$ for $i \in \{1,2\}$ such that $\val = (\val_1,\val_2)$ and $\valB = (\valB_1,\valB_2)$ and
$\logrelV{\type}{\Delta}{\val_1}{\valB_1} \ast \logrelV{\typeB}{\Delta}{\val_2}{\valB_2}$.
Using \ruleref{rel-pure-l} and \ruleref{rel-pure-r} we reduce the goal
$\logrel{\Delta}{\pi_1(\val_1,\val_2)}{\pi_1(\valB_1,\valB_2)}{\type}$ to
$\logrel{\Delta}{\val_1}{\valB_1}{\type}$.
At this point we apply \ruleref{rel-return}.
\end{proof}

\subsection{Fundamental theorem and refinements of open terms}
\label{reloc:sec:fundamental}
The type-directed structural rules\footnote{Our \emph{type-directed structural rules} are often called \emph{compatibility lemmas} in the logical relation literature.} are also used for proving the following theorem, which is a standard result for logical relation models of type systems:
\begin{thm}[Fundamental theorem for closed terms]
\label{reloc:lem:fundamental_short}
If expression $\expr$ is well typed, \ie $\typed{}{}{\expr}{\type}$, then $\expr$ refines itself, \ie the judgment $\logrel{}{\expr}{\expr}{\type}$ is derivable in \reloc.
\end{thm}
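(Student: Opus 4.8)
The plan is to prove a stronger statement about \emph{open} terms, since a naive induction on the closed judgment $\typed{}{}{\expr}{\type}$ immediately leaves the empty-context fragment: the rule \ruleref{rec-typed} extends the value context $\vctx$, while \ruleref{tlam-typed}, \ruleref{tpack-typed} and \ruleref{tunpack-typed} extend the type-variable context $\tenv$. Following \Cref{reloc:def:open_term_refinement}, I would state the generalized theorem as: if $\typed{\tenv}{\vctx}{\expr}{\type}$, then for every environment $\Delta$ assigning a \emph{persistent} relation to each type variable in $\tenv$, the open refinement $\logrel{\Delta}[\vctx]{\expr}{\expr}{\type}$ holds. Here the open refinement quantifies over all pairs of closing substitutions $\gamma_1, \gamma_2$ that are pointwise related at $\vctx$, \ie $\logrelV{\vctx(\var)}{\Delta}{\gamma_1(\var)}{\gamma_2(\var)}$ for every variable $\var$ declared in $\vctx$, and asserts $\logrel{\Delta}{\gamma_1(\expr)}{\gamma_2(\expr)}{\type}$. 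The theorem as stated is the special case $\tenv = \vctx = \emptyset$, where $\Delta$ and both substitutions are empty.

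The proof then goes by induction on the typing derivation, discharging each typing rule with its corresponding \emph{type-directed structural rule}. The leaf \ruleref{var-typed} follows directly from relatedness of $\gamma_1, \gamma_2$. For the composite, first-order cases I would apply the structural rules already established in \Cref{reloc:fig:rules_tour,reloc:fig:rules}: \ruleref{rel-pair} for products, the derived projection rule of \Cref{reloc:lem:related_fst} for $\proj_i$, \ruleref{rel-rec} for recursive functions, and \ruleref{rel-pack} for existential packages, feeding in the induction hypotheses (the binder cases relying on the open generalization to thread through the extended $\vctx$). The reference and concurrency cases \ruleref{alloc-typed}, \ruleref{load-typed}, \ruleref{store-typed}, \ruleref{cas-typed} and \ruleref{fork-typed} are handled uniformly by \ruleref{rel-bind}: one first reduces the subexpressions to related \emph{values} using the induction hypotheses, and then applies the matching symbolic execution rules on both sides (the fork case uses \ruleref{rel-fork}) followed by \ruleref{rel-return}.

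The delicate cases are those touching the type-variable context. For \ruleref{tlam-typed} the context $\tenv$ gains a fresh $\tvar$, so I would use the compatibility rule for universal types, instantiating $\tvar$ with an arbitrary persistent relation $R$ and invoking the induction hypothesis at $\deltainsert{\tvar}{R}{\Delta}$; this is exactly where persistence of all interpretations in $\Delta$ is needed. For \ruleref{tapp-typed}, \ruleref{tpack-typed}, \ruleref{fold-typed} and \ruleref{unfold-typed} the conclusion's type is a substitution instance $\subst{\type}{\tvar}{\type'}$, so the crux is a \emph{compositionality} lemma for the value interpretation, proved by induction on $\type$: $\logrelV{\subst{\type}{\tvar}{\type'}}{\Delta}{\val_1}{\val_2}$ is equivalent to $\logrelV{\type}{\deltainsert{\tvar}{\logrelV{\type'}{\Delta}{-}{-}}{\Delta}}{\val_1}{\val_2}$. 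This lemma lines up the relation chosen at a type abstraction with the one demanded at the application site, and makes the fold/unfold compatibility rules typecheck against the interpretation of $\trec{\tvar.\type}$; the recursive case additionally relies on that interpretation being well-defined by guarded recursion (the $\later$ in Iris's model), which guarantees the required fixpoint exists. I expect this compositionality lemma, rather than any single compatibility rule, to be the main obstacle.
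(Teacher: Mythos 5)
Your proposal matches the paper's proof: the paper likewise generalizes to open terms via closing substitutions (\Cref{reloc:def:interp_open_terms,reloc:def:open_term_refinement}) and proves \Cref{reloc:lem:fundamental} by induction on the typing derivation using the open-term versions of the type-directed structural rules, with the closed case as the instance $\tenv = \vctx = \emptyset$. The compositionality lemma for type substitution that you single out is not spelled out in the paper's one-line proof, but it is indeed the standard ingredient needed for the \ruleref{tapp-typed}, \ruleref{tpack-typed} and fold/unfold cases, so your account is simply a more detailed rendering of the same argument.
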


We wish to prove this theorem by induction on the typing derivation.
But in order to make it work, we need to generalize the theorem to open terms (\eg in order to deal with the \ruleref{rec-typed} case).
Consequently, we need to generalize \reloc's refinement judgment $\logrel{\Delta}[\vctx]{\expr_1}{\expr_2}{\type}$ to open terms $\expr_1$ and $\expr_2$ whose free variables are bound by the typing context $\vctx$.
To define the refinement judgment for open terms, we first define a standard notion of a \emph{closing substitution}.

\begin{defi}
  \label[defi]{reloc:def:interp_open_terms}
  A mapping $\gamma : \Var \to \Val \times \Val$ is
  a \emph{closing substitution \wrt the typing environment $\vctx$}, notation
  $\validSubst{\Delta}{\vctx}{\gamma}$, if
  \[
    \All (x, \type) \in \vctx. \logrelV{\type}{\Delta}{\gamma_1(x)}{\gamma_2(x)},
  \]
  where $\gamma_i(x) = \pi_i(\gamma(x))$ is the $i$-th projection of $\gamma(x)$.
\end{defi}

\begin{defi}
  \label[defi]{reloc:def:open_term_refinement}
  The \emph{refinement judgment $\logrel{\Delta}[\vctx]{\expr_1}{\expr_2}{\type}$ for open
  terms} is defined as:
  \[
    \logrel{\Delta}[\vctx]{\expr_1}{\expr_2}{\type}
    ~\eqdef~
    \All \gamma.
      \validSubst{\Delta}{\vctx}{\gamma} \wand \logrel{\Delta}{\gamma_1(\expr_1)}{\gamma_2(\expr_2)}{\type}.
  \]
\end{defi}

Using the refinement judgment for open terms we can now state versions of the type-directed structural rules for open terms.
For example:
\[
\infer
  {\logrel{\Delta}[\vctxinsert{x}{\type}{\vctx}]{\expr_1}{\expr_2}{\typeB}}
  {\logrel{\Delta}[\vctx]{\Lam x. \expr_1}{\Lam x. \expr_2}{\type \to \typeB}}
\]
This rule can be proven by unfolding the definition of the refinement judgment for open terms and proceeding as in \Cref{reloc:lem:related_fst}.
Finally, we can state and prove the generalized version of the fundamental theorem for open terms:

\begin{thm}[Fundamental theorem for open terms]
\label{reloc:lem:fundamental}
If $\typed{\tenv}{\vctx}{\expr}{\type}$, then $\logrel{\Delta}[\vctx]{\expr}{\expr}{\type}$ is derivable in \reloc, for all $\Delta$ which contain the variables from $\tenv$.
\end{thm}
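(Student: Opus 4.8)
The plan is to prove the theorem by induction on the typing derivation $\typed{\tenv}{\vctx}{\expr}{\type}$. For each typing rule in \Cref{reloc:fig:heaplang} there is a corresponding \emph{compatibility lemma}: a version of \reloc's type-directed structural rules lifted to open terms through \Cref{reloc:def:open_term_refinement}. The fundamental theorem then follows by threading these compatibility lemmas through the derivation, one per node. So the real content is (i) establishing one such compatibility lemma per syntactic construct, and (ii) an auxiliary \emph{compositionality} lemma $\logrelV{\subst{\type}{\tvar}{\type'}}{\Delta}{\val_1}{\val_2} \mathrel{=} \logrelV{\type}{\deltainsert{\tvar}{\Sem{\type'}_\Delta}{\Delta}}{\val_1}{\val_2}$ relating the value interpretation after a syntactic type substitution to the interpretation in an extended environment $\Delta$.

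Each compatibility lemma is proven by unfolding $\logrel{\Delta}[\vctx]{\expr}{\expr}{\type}$ to a statement about closing substitutions $\gamma$ with $\validSubst{\Delta}{\vctx}{\gamma}$, and then applying the closed-term rules of \Cref{reloc:fig:rules_tour,reloc:fig:rules}. The template is exactly that of \Cref{reloc:lem:related_fst}: to relate a compound expression $\fillctx\lctx[\expr']$, one uses \ruleref{rel-bind} to reduce subexpressions to a pair of related values, discharges the sub-refinements by the induction hypotheses, and finishes with the relevant value-interpretation rule (\eg \ruleref{val-prod}, \ruleref{val-arr}) together with pure and stateful symbolic-execution rules. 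The base case \ruleref{var-typed} is immediate: for a valid $\gamma$, the hypothesis $(x,\type) \in \vctx$ yields $\logrelV{\type}{\Delta}{\gamma_1(x)}{\gamma_2(x)}$ directly from \Cref{reloc:def:interp_open_terms}, and \ruleref{rel-return} closes the goal. The first-order stateful cases (\ruleref{alloc-typed}, \ruleref{load-typed}, \ruleref{store-typed}, \ruleref{cas-typed}) and \ruleref{fork-typed} follow the same pattern, using \ruleref{rel-fork} and the symbolic-execution rules on both the implementation and specification sides.

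The interesting cases bind program or type variables. For \ruleref{rec-typed} the compatibility lemma is where the later modality earns its keep: we apply \ruleref{rel-rec}, introduce the $\always$, and use \ruleref{later-loeb} so that the recursive call is justified by the Löb induction hypothesis, which becomes usable precisely after the $\beta$-step of symbolic execution. This is also the case that forced the generalization to open terms, since the body of $\Rec f x = \expr$ is typed in $\vctxinsert{x}{\type_1}{\vctxinsert{f}{\type_1\to\type_2}{\vctx}}$; we extend $\gamma$ with the related argument values and the self-reference obtained from Löb. For the type-binding rules \ruleref{tlam-typed}, \ruleref{tpack-typed}, and \ruleref{tunpack-typed}, the environment $\Delta$ is extended with a fresh relation interpreting the bound type variable (as in \ruleref{rel-pack}), and the side-condition $\persistent{R(\val_1,\val_2)}$ is discharged by maintaining the invariant that all interpretations in $\Delta$ are persistent. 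For \ruleref{tapp-typed}, \ruleref{fold-typed}, and \ruleref{unfold-typed} the compositionality lemma lets the type substitution performed by the operational semantics be mirrored on the semantic side.

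I expect the main obstacle to be the \ruleref{rec-typed} case: getting the interplay of \ruleref{rel-rec}, $\always$-introduction, and Löb induction right while simultaneously managing the closing substitution for the two newly bound variables $f$ and $x$, and checking that term substitution commutes with $\gamma$ as expected (a routine but easily-botched substitution lemma). The compositionality lemma for recursive types is the other delicate point, since $\trec{\tvar.\type}$ forces the value interpretation to be defined by guarded recursion; its correctness ultimately hinges on the $\later$ that appears when unfolding the interpretation of $\mu$-types lining up with the $\later$ introduced along \ruleref{unfold}.
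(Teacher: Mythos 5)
Your proposal matches the paper's proof exactly: the paper proves \Cref{reloc:lem:fundamental} by induction on the typing derivation using the open-term versions of the type-directed structural rules (\ie the compatibility lemmas), each proved by unfolding \Cref{reloc:def:open_term_refinement} and applying the closed-term rules as in \Cref{reloc:lem:related_fst}. The additional detail you supply --- the compositionality lemma for type substitution, the L\"ob/$\always$ interplay in the \ruleref{rec-typed} case, and the persistence side-conditions for type-variable binders --- is all consistent with the standard development the paper relies on.
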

\begin{proof}
 By induction on the typing derivation, using the versions of the type-directed structural rules for open terms.
\end{proof}

With the refinement judgments generalized to open terms, we can state and the generalized version of \Cref{reloc:thm:soundness} for open terms, which we prove in \Cref{reloc:sec:model:soundness}.

\begin{thm}[Soundness for open terms]
\label{reloc:thm:soundness_full}
Let $\tenv$ be a type environment.
Suppose that refinement judgment $\logrel{\Delta}[\vctx]{\expr_1}{\expr_2}{\type}$ is derivable in \reloc for all $\Delta$ which contain the variables from $\tenv$.
Then $\ctxref{\tenv \mid \vctx}{\expr_1}{\expr_2}{\type}$.
\end{thm}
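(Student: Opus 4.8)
The plan is to reduce the open-term statement to the already-established soundness for closed terms (\Cref{reloc:thm:soundness}) by interposing a \emph{congruence} (context-compatibility) lemma. First I would unfold the definition of contextual refinement: fixing a result type $\type'$ and a well-typed context $\typedctx{\pctx}{\tenv}{\vctx}{\type}{\emptyset}{\emptyset}{\type'}$, the goal is the termination implication for $\fillctx\pctx[\expr_1]$ and $\fillctx\pctx[\expr_2]$ run from the empty state. Since the codomain environments are empty (so $\type'$ is closed), it suffices to derive the \emph{closed} refinement judgment $\logrel{\emptyset}{\fillctx\pctx[\expr_1]}{\fillctx\pctx[\expr_2]}{\type'}$: applying \Cref{reloc:thm:soundness} to it yields $\ctxref{\emptyset \mid \emptyset}{\fillctx\pctx[\expr_1]}{\fillctx\pctx[\expr_2]}{\type'}$, and instantiating that with the empty context $\Box$ gives exactly the termination implication required for our fixed $\pctx$; as $\pctx$ was arbitrary, this establishes $\ctxref{\tenv \mid \vctx}{\expr_1}{\expr_2}{\type}$.

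The heart of the argument is therefore the following congruence lemma: if $\logrel{\Delta}[\vctx]{\expr_1}{\expr_2}{\type}$ holds for all $\Delta$ containing the variables of $\tenv$, and $\typedctx{\pctx}{\tenv}{\vctx}{\type}{\tenv'}{\vctx'}{\type'}$, then $\logrel{\Delta'}[\vctx']{\fillctx\pctx[\expr_1]}{\fillctx\pctx[\expr_2]}{\type'}$ for all $\Delta'$ containing the variables of $\tenv'$. I would prove this by induction on the derivation that $\pctx$ is well typed. In the base case $\pctx = \Box$ the environments coincide and the conclusion is exactly the hypothesis (with $\Delta' = \Delta$). In each inductive case $\pctx$ is a one-level frame applied to a smaller context $\pctx_0$; the induction hypothesis supplies the refinement for $\fillctx{\pctx_0}[\expr_1]$ and $\fillctx{\pctx_0}[\expr_2]$, and I would close the case with the open-term version of the corresponding type-directed structural rule (compatibility lemma) for that frame. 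The frames that introduce fresh term or type variables, such as $\Rec f x = \pctx_0$ or $\tlam{\pctx_0}$, are handled by the binder-extending compatibility rules, which is precisely why the universally quantified $\Delta'$ must be carried through the induction.

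The one subtlety, and the main obstacle, is that binary frames such as application $\pctx_0(\expr)$ or pairing $(\pctx_0, \expr)$ contain a second subexpression $\expr$ that is not the hole but a fixed well-typed term. The binary compatibility lemmas demand a refinement between that subexpression and itself, which is exactly what the fundamental theorem for open terms (\Cref{reloc:lem:fundamental}) provides: from the context's typing derivation one extracts $\typed{\tenv''}{\vctx''}{\expr}{\typeB}$ and obtains $\logrel{\Delta''}[\vctx'']{\expr}{\expr}{\typeB}$ for every suitable $\Delta''$. Threading the ``for all $\Delta$'' quantifier correctly through these applications, so that the interpretation used for the reflexive subterm agrees with the one demanded by the compatibility lemma at the current node, is the bookkeeping that requires the most care.

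Finally, instantiating the congruence lemma with the empty codomain environments collapses the open-term judgment (\Cref{reloc:def:open_term_refinement}) to the closed judgment $\logrel{\emptyset}{\fillctx\pctx[\expr_1]}{\fillctx\pctx[\expr_2]}{\type'}$, and closed soundness finishes the proof as described in the first paragraph. I expect no difficulty in the reduction steps themselves; the substance is entirely in the congruence lemma and its reliance on the compatibility lemmas and the fundamental theorem.
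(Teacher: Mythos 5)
Your proof is essentially the paper's: the heart in both cases is a precongruence lemma (the paper's \Cref{reloc:lem:refines_precongruence}), proved by induction on the typing derivation of the context using the open-term compatibility rules, with the fundamental theorem (\Cref{reloc:lem:fundamental}) supplying the reflexive refinements for the non-hole subexpressions of binary frames --- exactly the subtlety you identify. The one place you diverge is the final step: the paper applies its adequacy theorem (\Cref{reloc:lem:refines_adequate}) directly to the closed judgment $\logrel{}{\fillctx\pctx[\expr_1]}{\fillctx\pctx[\expr_2]}{\type'}$ to extract the termination implication, whereas you route through the closed soundness theorem (\Cref{reloc:thm:soundness}) and then instantiate the resulting contextual refinement at the empty context $\Box$. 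That detour is logically sound, but note that in the paper's development \Cref{reloc:thm:soundness} is \emph{not} established independently --- it is the special case of the open-term theorem you are trying to prove --- so taking it as ``already established'' is circular as the development stands. To repair this you would have to prove closed soundness first, and its proof is again precongruence plus adequacy, so the detour buys nothing; the direct appeal to adequacy is the cleaner decomposition.
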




\section{Relational specifications in \reloc}
\label{reloc:sec:atomicity}
Due to its first-class refinement judgments, \reloc can be used to give \emph{relational specifications} to programs.
Similar to Hoare triples, relational specifications abstract away from a program's implementation by expressing its behavior in terms of a pre- and postcondition.
Relational specifications apply to the situation when the expression on the one side of the refinement contains a program subject to specification, while the expression on the other side is arbitrary.
In \Cref{reloc:fig:locks} in \Cref{reloc:sec:tour} we saw an example of a right-hand side relational specifications for locks, which we then used to verify a counter module.

We start this section by describing the general format of non-atomic relational specifications (\Cref{reloc:subsec:simple_relational_specs}).
Non-atomic relational specifications are sufficient to give strong specifications for the right-hand left, but due to the demonic nature of the left-hand side, we often need stronger specifications for the left-hand side (\Cref{reloc:subsec:why_atomic}).
We therefore introduce \emph{logically atomic relation specifications}, which generalize da Rocha Pinto \etal's TaDA-style specifications~\cite{daRochaPinto:dinsdale-young:gardner:2014} (\Cref{reloc:subsec:tada_style}) and Svendsen \etal's HOCAP-style specifications~\cite{HOCAP} (\Cref{reloc:subsec:hocap_style}) from the Hoare-logic setting to the relational setting.
Finally, we show how to use logically atomic specifications to verify a ticket lock (\Cref{reloc:sec:ticket_lock}).

\subsection{Non-atomic relational specifications}
\label{reloc:subsec:simple_relational_specs}

\subsubsection{Right-hand side relational specifications}
\label{reloc:subsec:specs_rhs}

Consider the following implementation of a lock, which we refer to as a \emph{spin lock}:
\begin{align*}
\newlock \eqdef{} & \Lam \unittt. \Ref(\<false>) \\
\acquire \eqdef{} & \Lam \loc. \If \CAS{\loc}{\<false>}{\<true>} then \unittt \Else
                        \acquire\ \loc \\
\release \eqdef{} & \Lam \loc. \loc \gets \<false>
\end{align*}
For this specific implementation, we can prove the rules in \Cref{reloc:fig:locks} in \Cref{reloc:sec:counterproof} by defining the lock predicates as follows:
\[
\isLock{\lkvar}{b} \eqdef \lkvar \in \Loc \ast \lkvar \mapstoS b.
\]
The rules for locks in \Cref{reloc:fig:locks} follow a certain pattern.
For an expression $\expr_2$ that, under precondition $\textcolor{precond}{\propMV}$, reduces to $\val$, with postcondition $\textcolor{postcond}{\propMVB(\var,\val)}$, we have the following rule:
\[
\infer
  {{\color{precond}\propMV} \and
  \All \var,\val. {\color{postcond}\propMVB(\var,\val)} \wand \logrel[\mask]{\Delta}{\expr_1}{\fillctx\lctx[\val]}{\type}}
  {\logrel[\mask]{\Delta}{\expr_1}{\fillctx\lctx[\expr_2]}{\type}}
\]
The postcondition $\propMVB : X \times \Val \to \Prop$ also depends on a type $X$, provided by the provider of the rule.
This rule pattern can be considered a relational version of a Hoare triple for a program on the right-hand side of the refinement judgment.

\subsubsection{Left-hand side relational specifications}
\label{reloc:sec:simpl_specs_lhs}

\begin{figure}[t]
\begin{mathpar}
\inferH{newlock-l}
{R \and \All \lkvar,\, \gname. \islockI(\gname, \lkvar, R) \wand
\logrel{}{\fillctx\lctx[\lkvar]}{\expr_2}{\type}}
{\logrel{}{\fillctx\lctx[\newlock\ \unittt]}{\expr_2}{\type}}
\and
\inferH{is-lock-pers}
{\islockI(\gname, \lkvar, R)}
{\always \islockI(\gname, \lkvar, R)}
\and
\inferH{acquire-l}
{\islockI(\gname, \lkvar, R) \and
(\locked{\gname} \wand R \wand \logrel{}{\fillctx\lctx[\unittt]}{\expr_2}{\type})}
{\logrel{}{\fillctx\lctx[\acquire\ \lkvar]}{\expr_2}{\type}}
\and
\inferH{release-l}
{\islockI(\gname, \lkvar, R) \and \locked{\gname} \and R \and
\logrel{}{\fillctx\lctx[\unittt]}{\expr_2}{\type}}
{\logrel{}{\fillctx\lctx[\release\ \lkvar]}{\expr_2}{\type}}
\end{mathpar}
\caption{Left-hand side relational specification for locks.}
\label{reloc:fig:locks_l}
\end{figure}

We can formulate a similar pattern for programs on the left-hand side of the refinement judgment:
\[
\infer
  {{\color{precond}\propMV} \and
  \All \var,\val. {\color{postcond}\propMVB(\var,\val)} \wand \logrel{\Delta}{\fillctx\lctx[\val]}{\expr_2}{\type}}
  {\logrel{\Delta}{\fillctx\lctx[\expr_1]}{\expr_2}{\type}}
\]
Using this pattern we can state and prove a relational version of the standard separation logic specification for locks, which is shown in \Cref{reloc:fig:locks_l}.
This specification makes use of the lock predicate $\islockI(\gname, \lkvar, R)$, which states that $\lkvar$ protects the resources described by the proposition $R$.
When creating a lock using $\newlock$, the resources $R$ have to be given up, and the persistent lock predicate $\islockI(\gname, \lkvar, R)$ is given in return.
A thread that acquires a lock by calling $\acquire$ gets access to $R$ for the duration of the critical section, and has to give $R$ back when calling $\release$.
The token $\locked{\gname}$, where $\gname$ is a \emph{ghost name} associated to the lock, makes sure that a lock can only be released when it has been acquired.
To prove the left-hand side specification for the spin lock, we define the lock predicate $\islockI(\gname, \lkvar, R)$ following the usual definition in Iris:
\[
\islockI(\gname, \lkvar, R) \eqdef
  \lkvar \in \Loc \ast
  \knowInv{\namesp_{\mathsf{lock}}}{
    (\lkvar \mapstoI \<false> * \locked{\gname} * R) \lor
    (\lkvar \mapstoI \<true>)
  }
\]
This definition uses an Iris invariant to express that the lock is either unlocked ($\lkvar \mapstoI \<false>$), in which case it holds the token $\locked{\gname}$ and the resources $R$, or locked ($\lkvar \mapstoI \<true>$), in which case it holds no resources, since those are held by the thread that acquired the lock.
The token $\locked{\gname}$ is an exclusive resource that is obtained from Iris's ghost theory.

\subsubsection{Left- versus right-hand side relational specifications}
As we have seen in the specifications for the lock, there is an asymmetry between the left- and right-hand side specifications.
The left-hand side specification of $\acquire$ (\ruleref{acquire-l}) can be used regardless of whether the lock is unlocked, whereas the right-hand specification (\ruleref{acquire-r}) can be used solely if the lock is unlocked (\ie if $\isLock{\lkvar}{\<false>}$).
This is due to the demonic nature of the left-hand side and the angelic nature of the right-hand side.
For $\acquire$ on the left-hand side, we have to consider an arbitrary execution, whereas for $\acquire$ on the right-hand side we have to provide an execution ourselves.
That is, for $\acquire$ on the right-hand side we have to show that it actually acquires the lock and reduces to $\unittt$,
which is only possible when the lock is unlocked.
For this reason we use the predicate $\isLock{\lkvar}{b}$, which tracks the state $b$ of the lock.

\subsection{The need for logically atomic specifications}
\label{reloc:subsec:why_atomic}

Recall from \Cref{reloc:sec:invariants_update} that for any primitive (stateful) operation we have a symbolic execution rule that allows the client to access shared resourced stored in an invariant.
For example, the rule \ruleref{rel-store-l} for the store operation is as follows:
\begin{mathpar}
\infer
  {\pvs[\top][\mask] {\left(
    \loc \mapstoI - *
    \later \big(\loc \mapstoI \val \wand
      \logrel[\mask]{\Delta}{\fillctx\lctx [\unittt]}{\expr_2}{\type}\big)
  \right) }}
  {\logrel{\Delta}{\fillctx\lctx [\loc \gets \val]}{\expr_2}{\type}}
\end{mathpar}
Concretely, the update modality $\pvs[\top][\mask]$ in the premise of this rule allows users to use \ruleref{inv-access} to access an invariant for the duration of the operation.
The mask $\mask$ in the refinement judgment $\logrel[\mask]{\Delta}{\fillctx\lctx [\unittt]}{\expr_2}{\type}$ (that appears in the premise of the rule) forces the user to close the invariant at the end of the duration of the operation.
The ability to open an invariant is sound because operations such as store are \emph{physically atomic}---\ie they reduce in one step.
As a consequence of being physically atomic, other threads cannot observe that the invariant has been broken during the execution of the operation.

In contrast, methods of a concurrent program module are typically composed of several operations and hence they are not physically atomic.
For example, consider the increment function $\FGincrement$ of the fine-grained counter module from \Cref{reloc:fig:impl}:
\[
\FGincrement \eqdef{} \Rec {\lvar{inc}} \cntvar =
	\begin{array}[t]{@{} l}
  \Let n = \deref \cntvar in \\
  \If \CAS{\cntvar}{n}{1 + n} then n \Else \lvar{inc}\ \cntvar
  \end{array}
\]
This function is a compound expression that does not reduce to a value in a single step.
Nevertheless, during the execution of this function there is a single instant at which the whole operation actually appears to take the effect---namely the successful reduction of the compare-and-set operation ($\<CAS>$).
This instant is called the \emph{linearization point}.
What it means is that, for an outside observer, the method $\FGincrement$ behaves \emph{as if} it was atomic, and we wish to express that in this function's relational specification.

This phenomenon is called \emph{logical atomicity} in the literature, and has been studied extensively in the context of Hoare-style logics~\cite{jacobs:piessens:2011,daRochaPinto:dinsdale-young:gardner:2014,HOCAP,iris1,irisProph}.
In the upcoming subsections we will how to generalize the concept of logical atomicity to the relational setting, and how that gives rise to \emph{logically atomic relational specifications}.
Concretely, we generalize da Rocha Pinto \etal's TaDA-style specifications~\cite{daRochaPinto:dinsdale-young:gardner:2014} (\Cref{reloc:subsec:tada_style}) and Svendsen \etal's HOCAP-style specifications~\cite{HOCAP} (\Cref{reloc:subsec:hocap_style}) from the Hoare-logic setting to the relational setting.
Establishing the formal comparison between the two styles is out of the scope of this paper.
Rather, we demonstrate that both approaches can be applied to the context of relational specifications.

\subsection{TaDA-style relational specifications}
\label{reloc:subsec:tada_style}
\subsubsection{Formulating TaDA-style specifications}
\label{reloc:subsec:tada_style_formulating}

We take inspiration from the encoding of TaDA-style logically atomic Hoare triples in Iris~\cite{iris1} and assign the following logically atomic relational specification to $\FGincrement$:
\begin{mathpar}
\inferH{inc-i-l-tada}
  {\always
  \pvs[\top][\mask] \Exists n. \cntvar \mapstoI n *
  \left(
  {\begin{array}{@{} l @{}}
  \big(\cntvar \mapstoI n \vsW[\mask][\top] \TRUE\big)
  \mathrel{\land} \\
  \big(\cntvar \mapstoI (n + 1) \wand \logrel[\mask]{}{\fillctx\lctx[n]}{\expr}{\type}\big)
  \end{array}}
  \right)}
  {\logrel{}{\fillctx\lctx[\FGincrement\ \cntvar]}{\expr}{\type}}
\end{mathpar}

Contrary to the non-atomic specification, we do not have $\cntvar \mapstoI n$ as a premise of the rule directly, but instead the premise contains a way of obtaining $\cntvar \mapstoI n$.
The typical way of obtaining $\cntvar \mapstoI n$ is by accessing an invariant, which is formally done by using the update modality $\pvs[\top][\mask]$ in the premise combined with \ruleref{inv-access} from \Cref{reloc:fig:rules}.

To justify the remaining part of the premise of the rule we need to take a closer took at the behavior of $\FGincrement\ \cntvar$, whose implementation (\Cref{reloc:fig:impl}) we recall to be as follows:
\[
\FGincrement \eqdef{} \Rec {\lvar{inc}} \cntvar =
	\begin{array}[t]{@{} l}
  \Let n = \deref \cntvar in \\
  \If \CAS{\cntvar}{n}{1 + n} then n \Else \lvar{inc}\ \cntvar
  \end{array}
\]
The compare-and-set operation ($\<CAS>$) can either succeed or fail.
If it succeeds, then we have managed to update our resources to $\cntvar \mapstoI (n + 1)$, and we can proceed with proving $\logrel[\mask]{}{\fillctx\lctx[n]}{\expr}{\type}$ under that premise.
This explains the $(\cntvar \mapstoI (n + 1) \wand \logrel[\mask]{}{\fillctx\lctx[n]}{\expr}{\type})$ clause.
If, however, the compare-and-set fails, then we need to be able to restart the whole computation of $\FGincrement\ \cntvar$.
For that we must be able to return $\cntvar \mapstoI n$ to the invariant.
Hence the $(\cntvar \mapstoI n \vsW[\mask][\top] \TRUE)$ clause.
(The same clause is used for performing operations that do not modify the state, such as dereferencing.)

Finally, we know that the computation can either succeed or be restarted---but not both.
We have to accommodate for both situations, just not at the same time.
Hence the last two clauses described here are connected by an intuitionistic conjunction ($\wedge$), instead of the separating conjunction ($*$).

\subsubsection{Using TaDA-style specifications}
\label{reloc:subsec:tada_style_using}
We use the logically atomic relational specification \ruleref{inc-i-l-tada} to prove the refinement that we have seen in \Cref{reloc:sec:tour:invariants}.
The new proof is more modular since it does not appeal to the definition of $\FGincrement$.
The refinement that we want to prove is as follows:
\[
\knowInv{\counterN}{\counterinv} \wand{}
\logrel{}{\FGincrement\ c_i}{\CGincrement\ c_s\ l}{\tint}
\]
Recall that $\counterinv \eqdef {\Exists n \in \nat. c_i \mapstoI n * c_s \mapstoS n * \isLock{\lkvar}{\<false>}}$.
To prove this goal, we use \ruleref{inc-i-l-tada}.
After introducing the persistence modality (using \ruleref{always-intro}, which is allowed, because there are no ephemeral assumptions in our context), this gives the following new goal (under the assumption $\knowInv{\counterN}{\counterinv}$):
\begin{equation*}
  \pvs[\top][\top \setminus \counterN] \Exists n. c_i \mapstoI n \ast
  \left(
  \begin{array}{@{} l @{}}
  \big(c_i \mapstoI n \vsW[\top \setminus \counterN][\top] \TRUE\big)
  \mathrel{\land} \\
  \big(c_i \mapstoI (n + 1) \wand \logrel[\top \setminus \counterN]{}{n}{\CGincrement\ c_s\ l}{\tint}\big)
  \end{array}
  \right)
\end{equation*}
At this point we can open up the invariant $\counterinv$ (using \ruleref{inv-access}), and thereby introduce the update modality.
The contents of the invariant provides us with a witness for the existential quantifier and allows us to discharge $\cntvar \mapstoI n$.
We are left with proving the conjunction:
\begin{equation*}
  \big(c_i \mapstoI n \vsW[\top \setminus \counterN][\top] \TRUE\big)
  \quad \land \quad
  \big(c_i \mapstoI (n + 1) \wand \logrel[\top \setminus \counterN]{}{n}{\CGincrement\ c_s\ l}{\tint}\big)
\end{equation*}
under the assumption of the unused resources $\isLock{l}{\<false>}$ and $c_s \mapstoS n$ from the invariant, and the invariant closing resource $\later \counterinv \vsW[\top \setminus \counterN][\top] \TRUE$.

The first conjunct corresponds to the case in which we close the invariant without modifying anything in our current context (\ie the compare-and-set has failed).
It follows directly from the invariant closing resource.
It thus remains to prove the second conjunct (\ie the compare-and-set has succeeded), which means we should prove
$
\logrel[\top \setminus \counterN]{}{n}{\CGincrement\ c_s\ l}{\tint}
$
under the assumptions $c_i \mapstoI (n+1)$ and $\isLock{l}{\<false>}$ and $c_s \mapstoS n$ and $\later \counterinv \vsW[\top \setminus \counterN][\top] \TRUE$.
At this point we finish the proof by symbolically executing $\CGincrement\ c_s\ l$ on the right-hand side before closing the invariant using invariant closing resource.

\subsubsection{General format of TaDA-style specifications}
\label{reloc:subsec:tada_general_form}
The general format of logically atomic rules for logical refinements is the following:
\begin{mathpar}
  \inferH{}
  {\propMVC \and
  \always \pvs[\top][\mask]
  \Exists x. \textcolor{precond}{\propMV(x)} *
  \left({
  \begin{array}{@{} l @{}}
  \big(\textcolor{precond}{\propMV(x)} \vsW[\mask][\top] \TRUE\big)
  \mathrel{\land} \\
  \big(\All \val.
    \textcolor{postcond}{\propMVB(x,\val)} * \propMVC \wand
    \logrel[\mask]{}{\fillctx\lctx[\val]}{\expr_2}{\type}\big)
  \end{array}
  }\right)}
  {\logrel{}{\fillctx\lctx[\expr_1]}{\expr_2}{\type}}
\end{mathpar}
Here, $\propMV : X \to \Prop$ is a predicate describing consumed resources, and $\propMVB : X \times \Val \to \Prop$ is a predicate describing produced resources, both dependent on a type $X$ supplied by the provider of the rule (\eg a library that exports the program $\expr_1$).
This parameter $X$ is selected on per-specification basis.
For example, for the counter module $X$ is going to be the type of natural numbers.

We include a frame $\propMVC$, which can be chosen by the client, for the following reason.
The second premise of the rule resides below a persistence modality.
Whenever we prove a goal of the form $\always P$ we must prove $P$ using only persistent resources, and thus have to throw all the ephemeral resources away (see \ruleref{always-intro} in \Cref{reloc:sec:persistence}).
However, we do not want to throw away all the ephemeral resources that we have for eternity (as they might be needed to close invariants afterwards or to proceed otherwise with the proof), so we give them up only temporarily, by collecting them in $\propMVC$.

\subsubsection{Proving TaDA-style specifications}
Following the general scheme, we now state and prove the TaDA-style specification of our increment function:
\begin{mathpar}
\inferH{inc-i-l-tada-gen}
  {\propMVC \and
  \pvs[\top][\mask] \always \Exists n. \cntvar \mapstoI n *
  \left(
  {\begin{array}{@{} l @{}}
  \big(\cntvar \mapstoI n \vsW[\mask][\top] \TRUE\big)
  \mathrel{\land} \\
  \big(\cntvar \mapstoI (n + 1) \ast \propMVC \wand \logrel[\mask]{}{\fillctx\lctx[n]}{\expr}{\type}\big)
  \end{array}}
  \right)}
  {\logrel{}{\fillctx\lctx[\FGincrement\ \cntvar]}{\expr}{\type}}
\end{mathpar}

To prove this specification, we proceed by L\"ob induction and symbolic execution.
At the point when we need to symbolically dereference $\cntvar$ we apply \ruleref{rel-load-l}.
We then use the update that we have as a premise of the specification to obtain $\cntvar \mapstoI n$ for some $n$.
After providing $\cntvar \mapstoI n$ for the load operation, we use the first conjunct
$
  \cntvar \mapstoI n * \vsW[\mask][\top] \TRUE
$
to restore the mask on the refinement judgment.

After that we have to symbolically execute the compare-and-set operation; we apply \ruleref{rel-cas-l} and use the update that we have as a premise again to obtain $\cntvar \mapstoI m$ for some $m$, as needed for \ruleref{rel-cas-l}.
If $m \neq n$, then the compare-and-set operation has failed, and we can restart the proof first by restoring the mask on the refinement judgment (using the closing update), and then using the L\"ob induction hypothesis.
If $m = n$, then the compare-and-set operation has succeeded, and the points-to connective is updated to $\cntvar \mapstoI (n+1)$.
Then we can use the second conjunct
$
\cntvar \mapstoI (n + 1) \ast \propMVC \wand \logrel[\mask]{}{\fillctx\lctx[n]}{\expr}{\type}
$
to arrive at the exact conclusion that we need: $\logrel[\mask]{}{\fillctx\lctx[n]}{\expr}{\type}$.

\subsection{HOCAP-style relational specifications}
\label{reloc:subsec:hocap_style}

We now present another form of logically atomic relational specifications---\emph{HOCAP-style logical atomic relational specifications}, which are based on the logically atomic specifications by Svendsen \etal~\cite{HOCAP} in the eponymous logic.
Contrary to TaDA-style specifications, which come in a precisely specified format (\Cref{reloc:subsec:tada_general_form}), HOCAP-style specifications do not have a precise format.
This provides the flexibility that not only can they be used to represent linearization points, but they can also be used to represent arbitrary observable interactions with the abstract state.
This flexibility allows us to give strong specifications to non-linearizable methods (\Cref{reloc:subsec:hocap_nonlin_operations}), which we use in the ticket lock refinement proof (\Cref{reloc:sec:ticket_lock}).

\subsubsection{Formulating HOCAP-style specifications}
\label{reloc:subsec:counter_hocap_specs}

\begin{figure}
\raggedright
\textbf{Rules for abstract predicates}:
\medskip
\begin{mathpar}
\inferH{cnt-agree}
  {\cntAuth{\gname}{n} \and \cnt{\gname}{q}{m}}
  {n = m}
\and
\inferH{cnt-agree'}
  {\cnt{\gname}{q_1}{n} \and \cnt{\gname}{q_2}{m}}
  {n = m}
\and
\inferHB{cnt-combine}
  {\cnt{\gname}{q_1}{n} \ast \cnt{\gname}{q_2}{n}}
  {\cnt{\gname}{q_1 + q_2}{n}}
\and
\inferH{cnt-update}
  {\cntAuth{\gname}{n} \and \cnt{\gname}{1}{m}}
  {\pvs \cntAuth{\gname}{k} \ast \cnt{\gname}{1}{k}}
\and
\inferH{Cnt-persistent}
  {\Cnt{\cntvar}{\gname}}
  {\always \Cnt{\cntvar}{\gname}}
\end{mathpar}

\bigskip
\textbf{Relational specification}:
\medskip
\begin{mathparpagebreakable}
\inferH{new-i-l-hocap}
  {\All \cntvar\,\gname. \Cnt{\cntvar}{\gname} \ast \cnt{\gname}{1}{n} \wand{} \logrel{}{\fillctx\lctx[\cntvar]}{\expr_2}{\type}}
  {\logrel{}{\fillctx\lctx[\Ref(n)]}{\expr_2}{\type}}
\and
\inferH{inc-i-l-hocap}
  {\mask \cap \namecl{\namesp} = \emptyset \and
   \Cnt{\cntvar}{\gname} \and \\
   (\All n. \cntAuth{\gname}{n}
   \vsW[\top\setminus \namesp][\top\setminus \namesp \setminus \mask]
   \cntAuth{\gname}{n+1} \ast{} \logrel[\top\setminus\mask]{}{\fillctx\lctx[n]}{\expr_2}{\type})}
  {\logrel{}{\fillctx\lctx[\FGincrement\ \cntvar]}{\expr_2}{\type}}
\and
\inferH{read-i-l-hocap}
  {\mask \cap \namecl{\namesp} = \emptyset \and
   \Cnt{\cntvar}{\gname} \and \\
   (\All n. \cntAuth{\gname}{n}
   \vsW[\top\setminus \namesp][\top\setminus  \namesp \setminus \mask]
   \cntAuth{\gname}{n} \ast{} \logrel[\top\setminus\mask]{}{\fillctx\lctx[n]}{\expr_2}{\type})}
  {\logrel{}{\fillctx\lctx[\counterread\ \cntvar]}{\expr_2}{\type}}
\end{mathparpagebreakable}
\caption{HOCAP-style logically atomic relational specification for a fine-grained concurrent counter.}
\label{reloc:fig:hocap_cnt_model}
\end{figure}

Let us consider the HOCAP-style specification in \Cref{reloc:fig:hocap_cnt_model} for the fine-grained concurrent counter from \Cref{reloc:fig:impl} in \Cref{reloc:sec:intro}.
Contrary to the TaDA-style specification, the HOCAP-style specification does not expose the underlying state of the counter (\ie $\loc \mapstoI n$) directly, but instead provides an abstract view of the state through abstract predicates.

The persistent predicate $\Cnt{\cntvar}{\gname}$ asserts that the value $\cntvar$ represents a counter.
The specification is parameterized by a namespace $\namesp$ for the internal invariants associated with the specification.
The ghost name $\gname$ is used to link $\cntvar$ with the predicates $\cnt{\gname}{q}{n}$ and $\cntAuth{\gname}{n}$, which describe the abstract state of the counter.
The predicate $\cnt{\gname}{q}{n}$ provides the \emph{client view} of the abstract state of the counter.
It is similar to the fractional heap points-to connective from separation logic---it associates a value (a natural number $n$) to the counter, and can be split and combined according to the fractional component $q$ (\ruleref{cnt-agree'}, \ruleref{cnt-combine}).
The predicate $\cntAuth{\gname}{m}$ provides the \emph{module view} of the abstract state of the counter.
It agrees with the client view (\ruleref{cnt-agree}) and can be used together with $\cnt{\gname}{1}{n}$ to update the value associated to the counter (\ruleref{cnt-update}).

Ownership of the module view predicate $\cntAuth{\gname}{m}$ is given to the user only during the execution of the counter operations.
Consider, for example, the specification \ruleref{inc-i-l-hocap} for the atomic increment function $\FGincrement$.
From the client's point of view, there is only one place where $\FGincrement$ observably interacts with the abstract state of the counter, namely during its linearization point.
For this point of access, the user has to provide the update:
\[
\cntAuth{\gname}{n}
\vsW[\top\setminus \namesp][\top\setminus  \namesp \setminus \mask]
\cntAuth{\gname}{n+1} \ast{} \logrel[\top\setminus\mask]{}{\fillctx\lctx[n]}{\expr_2}{\type}.
\]
The user is given the module view $\cntAuth{\gname}{n}$ of the counter, and has to update it to $\cntAuth{\gname}{n+1}$.
For that, the user has to appeal to \ruleref{cnt-update} and has to provide $\cnt{\gname}{q}{n}$ themselves (either as an immediate resource or from an invariant in $\mask$, which can be opened thanks to the update modality).
After the abstract state is updated, the user has to prove the refinement judgment $\logrel[\top\setminus\mask]{}{\fillctx\lctx[n]}{\expr_2}{\type}$.
Similar to the TaDA-style specifications, the mask on the refinement is set to $\top\setminus\mask$, allowing the user to perform some reasoning on the right-hand side before closing all the invariants from $\mask$.

\subsubsection{Using HOCAP-style specifications}
We use the HOCAP-style logically atomic relational specification from \Cref{reloc:fig:hocap_cnt_model} to prove the refinement that we have seen in \Cref{reloc:sec:tour:invariants}:
\[
\logrel{}{\FGcounter}{\CGcounter}
	{(\tunit \to \tint) \times (\tunit \to \tint)}.
\]
Since the HOCAP-style specifications are stated in terms of abstract predicates, instead of the $\loc \mapstoI n$ connective, we need a slightly different invariant than the one we used for the proof using the TaDA-style specification in \Cref{reloc:subsec:tada_style_using}, namely:
\[
\CntN{c_i}{\gname}{\namesp\!.{\mathsf{cnt}}} \ast
\knowInv{\namesp\!.{\mathsf{inv}}}{\Exists n \in \nat. \cnt{\gname}{1}{n} * c_s \mapstoS n * \isLock{\lkvar}{\<false>}}.
\]
After having established this invariant, the refinement proofs for the increment and read methods proceed similar to the corresponding TaDA-style proofs (\Cref{reloc:subsec:tada_style_using}), except that in the increment case the user has to use \ruleref{cnt-update} alongside \ruleref{inc-i-l-hocap} in order to update the ghost state $\cnt{\gname}{1}{n}$ accordingly.
We do not give the proof here, and direct the reader to the accompanying Coq mechanization.

In \Cref{reloc:sec:ticket_lock} we will another example of a client using the HOCAP-style specification for the counter.

\subsubsection{Proving HOCAP-style specifications}
We discuss how to prove that the implementation of the fine-grained counter meets the HOCAP-style specifications.
To do so, we first use Iris's ghost theory to define the predicates $\cnt{\gname}{q}{n}$ and $\cntAuth{\gname}{n}$ (the details of this definition are omitted).
We then define the predicate $\Cnt{\cntvar}{\gname}$, which provides the internal invariant of the module:
\[
\Cnt{\cntvar}{\gname} \eqdef \cntvar \in \Loc \ast
\knowInv{\namesp}{\Exists n \in \mathbb{N}. \cntvar \mapstoI n \ast \cntAuth{\gname}{n}}
\]
This invariant states that the physical value $n$ of $\cntvar$ corresponds to the logical value $n$ of the predicate $\cntAuth{\gname}{n}$.
To see how this invariant is used, let us consider the proof of \ruleref{inc-i-l-hocap} for the $\FGincrement$ operation.
Since $\FGincrement$ is defined recursively, we prove this rule by L\"ob induction.
We proceed by symbolically executing the left-hand side, accessing the invariant $\namesp$ to dereference $\cntvar$ for some value $n$.
It then remains to show:
\[
  \logrel{}{\If \CAS{\cntvar}{n}{1 + n} then n \Else \FGincrement\ \cntvar}{\expr_2}{\type}.
\]
At this point we use the atomic symbolic execution rule for compare-and-set \ruleref{rel-cas-l} (with $\mask = \namesp$).
We introduce the update modality $\pvs[\top][\top \setminus \namesp]$ we obtain by accessing the invariant $\namesp$ using Iris's strong invariant access rule \ruleref{inv-access-strong}, which is needed because we need to access invariants in a non-well-bracketed way:
\begin{mathpar}
\inferH{inv-access-strong}
{\namecl{\namesp} \subseteq \mask}
{\knowInv{\namesp}{P} \vsW[\mask][\mask \setminus \namesp]
\later P \ast (\All \mask'. \later P \vsW[\mask'][\mask' \cup  \namesp] \TRUE)}
\end{mathpar}
It remains to consider two cases.
If the compare-and-set ($\<CAS>$) has failed, we close the invariant (by setting $\mask' = \top \setminus \namesp$) and appeal to the induction hypothesis.
If the compare-and-set has succeeded, we are left to show the following:
\[
  \dots \ast \cntvar \mapstoI (n+1) \ast \cntAuth{\gname}{n} \wand{} \logrel[\top \setminus  \namesp]{}{n}{\expr_2}{\type}.
\]
We first use the update $\vsW[\top\setminus \namesp][\top\setminus \namesp \setminus \mask]$ that is provided by the premise of the rule to update $\cntAuth{\gname}{n}$ into $\cntAuth{\gname}{n+1}$.
This moreover provides a proof of $\logrel[\top \setminus \mask]{}{n}{\expr_2}{\type}$.
At this point our goal becomes
$
  \logrel[\top \setminus  \namesp \setminus \mask]{}{n}{\expr_2}{\type}
$.
We close the invariant $\namesp$ (by setting $\mask' = \top \setminus  \namesp \setminus \mask$) and restore the mask on the refinement proposition in our goal, resulting in
$
  \logrel[\top \setminus \mask]{}{n}{\expr_2}{\type}
$, which is exactly what we obtained from the update $\vsW[\top\setminus \namesp][\top\setminus \namesp \setminus \mask]$.

\subsubsection{HOCAP-style specifications for non-linearizable operations}
\label{reloc:subsec:hocap_nonlin_operations}
Using HOCAP-style specifications we can also specify operations that are not linearizable.
Consider the following ``weak increment'' function that we can add to the counter module:
\[
\wkincr \eqdef{} \Lam \cntvar. \cntvar \gets (\deref \cntvar + 1)
\]
The function increments the value in the location $\cntvar$ non-atomically.
What kind of specification can we give to $\wkincr$?
To answer this we have to examine what the update $\vsW$ represents in the HOCAP-style specifications.
In the previous examples with linearizable functions, the updates represented observations about the abstract state that the clients could make, and they corresponded to the linearization points.
But there is no reason why we should pin them to linearization points only.
Rather, we can let the update correspond to any operation that is observable through the abstract state.
In $\wkincr$ there are two points where such operations happen, which we can represent through two nested updates:
\begin{mathpar}
\inferH{cnt-wk-incr-l}
  {\mask \cap \namecl{\namesp} = \emptyset \and
  \Cnt{\cntvar}{\gname} \and \\
  \All n. \cntAuth{\gname}{n} \vsW[\top\setminus \namesp]
  \cntAuth{\gname}{n} \ast{}
  (\All m.
    {\begin{array}[t]{@{} l}
    \cntAuth{\gname}{m} \vsW[\top\setminus \namesp][\top\setminus  \namesp \setminus \mask] \\
    \cntAuth{\gname}{n+1} \ast{} \logrel[\top\setminus\mask]{}{\fillctx\lctx[\unittt]}{\expr_2}{\type})
    \end{array}}}
  {\logrel{}{\fillctx\lctx[\wkincr\ \cntvar]}{\expr_2}{\type}}
\end{mathpar}
The first update binds the value $n$, which is obtained from the initial read operation $\deref \cntvar$.
In the conclusion of this update the client needs to return the $\cntAuth{\gname}{n}$ predicate, as in the specification \ruleref{read-i-l-hocap}.
In addition to that predicate, the client has to provide the second update, in which $\cntAuth{\gname}{m}$ has to be updated to $\cntAuth{\gname}{n+1}$, corresponding to the assignment $\cntvar \gets n+1$.
The value $m$ corresponds to the intermediate state of the counter, which might have changed in between the dereferencing of $\cntvar$ and the assignment to it.
The presence of two updates differentiates methods that have a linearization point, such as $\FGincrement$, and non-linearizable methods, such as $\wkincr$.

In the next section we will see how a client might use the specification \ruleref{cnt-wk-incr-l}.

\subsection{Ticket lock refinement from HOCAP-style specs}
\label{reloc:sec:ticket_lock}
\begin{figure}
\begin{align*}
\newlockTicket \eqdef{} & \Lam \unittt. (\Ref(0),\Ref(0)) \displaybreak[4] \\
\acquireTicket \eqdef{} & \Lam (\llo, \lln).
  \Let n = \FGincrement\ \lln in
  \textlog{wait\_loop}\ n\ \llo \\
\textlog{wait\_loop} \eqdef{}& \Lam n\ \llo.
   \If {(n = \counterread\ \llo)} then \unittt \Else \textlog{wait\_loop}\ n\ \llo \\
\releaseTicket \eqdef{}& \Lam (\llo,\lln). \wkincr\ \llo
\end{align*}
\caption{Ticket lock implementation.}
\label{reloc:fig:ticket_lock_impl}
\end{figure}

We show how our HOCAP-style relational specifications for the fine-grained concurrent counter (\Cref{reloc:fig:hocap_cnt_model}) is used to prove that a \emph{ticket lock} refines a \emph{spin lock} (or, rather, that a ticket lock refines any lock satisfying the specification in \Cref{reloc:fig:locks}).
The proof in this section demonstrates several important features of \reloc.
First, it demonstrates compositionality of proofs in \reloc both by employing relational specifications for the left- and right-hand sides, and by reducing the refinement proof of a program module into separate reusable refinement proofs of the module functions.
Second, the proof highlights the integration of Iris ghost state to facilitate CAP-style~\cite{CAP} reasoning with abstract predicates.

A ticket lock~\cite[Section 2.2]{mellor:crummey:scott:1991} is a ticket-based data structure for mutual exclusion, which is fair---threads racing to enter a critical section will gain access to it in the order of arrival at the critical section.
The implementation of the ticket lock is given in \Cref{reloc:fig:ticket_lock_impl}.
The two locations associated with the lock, $\llo$ and $\lln$, point to the identifiers of the current owner of the lock, and to the total number of issued tickets, respectively.
When a thread wants to enter a critical section using the $\acquireTicket$ function, it first requests a new ticket (by atomically increasing the value of $\lln$ using the $\FGincrement$ function), and then spins until the value of the current owner of the lock matches the ticket number (using $\textlog{wait\_loop}$).

The function $\releaseTicket$ uses the weak increment $\wkincr$ (\Cref{reloc:subsec:hocap_nonlin_operations}) on the location $\llo$.
It does not need to use an atomic increment (\ie $\FGincrement$), because, if the lock is used in a well-bracketed manner, only the owner of the lock will be calling the $\releaseTicket$ function.

Concretely, the refinement that we wish to show is the following:
\begin{align*}
\logrel{}{{}& \<pack> (\newlockTicket,\acquireTicket,\releaseTicket)}
            {\\ &\<pack> (\newlock,\acquire,\release)}{\texists{\Ret \tvar. (\tunit \to \tvar) \times (\tvar \to \tunit) \times (\tvar \to \tunit)}}.
\end{align*}
Here, $\newlock$, $\acquire$ and $\release$ are any operations that satisfy the relational specification from \Cref{reloc:fig:locks} (for example, the spin lock from \Cref{reloc:subsec:specs_rhs}).

\paragraph{Proof outline.}
To prove the refinement above we employ our general strategy for proving refinements for stateful program modules in \reloc:
\begin{enumerate}
\item We define an invariant $\lockInv$ linking together the underlying representations of each individual pair of locks, which we use to define a witness for the existential type $\tvar$.
\item We prove the refinements for each method in the signature.
\item Finally, we combine those proofs together into a module refinement proof.
  This is what we refer to as a \emph{component-wise} refinement proof.
\end{enumerate}

We stress that to carry out the proof we neither need to refer to the implementation of the fine-grained concurrent counter (on the left-hand side), nor to the implementation of the spin lock (on the right-hand side).
Rather, we only refer to the HOCAP-style specification for the fine-grained counter and the relational specification for the spin lock.

\paragraph{Proof of the refinement.}

\begin{figure}[t]
\begin{mathpar}
\inferH{newIssuedTickets}
  {}
  {\pvs \Exists \gname. \issuedTickets{\gname}{0}}
\and
\inferH{issueNewTicket}
  {\issuedTickets{\gname}{m}}
  {\pvs \issuedTickets{\gname}{m+1} \ast \ticket{\gname}{m}}
\and
\inferH{ticket-nondup}
  {\ticket{\gname}{n} \and \ticket{\gname}{n}}
  {\FALSE}
\end{mathpar}
\caption{The ticket ghost theory.}
\label{reloc:fig:ticket_ghost}
\end{figure}

To match up the physical representation of tickets in the lock we use Iris's ghost theory to define abstract predicates tracking the tickets.
We will use two ghost predicates: $\issuedTickets{\gname}{m}$ saying that $m$ tickets have been issued in total, and $\ticket{\gname}{n}$ representing the $n$-th ticket.
The predicates satisfy the rules in \Cref{reloc:fig:ticket_ghost}.

To prove the refinement of lock modules, we need to pick a relation (serving as the interpretation for the witness $\tvar$ of the existential type) that links the two modules together.
We use the the relation $\lockInt$ defined as follows:
\begin{align*}
 \lockInv_{\gname}(\loname,\lnname,\lkvar) \eqdef{} &
  \Exists (o\, n : \nat)\, (b : \mathbb{B}).
  \begin{array}[t]{@{} l}
  \cnt{\loname}{1}{o} \ast \cnt{\lnname}{1}{n} \ast \isLock{\lkvar}{b} \mathrel{*} \\
  \issuedTickets{\gname}{n} \ast (\ticket{\gname}{o} \lor b = \<false>)
  \end{array} \\
\lockInt((\llo, \lln), \lkvar) \eqdef{}&
  \Exists \loname,\lnname,\gname.
  \begin{array}[t]{@{} l}
  \CntN{\llo}{\loname}{\lonamesp} \ast
  \CntN{\lln}{\lnname}{\lnnamesp} \ast \\
  \knowInv{\locknamesp}{\lockInv_{\gname}(\loname,\lnname, \lkvar)}
  \end{array}
\end{align*}
Here, $(\llo, \lln)$ is the ticket lock on the left-hand side, and $\lkvar$ is the specification lock on the right-hand side.
The $\lockInt$ relation states that $\llo$ and $\lln$ are concurrent counters with ghost names $\loname$ and $\lnname$, that satisfy the invariant $\lockInv$.
This invariant describes the relation between the values representing two locks.
It states that the values of the counters $\llo$ and $\lln$ are $o$ and $n$, respectively, and that exactly $n$ tickets have been issued.
Furthermore, the right-hand side lock $\lkvar$ is locked iff the ticket $\ticket{\gname}{o}$ of the current owner of the lock is in the invariant; that is, $\ticket{\gname}{o}$ was given up by a thread that acquired the lock.

Using \ruleref{rel-pack} we subdivide the main refinement proof into three refinements for the functions that constitute the lock module:
\begin{enumerate}
\item $\logrel{\dinssingle{\tvar}{\lockInt}}{\newlockTicket}{\newlock}{\tunit \to \tvar}$;
\item $\logrel{\dinssingle{\tvar}{\lockInt}}{\acquireTicket}{\acquire}{\tvar \to \tunit}$;
\item $\logrel{\dinssingle{\tvar}{\lockInt}}{\releaseTicket}{\release}{\tvar \to \tunit}$.
\end{enumerate}

\begin{prop}
$\logrel{\dinssingle{\tvar}{\lockInt}}{\newlockTicket}{\newlock}{\tunit \to \tvar}$.
\end{prop}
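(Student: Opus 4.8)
The plan is to reduce the goal to allocating the internal invariant underlying $\lockInt$ in its initial unlocked state, and then to read off $\lockInt$ as the value interpretation at the existential witness $\tvar$. Since both sides are values (the right-hand side being the abstract lock value), I would first apply \ruleref{rel-return} followed by \ruleref{val-arr}, reducing the goal to proving, under the persistence modality (which we may introduce since the context holds no ephemeral resources),
\[
  \All \valB_1, \valB_2.\ \logrelV{\tunit}{\Delta}{\valB_1}{\valB_2} \wand \logrel{\Delta}{\newlockTicket\ \valB_1}{\newlock\ \valB_2}{\tvar},
\]
where $\Delta \eqdef \dinssingle{\tvar}{\lockInt}$. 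By \ruleref{val-unit'} the arguments are forced to be $\unittt$, so the remaining goal is $\logrel{\Delta}{\newlockTicket\ \unittt}{\newlock\ \unittt}{\tvar}$.

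Next I would symbolically execute both sides. On the left I would $\beta$-reduce with \ruleref{rel-pure-l}, leaving $(\Ref(0), \Ref(0))$, and then apply the HOCAP counter-allocation rule \ruleref{new-i-l-hocap} (in its namespace-parametric form) once for each $\Ref(0)$, with internal namespaces $\lonamesp$ and $\lnnamesp$ respectively. This yields fresh ghost names $\loname, \lnname$, the persistent predicates $\CntN{\llo}{\loname}{\lonamesp}$ and $\CntN{\lln}{\lnname}{\lnnamesp}$, and the client views $\cnt{\loname}{1}{0}$ and $\cnt{\lnname}{1}{0}$, with the left expression now reduced to the pair $(\llo, \lln)$. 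On the right I would apply the lock specification \ruleref{newlock-r} to $\newlock\ \unittt$, obtaining a lock value $\lkvar$ together with $\isLock{\lkvar}{\<false>}$. The goal is then $\logrel{\Delta}{(\llo, \lln)}{\lkvar}{\tvar}$.

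I would then build the invariant. Using \ruleref{newIssuedTickets} I obtain a ghost name $\gname$ and $\issuedTickets{\gname}{0}$, and with \ruleref{rel-inv-alloc'} I allocate $\knowInv{\locknamesp}{\lockInv_\gname(\loname, \lnname, \lkvar)}$. The required instance of $\lockInv_\gname$ is discharged by choosing $o = n = 0$ and $b = \<false>$: the conjuncts $\cnt{\loname}{1}{0}$, $\cnt{\lnname}{1}{0}$, $\isLock{\lkvar}{\<false>}$ and $\issuedTickets{\gname}{0}$ are exactly the resources in hand, and the disjunction $\ticket{\gname}{o} \lor b = \<false>$ holds by its right disjunct since the lock starts unlocked. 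Finally, since the interpretation at the type variable is $\lockInt$, I would conclude with \ruleref{rel-return} and \ruleref{val-var'}, which reduce the goal to establishing $\lockInt((\llo, \lln), \lkvar)$; this is witnessed by $\loname, \lnname, \gname$ together with the two persistent counter predicates and the freshly allocated invariant.

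Little here is genuinely hard, since this is the simplest of the three method refinements and the work is mostly bookkeeping. The one point requiring care is the namespace discipline: I must ensure the counter namespaces $\lonamesp$ and $\lnnamesp$ and the invariant namespace $\locknamesp$ are pairwise disjoint sublists of $\namesp$, so that the two internal counter invariants and the lock invariant coexist without reentrancy clashes, and I must check that $\lockInt$ is persistent (which it is, being built solely from the persistent predicates $\CntN{}{}{}$ and an invariant assertion) so that introducing $\always$ at the first step is sound.
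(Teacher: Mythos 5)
Your proposal is correct and follows essentially the same route as the paper: reduce to $\logrel{\dinssingle{\tvar}{\lockInt}}{\newlockTicket\ \unittt}{\newlock\ \unittt}{\tvar}$ via the function-type structural rule (your \ruleref{rel-return} plus \ruleref{val-arr} is exactly what \ruleref{rel-rec} packages), symbolically execute both sides using \ruleref{new-i-l-hocap} twice and \ruleref{newlock-r}, allocate $\knowInv{\locknamesp}{\lockInv_\gname(\loname,\lnname,\lkvar)}$ with $o=n=0$ and $b=\<false>$, and conclude with \ruleref{rel-return}. Your extra remarks on namespace disjointness and persistence of $\lockInt$ are sound bookkeeping that the paper leaves implicit.
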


\begin{proof}
By rule \ruleref{rel-rec} it suffices to show
$\logrel{\dinssingle{\tvar}{\lockInt}}{\newlockTicket\ \unittt}{\newlock\ \unittt}{\tvar}$.
By applying the symbolic execution rules and \ruleref{new-i-l-hocap} we are left with the goal:
\begin{align*}
  & \CntN{\lln}{\lnname}{\lnnamesp} \ast \CntN{\llo}{\loname}{\lonamesp} \ast{} \\
  & \cnt{\loname}{1}{0} \ast \cnt{\lnname}{1}{0} \ast \isLock{\lkvar}{\<false>}
  \wand \logrel{\dinssingle{\tvar}{\lockInt}}{(\llo, \lln)}{\lkvar}{\tvar}.
\end{align*}
From the premises we can allocate the invariant $\lockInv_{\gname}(\loname, \lnname, \lkvar)$, and 
  we can obtain $\lockInt((\llo, \lln), \lkvar)$.
We finish the proof with \ruleref{rel-return}.
\end{proof}

To prove the $\acquireTicket$ refinement we need the following helper lemma.
\begin{lem}
\label[lem]{reloc:lem:wait_loop}
$\ticket{\gname}{m} \proves
\logrel{\dinssingle{\tvar}{\lockInt}}
  {\mathsf{wait\_loop}\ m\ \llo}{\acquire\ \lkvar}{\tunit}$, provided
we have
$\knowInv{\locknamesp}{\lockInv_{\gname}(\loname, \lnname, \lkvar)}$.
\end{lem}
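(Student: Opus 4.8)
The plan is to prove the statement by L\"ob induction on the wait loop, performing symbolic execution on the left and only matching against the right-hand side $\acquire\ \lkvar$ at the single iteration where the loop terminates. Treating the persistent invariant $\knowInv{\locknamesp}{\lockInv_{\gname}(\loname,\lnname,\lkvar)}$ as always available, I would apply \ruleref{later-loeb} to the proposition $\ticket{\gname}{m} \wand \logrel{\dinssingle{\tvar}{\lockInt}}{\textlog{wait\_loop}\ m\ \llo}{\acquire\ \lkvar}{\tunit}$, obtaining the same statement guarded by a $\later$ as the induction hypothesis (IH). I then unfold the recursive call by $\beta$-reduction using \ruleref{rel-pure-l}; this strips the $\later$ from the IH and leaves the goal of relating $\If {(m = \counterread\ \llo)} then \unittt \Else \textlog{wait\_loop}\ m\ \llo$ to $\acquire\ \lkvar$.

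Next I would symbolically execute the read $\counterread\ \llo$ on the left via \ruleref{read-i-l-hocap}, instantiated with the counter namespace $\lonamesp$ and extra mask $\mask = \namecl{\locknamesp}$; the side condition $\mask \cap \namecl{\lonamesp} = \emptyset$ holds since $\lonamesp$ and $\locknamesp$ are distinct children of $\namesp$. The rule's update hands me the module view $\cntAuth{\loname}{o}$ for some $o$, whereupon I open the lock invariant $\lockInv_{\gname}$ (via \ruleref{inv-access}), obtaining $\cnt{\loname}{1}{o'}$, $\isLock{\lkvar}{b}$, $\issuedTickets{\gname}{n}$ and the disjunction $\ticket{\gname}{o'} \lor b = \<false>$. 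By \ruleref{cnt-agree} the module view and the client view agree, so $o = o'$. Returning $\cntAuth{\loname}{o}$ unchanged discharges the read, leaving me to relate $\If {(m = o)} then \unittt \Else \textlog{wait\_loop}\ m\ \llo$ to $\acquire\ \lkvar$ at mask $\top\setminus\locknamesp$, still holding the contents of the lock invariant.

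I would then case split on the Boolean $m = o$ (a purely logical step, done while the invariant is open). In the failing branch $m \neq o$, nothing in the invariant was modified, so I close $\lockInv_{\gname}$ to restore the $\top$ mask, reduce the conditional to $\textlog{wait\_loop}\ m\ \llo$ with \ruleref{rel-pure-l}, and conclude by applying the ($\later$-free) IH, which is still applicable because I retained $\ticket{\gname}{m}$. In the succeeding branch $m = o$, I combine my own ticket---which by $m = o$ is $\ticket{\gname}{o}$---with the invariant's disjunct: the left disjunct $\ticket{\gname}{o}$ is impossible by \ruleref{ticket-nondup}, so necessarily $b = \<false>$, i.e.\ $\isLock{\lkvar}{\<false>}$. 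Since right-hand rules permit an arbitrary mask, I first execute $\acquire\ \lkvar$ with \ruleref{acquire-r} while the invariant is open, turning the right side into $\unittt$ and yielding $\isLock{\lkvar}{\<true>}$. I then close $\lockInv_{\gname}$, this time relinquishing $\ticket{\gname}{o}$ to re-establish the disjunct in its left form (now that $b = \<true>$), returning $\cnt{\loname}{1}{o}$, $\cnt{\lnname}{1}{n}$ and $\issuedTickets{\gname}{n}$ unchanged. Back at the $\top$ mask I reduce the left conditional to $\unittt$ and finish with \ruleref{rel-return} and \ruleref{val-unit'}.

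The main obstacle is the termination case: the crux is the exclusion argument that the owner ticket $\ticket{\gname}{o}$ cannot simultaneously sit in the invariant and be held by the waiting thread, which forces the right-hand lock to be unlocked and thereby justifies the $\acquire$, after which the thread must hand over its ticket so the invariant can record $b = \<true>$. This captures exactly the mutual-exclusion reasoning of the ticket lock. The secondary technical point is the mask bookkeeping---threading \ruleref{read-i-l-hocap}'s update so that the counter's internal invariant and the lock invariant are opened in the right order, and reconciling $\cntAuth{\loname}{o}$ with $\cnt{\loname}{1}{o}$ through \ruleref{cnt-agree}---but this is routine once the namespaces are kept disjoint.
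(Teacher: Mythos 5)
Your proposal is correct and follows essentially the same route as the paper's proof: L\"ob induction, pure symbolic execution of the loop body, \ruleref{read-i-l-hocap} to read the owner counter, a case split on $m = o$ with the failing branch closing the invariant and appealing to the induction hypothesis, and the succeeding branch using \ruleref{ticket-nondup} to rule out $b = \<true>$, applying \ruleref{acquire-r}, and closing the invariant with $\ticket{\gname}{o}$ deposited. The only differences are presentational (you make the \ruleref{cnt-agree} step and the mask instantiation $\mask = \namecl{\locknamesp}$ explicit, and perform the case split with the lock invariant already open), which if anything is closer to how the mechanized proof must proceed.
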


\begin{proof}
We prove the refinement by L\"ob induction and symbolic execution.
After some pure symbolic executions steps we are left with the goal:
\[
\logrel{\dinssingle{\tvar}{\lockInt}}
{\If (m = \counterread\ \llo) then \unittt \Else \textlog{wait\_loop}\ m\ \llo}
{\acquire\ \lkvar}{\tunit}.
\]
We then apply \ruleref{read-i-l-hocap}, after which it remains to prove
\begin{align*}
  \cntAuth{\loname}{o} & \vsW[\mask'][\mask' \setminus \namesp]
  \cntAuth{\loname}{o}{} \ast{} \\
& \logrel{\dinssingle{\tvar}{\lockInt}}
{\If (m = o) then \unittt \Else \textlog{wait\_loop}\ m\ \llo}
{\acquire\ \lkvar}{\tunit}.
\end{align*}
for any number $o$.
In case $m \neq o$, we symbolically execute the left-hand side and appeal to the induction hypothesis.
In case $m = o$, we proceed by accessing the invariant $\knowInv{\locknamesp}{\lockInv_{\gname}(\loname, \lnname, \lkvar)}$.
Note that it cannot be the case that $b = \<true>$, because then we would have two copies of $\ticket{\gname}{o}$: one from the assumption of the lemma and one from the invariant.
Thus, the case $b = \<true>$ can be eliminated by \ruleref{ticket-nondup}.
Then it must be the case that $b = \<false>$.
In that case we have $\isLock{\lkvar}{\<false>}$ and we can apply \ruleref{acquire-r} to update it to $\isLock{\lkvar}{\<true>}$, changing the right-hand side to $\unittt$.

We finish by closing the invariant, picking this time $b = \<true>$ and storing the $\ticket{\gname}{o}$ from the assumption of the lemma in the invariant.
\end{proof}

\begin{prop}
$\logrel{\dinssingle{\tvar}{\lockInt}}{\acquireTicket}{\acquire}{\tvar \to \tunit}$.
\end{prop}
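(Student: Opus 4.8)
The plan is to follow the general three-step recipe for such module refinements, here specialized to the $\acquireTicket$ component and leaning on the helper \Cref{reloc:lem:wait_loop}. First I would apply \ruleref{rel-rec} at the type $\tvar \to \tunit$, introduce the persistence modality via \ruleref{always-intro} (the ambient context holds only persistent hypotheses), and introduce two related arguments $\val_1,\val_2$ together with the assumption $\logrelV{\tvar}{\dinssingle{\tvar}{\lockInt}}{\val_1}{\val_2}$. By \ruleref{val-var} this unfolds to $\lockInt(\val_1,\val_2)$, which lets me destructure $\val_1 = (\llo,\lln)$ and $\val_2 = \lkvar$ and, unfolding $\lockInt$, extract names $\loname,\lnname,\gname$, the persistent counter predicates $\CntN{\llo}{\loname}{\lonamesp}$ and $\CntN{\lln}{\lnname}{\lnnamesp}$, and the invariant $\knowInv{\locknamesp}{\lockInv_{\gname}(\loname,\lnname,\lkvar)}$. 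After a few pure symbolic-execution steps on the left (via \ruleref{rel-pure-l}, for the application and the pair destructuring) the goal becomes $\logrel{\dinssingle{\tvar}{\lockInt}}{\Let n = \FGincrement\ \lln in \textlog{wait\_loop}\ n\ \llo}{\acquire\ \lkvar}{\tunit}$.

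The heart of the proof is the atomic increment of $\lln$, which I would handle using the HOCAP-style rule \ruleref{inc-i-l-hocap} with counter predicate $\CntN{\lln}{\lnname}{\lnnamesp}$, the surrounding let-binding as evaluation context, and the crucial choice of frame mask $\mask = \emptyset$. This choice reduces the required view shift to a mask-preserving one at $\top \setminus \lnnamesp$, so that I can open and then immediately re-close the lock invariant $\locknamesp$ within it (the namespaces $\lnnamesp$ and $\locknamesp$ are disjoint), while leaving the continuation at the full mask $\top$ that \Cref{reloc:lem:wait_loop} expects. To discharge the view shift I would, given $\cntAuth{\lnname}{n}$, open $\locknamesp$ to obtain the invariant body with witnesses $o,n',b$; use \ruleref{cnt-agree} to identify $n' = n$; update the ghost counter with \ruleref{cnt-update} from $n$ to $n+1$; and issue a fresh ticket with \ruleref{issueNewTicket}, turning $\issuedTickets{\gname}{n}$ into $\issuedTickets{\gname}{n+1} \ast \ticket{\gname}{n}$. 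I then re-establish $\lockInv_{\gname}(\loname,\lnname,\lkvar)$ with the incremented value $n+1$ (the owner $o$ and lock state $b$, and hence the disjunct $\ticket{\gname}{o} \lor b = \<false>$, are untouched), close $\locknamesp$, return $\cntAuth{\lnname}{n+1}$ as demanded by the rule, and retain the freshly minted $\ticket{\gname}{n}$.

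After the rule fires, the remaining goal is at mask $\top$, and one more pure reduction of the let (\ruleref{rel-pure-l}) leaves $\logrel{\dinssingle{\tvar}{\lockInt}}{\textlog{wait\_loop}\ n\ \llo}{\acquire\ \lkvar}{\tunit}$, which is exactly the conclusion of \Cref{reloc:lem:wait_loop} instantiated at $m := n$, discharged using the retained $\ticket{\gname}{n}$ and the persistent invariant $\knowInv{\locknamesp}{\lockInv_{\gname}(\loname,\lnname,\lkvar)}$. The main obstacle I anticipate is the mask bookkeeping inside the HOCAP view shift: one must pick $\mask$ so that the lock invariant can be accessed at the linearization point of the increment, yet be closed and the mask restored to $\top$ before $\textlog{wait\_loop}$ begins; the observation that $\mask = \emptyset$ together with an open-then-close of $\locknamesp$ achieves precisely this — and simultaneously makes the ghost ticket available at exactly the moment it is needed — is the key step that the whole argument turns on.
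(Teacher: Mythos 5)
Your proposal is correct and follows essentially the same route as the paper: \ruleref{rel-rec} plus symbolic execution, then \ruleref{inc-i-l-hocap} whose view shift is used to issue a fresh ticket via \ruleref{issueNewTicket}, which is then handed to \Cref{reloc:lem:wait_loop}. The paper's proof is only a three-line sketch, and your mask bookkeeping (taking $\mask = \emptyset$ and opening/closing $\locknamesp$ inside the view shift, using \ruleref{cnt-agree} and \ruleref{cnt-update} to bump the counter ghost state) is a sound and faithful elaboration of the details it leaves implicit.
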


\begin{proof}
We use \ruleref{rel-rec} and symbolic execution rules, and then \ruleref{inc-i-l-hocap} and \Cref{reloc:lem:wait_loop}.
When we apply \ruleref{inc-i-l-hocap}, we use the update to issue a new ticket using \ruleref{issueNewTicket}.
This ticket will be used for the assumption of \Cref{reloc:lem:wait_loop}.
\end{proof}

\begin{prop}
$\logrel{\dinssingle{\tvar}{\lockInt}}{\releaseTicket}{\release}{\tvar \to \tunit}$.
\end{prop}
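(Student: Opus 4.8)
The plan is to peel off the outer $\lambda$ with \ruleref{rel-rec}, reduce the left-hand side to the bare call $\wkincr\ \llo$, and then discharge the remaining goal using the HOCAP-style specification \ruleref{cnt-wk-incr-l} for the weak increment together with the right-hand side lock specification \ruleref{release-r}. Writing $\Delta$ for $\dinssingle{\tvar}{\lockInt}$, I first apply \ruleref{rel-rec} at type $\tvar \to \tunit$. Since $\Delta(\tvar) = \lockInt$ is persistent (it is built from the persistent predicates $\CntN{\cdot}{\cdot}{\cdot}$ via \ruleref{Cnt-persistent} and a persistent invariant assertion), the $\always$ may be introduced with \ruleref{always-intro}. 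Introducing arbitrary related arguments $\val_1,\val_2$ with $\logrelV{\tvar}{\Delta}{\val_1}{\val_2}$ and unfolding via \ruleref{val-var} yields $\lockInt(\val_1,\val_2)$, hence $\val_1 = (\llo,\lln)$, $\val_2 = \lkvar$, together with the persistent resources $\CntN{\llo}{\loname}{\lonamesp}$, $\CntN{\lln}{\lnname}{\lnnamesp}$ and $\knowInv{\locknamesp}{\lockInv_{\gname}(\loname,\lnname,\lkvar)}$. A few pure steps (\ruleref{rel-pure-l}) reduce the goal to $\logrel{\Delta}{\wkincr\ \llo}{\release\ \lkvar}{\tunit}$.

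Next I invoke \ruleref{cnt-wk-incr-l}, instantiated with $\cntvar := \llo$, $\gname := \loname$, $\namesp := \lonamesp$ and frame mask $\mask := \namecl{\locknamesp}$; the side condition $\mask \cap \namecl{\lonamesp} = \emptyset$ holds because $\locknamesp$ and $\lonamesp$ are disjoint sub-namespaces of $\namesp$. This leaves the two nested updates to provide. The first (read) update is trivial: given $\cntAuth{\loname}{n}$ I return it unchanged together with the second update, recording the read value $n$; no invariant need be opened here, as only the persistent knowledge $\knowInv{\locknamesp}{\dots}$ is required to form the continuation.

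The real work is the second (write) update. Given $\cntAuth{\loname}{m}$, I open $\knowInv{\locknamesp}{\lockInv_{\gname}(\loname,\lnname,\lkvar)}$—the mask bookkeeping $\top\setminus\lonamesp \rightsquigarrow \top\setminus\lonamesp\setminus\locknamesp$ is exactly what $\mask = \namecl{\locknamesp}$ affords—obtaining $\cnt{\loname}{1}{o}$, $\cnt{\lnname}{1}{n'}$, $\isLock{\lkvar}{b}$, $\issuedTickets{\gname}{n'}$ and $\ticket{\gname}{o} \lor b = \<false>$. By \ruleref{cnt-agree} the views agree, so $o = m$, and \ruleref{cnt-update} on $\cntAuth{\loname}{m}$ and $\cnt{\loname}{1}{m}$ updates both to the value $n+1$ that $\wkincr$ physically writes, producing $\cntAuth{\loname}{n+1}$ (which the update must yield) and $\cnt{\loname}{1}{n+1}$. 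In the remaining goal $\logrel[\top\setminus\locknamesp]{\Delta}{\unittt}{\release\ \lkvar}{\tunit}$ the lock invariant is still open, so I use $\isLock{\lkvar}{b}$ to fire \ruleref{release-r}, turning the right-hand side into $\unittt$ and returning $\isLock{\lkvar}{\<false>}$. I then re-establish $\lockInv_{\gname}$ with owner $n+1$, issued count $n'$ and $b := \<false>$: the updated $\cnt{\loname}{1}{n+1}$, the unchanged $\cnt{\lnname}{1}{n'}$ and $\issuedTickets{\gname}{n'}$, and the fresh $\isLock{\lkvar}{\<false>}$ satisfy every conjunct, the disjunction is discharged by its right branch $b = \<false>$, and the stale $\ticket{\gname}{o}$ (present only when $b = \<true>$) is simply discarded. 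Closing the invariant restores the mask to $\top$, and \ruleref{rel-return} with \ruleref{val-unit} concludes.

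The main obstacle is that $\releaseTicket$ is not physically atomic—$\wkincr$ is a load followed by a store—so no single linearization-point rule applies; the crux is to route the proof through the two-update spec \ruleref{cnt-wk-incr-l} while keeping the mask discipline coherent, in particular choosing $\mask = \namecl{\locknamesp}$ so the lock invariant is opened exactly across the write and closed immediately after. A subtler point is that the write is justified \emph{without} assuming the intermediate value $m$ equals the read value $n$: physical and logical state are synchronized purely through \ruleref{cnt-agree} and \ruleref{cnt-update}, and the invariant is re-established by selecting the unlocked branch $b = \<false>$ rather than by reconstructing the owner's ticket $\ticket{\gname}{n+1}$.
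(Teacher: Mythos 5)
Your proof is correct and follows essentially the same route as the paper's: \ruleref{rel-rec} plus symbolic execution to reach $\wkincr\ \llo \lrel \release\ \lkvar$, then \ruleref{cnt-wk-incr-l} with the first update discharged by framing $\cntAuth{\loname}{n}$, and the second update discharged by opening the lock invariant, applying \ruleref{cnt-update}, firing \ruleref{release-r}, and re-establishing the invariant via the $b = \<false>$ disjunct before concluding with \ruleref{rel-return}. Your write-up is merely more explicit about the mask bookkeeping and the choice of disjunct when closing the invariant.
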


\begin{proof}
We use \ruleref{rel-rec}, and then symbolic execution and \ruleref{cnt-wk-incr-l}, after which the new goal becomes
\begin{align*}
\cntAuth{\loname}{n} {}
& \vsW[\mask'] \cntAuth{\loname}{n} \ast{}
(\All m. \cntAuth{\loname}{m}
\vsW[\mask'][\mask'\setminus  \namesp] \\
& {} \cntAuth{\loname}{n+1} \ast{} \logrel[\top\setminus\namesp]{}{\unittt}{\release\ \lkvar}{\type})
\end{align*}
for an arbitrary $n$.
By framing $\cntAuth{\loname}{n}$, it suffices to show
\[
\cntAuth{\loname}{m}
\vsW[\mask'][\mask'\setminus  \namesp] \cntAuth{\loname}{n+1} \ast{} \logrel[\top\setminus\namesp]{}{\unittt}{\release\ \lkvar}{\type}
\]
for an arbitrary $m$.
We utilize this update by accessing the invariant and getting access to
$\cnt{\loname}{1}{o}$.
Using this proposition and $\cntAuth{\loname}{m}$ we apply \ruleref{cnt-update} to get
\[
  \cnt{\loname}{1}{n+1} \ast \cntAuth{\loname}{n+1}.
\]
We frame the second separating conjunct, and use \ruleref{release-r} to reduce the right-hand side to~$\unittt$.
Finally we close the invariant and finish the proof with \ruleref{rel-return}.
\end{proof}



\section{Speculative reasoning using prophecy variables}
\label{reloc:sec:prophecies}
In addition to Iris's ordinary ghost state mechanism, which allows to reason about the history of a program, Iris has recently been extended with a mechanism for speculative reasoning based on \emph{prophecy variables}, which allows to reason about the future of a program~\cite{abadi:lamport:1991,irisProph}.
A prophecy variable is a ghost variable that can reference a value that is determined in the future of a program's execution.
While the program that is subject to verification cannot make use of the value of a prophecy variable itself---they are ghost variables---the value can be used in proofs, \eg to speculatively choose a reduction step in the right-hand program.
In this section we show how Iris's mechanism for prophecy variables is integrated into \reloc and can be put to action to prove challenging refinements.

We start this section by illustrating the need for prophecy variables with a motivational example (\Cref{reloc:sec:proph:example}).
We then introduce the proof rules for prophecy variables in \reloc (\Cref{reloc:sec:proph:rules}), and use them to verify the motivational example (\Cref{reloc:sec:proph:proof}).
We finish with another example demonstrating the applicability of prophecy variables to algebraic reasoning for concurrent programs (\Cref{reloc:sec:proph:alg}).

\subsection{Motivational example}
\label{reloc:sec:proph:example}
\begin{figure}
\raggedright
\textbf{The eager and lazy implementation:}
\begin{align*}
\newcoin \eqdef{}&
  \begin{array}[t]{@{} l}
  \Let c = \Ref(\<false>) in \\[-0.2em]
  ((\Lam \unittt. \flipcoin\ c), (\Lam \unittt. \readcoin\ c))
  \end{array}
  &
\newcoinLazy \eqdef{}&
  \begin{array}[t]{@{} l}
  \Let c = \Ref(\SOME(\<false>)) in \\[-0.2em]
  ((\Lam \unittt. \flipcoinLazy\ c), (\Lam \unittt. \readcoinLazy\ c))
  \end{array}
  \\
\flipcoin \eqdef{}& \Lam c.
  c \gets \rand\ \unittt &
\flipcoinLazy \eqdef{}& \Lam c.
  c \gets \NONE
  \\
\readcoin \eqdef{}& \Lam c.
  \deref c &
\readcoinLazy \eqdef{}& \Lam c.
  \begin{array}[t]{@{} l}
  \Match \deref c with \\[-0.2em]
  \mid \SOME(v) \to v \\[-0.2em]
  \mid \NONE \to \\[-0.2em]
  \qquad \Let x = \rand\ \unittt in \\[-0.2em]
  \qquad \If \CAS{c}{\NONE}{x} \\[-0.2em]
  \qquad \!\! then x \Else \readcoinLazy\ c\ \unittt
  \end{array}
\end{align*}
\textbf{The instrumented lazy implementation with prophecy variables:}
\begin{align*}
\newcoinLazyI \eqdef{}&
 \begin{array}[t]{@{} l}
   \Let c = \Ref(\SOME(\<false>)) in\\[-0.2em]
   \Let p = \newProph in            \\[-0.2em]
   \Let \lkvar = \newlock\ \unittt in\\[-0.2em]
   ((\Lam \unittt. \flipcoinLazyI\ c\ \lkvar), (\Lam \unittt. \readcoinLazyI\ c\ \lkvar\ p))
 \end{array} \\
\flipcoinLazyI \eqdef{}& \Lam c\ \lkvar.
  \begin{array}[t]{@{} l}
    \acquire\ \lkvar;\
    c \gets \NONE;\
    \release\ \lkvar
  \end{array}
  \\
\readcoinLazyI \eqdef{}& \Lam c\ \lkvar\ p.
  \begin{array}[t]{@{} l}
  \acquire\ \lkvar; \\[-0.2em]
  \<let> r = 
    \begin{array}[t]{@{} l}
      \Match \deref c with\\
      \mid \SOME(v) \to v \\[-0.2em]
      \mid \NONE \to \\[-0.2em]
      \qquad \<let> x = \rand\ \unittt \<in>\\[-0.2em]
      \qquad c \gets \SOME(x);\ \resolveProph{p}{x};\ x
    \end{array}\\[-0.2em]
   \<in> \release\ \lkvar;\ r
\end{array}
  &
\end{align*}
\caption{The implementations of the coin module.}
\label{reloc:fig:coin_prog}
\end{figure}

\begin{figure}
\begin{minipage}[t]{0.34\textwidth}
\vspace{-0.5em}
\[
\rand \eqdef{} \Lam \unittt.
  \begin{array}[t]{@{} l}
  \Let y = \Ref(\<false>) in \\
  \Fork{y \gets \<true>};\\ \deref y
  \end{array}
\]
\end{minipage}
\begin{minipage}[t]{0.65\textwidth}
\begin{mathpar}
\quad
\inferH{rel-rand-l}
  {\All b \in \mathbb{B}. \left(\logrel{}{\fillctx\lctx[b]}{\exprB}{\type}\right)}
  {\logrel{}{\fillctx\lctx[\rand\ \unittt]}{\exprB}{\type}}
\quad
\inferH{rel-rand-r}
  {b \in \mathbb{B} \and
   \logrel{}{\expr}{\fillctx\lctx[b]}{\type}}
  {\logrel{}{\expr}{\fillctx\lctx[\rand\ \unittt]}{\type}}
\end{mathpar}
\end{minipage}
\caption{The implementation and relational specification of the $\rand$ function.}
\label{reloc:fig:rand_fn}
\end{figure}

The ghost state mechanism of Iris that we have seen so far has allowed us to reason about the history of the program's execution.
However, in some cases that is not enough, and it is required to take the future of the program's execution into account.
As a simple motivating example let, us consider the implementations $\newcoin$ and $\newcoinLazy$ of a coin module in \Cref{reloc:fig:coin_prog}.
They both implement a virtual coin that can be flipped (using the first closure) and whose value can be read (using the second closure), but there is an important difference.
The eager version ($\newcoin$) calculates the flipped value in the $\flipcoin$ function immediately---using the $\rand$ function in \Cref{reloc:fig:rand_fn}, which gives a non-deterministic Boolean value---and the $\readcoin$ function just reads that value.
In contrast, the lazy version ($\newcoinLazy$) does not perform any non-deterministic calculations in its flip operation $\flipcoinLazy$.
Instead, it sets the value of the coin to ``undetermined'' (\ie $\NONE$), and postpones the actual calculation to the $\readcoinLazy$ function.

While these two implementations are rather different, they are contextually equivalent---for clients of the module it is not observable if the coin is flipped eagerly or lazily.
To prove that, we wish to establish the following refinements in \reloc:
\[
\logrel{}{\newcoin}{\newcoinLazy}{(\tunit \to \tunit) \times (\tunit \to \tbool)}
\]
\[
\logrel{}{\newcoinLazy}{\newcoin}{(\tunit \to \tunit) \times (\tunit \to \tbool)}
\]

The first refinement can be proved with the tools that we have already described.
We start by symbolically executing both implementations, obtaining references $c$ and $c_l$ to the internal state of the eager coin and lazy coin, respectively.
We then establish the following invariant linking together the two internal states:
\[
\knowInv{}{\Exists (b : \mathbb{B}). c \mapstoI b \ast (c_l \mapstoS \<None> \vee c_l \mapstoS \<Some>(b))}.
\]
This invariant can be easily shown to be preserved during the $\flipcoin \lrel \flipcoinLazy$ refinement proof.
During the $\readcoin \lrel \readcoinLazy$ refinement proof we can choose which value $\rand\ \unittt$ reduces to based on the current value of $c$, using \ruleref{rel-rand-r}.

However, we cannot prove the second refinement with the same strategy.
The problem is that during the $\flipcoinLazy \lrel \flipcoin$ refinement we reset the value of $c_l$ (on the left-hand side) to $\<None>$, and then we have to establish a simulation on the right-hand side by picking a value for $\rand\ \unittt$ that will be assigned to $c$.
But this value has to be the same value that is picked by $\rand\ \unittt$ in $\readcoinLazy$.
Thus we have to pick a value ``from the future''.

To facilitate this style of reasoning, prophecy variables have been introduced into Iris \cite{irisProph}.
Originally, prophecy variables were used to prove refinements between state machines \cite{abadi:lamport:1991}.
Lately they have been used in Iris for establishing linearizability of concurrent data structures without a fixed linearization point.
In the rest of this section we show how we integrated prophecy variables into \reloc.

\subsection{Prophecy instructions and proof rules}
\label{reloc:sec:proph:rules}

\begin{figure}
\begin{mathpar}
\inferH{rel-newproph-l}
  {
     \All \vec{v}\ p. \Proph{p}{\vec{v}} \wand
     \logrel{\Delta}{\fillctx\lctx[p]}{\expr_2}{\type}
   }
  {\logrel{\Delta}{\fillctx\lctx[\newProph]}{\expr_2}{\type}}
\and
\inferH{rel-resolveproph-l}
  {\pvs[\top][\mask] {
   \Exists \vec{v}. \Proph{p}{\vec{v}} \ast{}
   \big(
     \All \vec{\valB}. (\vec{v}=\conS{\valB}{\vec{\valB}}) \ast \Proph{p}{\vec{\valB}} \wand
     \logrel[\mask]{\Delta}{\fillctx\lctx[\unittt]}{\expr_2}{\type}
   \big)}}
  {\logrel{\Delta}{\fillctx\lctx[\resolveProph{p}{\valB}]}{\expr_2}{\type}}
\and
\inferH{rel-newproph-r}
  {\All p. \left(\logrel[\mask]{\Delta}{\expr_1}{\fillctx\lctx[p]}{\type}\right)}
  {\logrel[\mask]{\Delta}{\expr_1}{\fillctx\lctx[\newProph]}{\type}}
\and
\inferH{rel-resolveproph-r}
  {\logrel[\mask]{\Delta}{\expr_1}{\fillctx\lctx[\unittt]}{\type}}
  {\logrel[\mask]{\Delta}{\expr_1}{\fillctx\lctx[\resolveProph{p}{\valB}]}{\type}}
\end{mathpar}
\caption{The \reloc proof rules for prophecy variables.}
\label{reloc:fig:proph_rules}
\end{figure}

While Iris's ordinary ghost state mechanism only appears at the level of the logic, prophecy variables appear as instrumented instructions in the source program.\footnote{The semantics of \HeapLang has to be instrumented to support prophecy variables, we refer the reader to \cite[Section 3]{irisProph} for details.}
The instruction $\newProph$ creates a new prophecy variable.
The instruction $\resolveProph{p}{v}$ resolves a prophecy variable $p$ to a value $v$.

The symbolic execution rules for the prophecy instructions are given in \Cref{reloc:fig:proph_rules}.
In the right-hand side, the prophecy instructions are no-ops and therefore do not have any pre- or post-conditions.
Prophecy instructions that appear on the left-hand side, however, operate on additional ghost state, and thus have pre- and postconditions.
The ghost predicate $\Proph{p}{\vec{v}}$ says that the prophecy variable $p$ will be resolved, in the future, with values from the vector $\vec{v}$.
Initially, a prophecy variable created with $\newProph$ has an arbitrary vector $\vec{v}$ associated with it.
Only after symbolically executing $\resolveProph{p}{\valB}$ we learn that this vector $\vec{v}$ contains $\valB$ at the head position.
The trick behind the prophecy variables ghost state is that we can already refer to the head element of $\vec{v}$ before resolving it to some $\valB$.
We will see how to use this in establishing the refinement between the lazy coin and the eager coin in the next section.
Note that the rules for the left-hand side are written in logically atomic style: compare, for example, \ruleref{rel-resolveproph-l} and \ruleref{rel-store-l}.

To see how the instrumented instructions for prophecy variables are used, suppose we want to prove a contextual refinement $\ctxref{}{\expr_1}{\expr_2}{\type}$ that involves speculative reasoning.
We first prove a refinement $\hat \expr_1 \lrel \expr_2 : \type$, where $\hat \expr_1$ is a version of $\expr_1$ instrumented with prophecy variables, and then prove $\expr_1 \lrel \hat \expr_1 : \type$ to show that the prophecy variables can be erased.
By soundness of \reloc and transitivity of contextual refinement, this gives a contextual refinement $\ctxref{}{\expr_1}{\expr_2}{\type}$ that refers only to the original programs.

\subsection{Proving the coin refinement}
\label{reloc:sec:proph:proof}

To prove the refinement $\logrel{}{\newcoinLazy}{\newcoin}{(\tunit \to \tunit) \times (\tunit \to \tbool)}$ from \Cref{reloc:sec:proph:example} we instrument the lazy implementation $\newcoinLazy$ with prophecy variables so we can speculate on the outcome of $\rand$ in $\readcoinLazy$.
The instrumented implementation $\newcoinLazyI$ is shown in \Cref{reloc:fig:rand_fn}.
In addition to prophecy variables, we also instrumented the implementation with locks to ensure that there is no interference between updating the reference $c$ and resolving the prophecy variable $p$.
With the instrumented program at hand, we will prove the chain of refinements:
\begin{align*}
& \logrel{}{\newcoinLazy}{\newcoinLazyI}{(\tunit \to \tunit) \times (\tunit \to \tbool)}\\
& \logrel{}{\newcoinLazyI}{\newcoin}{(\tunit \to \tunit) \times (\tunit \to \tbool)}.
\label{reloc:eq:coin_instrumented_refinements}
\end{align*}
Via \reloc's soundness theorem, we can compose these refinements at the level of contextual refinement to obtain:
\[
  \ctxref{}{\newcoinLazy}{\newcoin}{(\tunit \to \tunit) \times (\tunit \to \tbool)}.
\]
Note that that we only use the instrumented implementation $\newcoinLazyI$ for the intermediate step, which means that prophecy variables and locks do not appear at all in the final statement above.
The approach of using prophecies as an intermediate step works not just for closed programs, but also for open programs, as it does not rely on an erasure theorem \cite[Section 3.5]{irisProph}.
Moreover, as the example demonstrates, it allows us to make use of locks in the instrumented program.\footnote{\emph{Atomic prophecy resolution} was introduced in~\cite{irisProph} as an alternative to locks to deal with atomicity of prophecy resolution.}.

The first refinement ($\newcoinLazy \lrel \newcoinLazyI$) is easy to prove, we simply use the no-op symbolic execution rules for prophecies on the right-hand side (\Cref{reloc:fig:proph_rules}).
The second refinement ($\newcoinLazyI \lrel \newcoin$) is where the mechanism of prophecy variables comes to help.
We symbolically execute the allocation parts of $\newcoinLazyI$ and $\newcoin$.
We then use the relational specification for locks (\Cref{reloc:sec:simpl_specs_lhs}) with the following lock invariant:
\[
  \Exists \vec{v}. \Proph{p}{\vec{v}} \ast
    \big(
      (c_l \mapstoI \NONE \ast c \mapstoS (hd\ \vec{v}))
      \vee \spac (\Exists (b : \mathbb{B}). c_l \mapstoI \SOME(b) \ast c \mapstoS b)
    \big).
\]
This invariant says that if the value of the lazy coin is $\NONE$, then the value of the eager coin is determined by the prophecy variable $p$.
There are two main implications of this:
\begin{enumerate}
\item In the refinement between $\flipcoinLazyI$ and $\flipcoin$, the invariant can be (re)established, because we can pick the value of $\rand\ \unittt$ on the right-hand side to be the head element of $\vec{v}$---the future value of the lazy coin is already bound at this point.
\item In the refinement between $\readcoinLazyI$ and $\readcoin$ (specifically, in the $\NONE$ branch), we obtain a non-deterministic Boolean $x$ from symbolically executing $\rand\ \unittt$ on the left-hand side, and we update the value of $c_l$ to be $x$.
  Moreover, we resolve the prophecy variable $p$ to $x$, which gives us much desired information: the head element of $\vec{v}$ was $x$ all along!
  This information allows us to transition from the left disjunct to the right disjunct in the invariant and complete the proof.
\end{enumerate}

\subsection{Algebraic reasoning about non-deterministic choice}
\label{reloc:sec:proph:alg}
In this section we give another example of the use of prophecy variables: we verify several algebraic properties of non-deterministic choice modulo contextual equivalence.
(In)equational theories of the non-deterministic choice operator were previously considered in the context of domain theory, where non-determinism is usually modeled using power domains~\cite{plotkin:1976,smyth:1976}, and in the context of algebraic effects~\cite{simpson:voorneveld:2020,johann:simpson:voigtlander:2010}.
Power domains and the denotational semantics approach does not seem to scale easily to languages with concurrency and higher-order store.
An operational approach to equational theory of a programming language with non-determinism was considered in \cite{birkedal:bizjak:schwinghammer:2013} using step-indexed logical relations.
There the authors show several contextual equivalences involving non-determinism, both finite (\eg picking a Boolean) and countable (\eg picking a natural number).
In this subsection, we provide conceptually simple proofs for contextual equivalences involving finite non-determinism only.
However, we were also able to prove that non-deterministic choice and sequential composition distribute over each other.
Proving this crucially relies on speculative reasoning which we formalize using prophecy variables.

We do not have a non-deterministic choice operation built-in the language, but we can define it using the $\rand$ function from \Cref{reloc:sec:proph:example}.
The operation $\orFn$ non-determini\-sti\-ca\-lly executes one of its thunked arguments:
\[
  \orFn\spac \exprB_1\spac \exprB_2 \eqdef{}
  \If \rand\ \unittt then \exprB_1\ \unittt \Else \exprB_2\ \unittt
\]
We write $\expr_1 \orOp \expr_2$ for $\orFn\ (\Lam \unittt. \expr_1)\ (\Lam \unittt. \expr_2)$.
The expression $\expr_1 \orOp \expr_2$ thus non-deterministically reduces to either $\expr_1$ or $\expr_2$.
From the rules for the $\rand$ function (\Cref{reloc:fig:rand_fn}), we derive the following symbolic execution rules for $\orOp$:
\begin{mathpar}
\inferH{rel-or-l}
  {(\logrel{\Delta}{\fillctx\lctx[\expr_1]}{\exprB}{\type})
   \wedge
   (\logrel{\Delta}{\fillctx\lctx[\expr_2]}{\exprB}{\type})}
  {\logrel{\Delta}{\fillctx\lctx[\expr_1 \orOp \expr_2]}{\exprB}{\type}}
\\
\inferH{rel-or-r-1}
  {\logrel[\mask]{\Delta}{\exprB}{\fillctx\lctx[\expr_1]}{\type}}
  {\logrel[\mask]{\Delta}{\exprB}{\fillctx\lctx[\expr_1 \orOp \expr_2]}{\type}}
\and
\inferH{rel-or-r-2}
  {\logrel[\mask]{\Delta}{\exprB}{\fillctx\lctx[\expr_2]}{\type}}
  {\logrel[\mask]{\Delta}{\exprB}{\fillctx\lctx[\expr_1 \orOp \expr_2]}{\type}}
\end{mathpar}
The rules for $\orOp$ are reminiscent of the rules for disjunction ($\vee$) in sequent calculus.
To symbolically execute $\orOp$ on the left-hand side (\cf to eliminate $\vee$) it is necessary to establish refinements for both operands (\cf to consider both disjuncts), and to symbolically execute $\orOp$ on the right-hand side (\cf to introduce $\vee$) it suffices to establish a refinement for one of the operands (\cf prove one of the disjuncts).

Assume that $\expr_1, \expr_2, \expr_3$ are closed programs of type $\type$.
Then using \ruleref{rel-or-r-1}, \ruleref{rel-or-r-2}, and \ruleref{rel-or-l}, we prove the following equivalences:
\begin{mathpar}
\ctxequiv{}{\expr_1}{\expr_1 \orOp \expr_1}{\type}
\and
\ctxequiv{}{\expr_1 \orOp \expr_2}{\expr_2 \orOp \expr_1}{\type}
\and
\ctxequiv{}{\expr_1}{\expr_1 \orOp \diverge}{\type}
\and
\ctxequiv{}{\expr_1 \orOp (\expr_2 \orOp \expr_3)}{(\expr_1 \orOp \expr_2) \orOp \expr_3}{\type}
\and
\ctxequiv{}{(\expr_1 \orOp \expr_2) ; \expr_3}{(\expr_1 ; \expr_3) \orOp (\expr_2 ; \expr_3)}{\type}
\end{mathpar}
The equational theory that we obtain here is similar to the one obtained from the Hoare power domain, as ${\expr_1 \orOp \diverge}$ (where $\diverge$ is an infinite loop) is identified with $\expr_1$.
The last equation states that non-deterministic choice distributes over sequential composition, and is standard in, \eg process calculi.
What is less standard is the following equation, which is not validated by models based on bisimulation:
\[
\ctxequiv{}{\expr_1 ; (\expr_2 \orOp \expr_3)}{(\expr_1 ; \expr_2) \orOp (\expr_1 ; \expr_3)}{\type}.
\]
This equation, however, holds in Kleene algebra-like models~\cite{hoare:moller:struth:wehrman:2011,kozen:1994}.
If we think about proving this equation using the symbolic execution rules for $\orOp$, then we can observe that proving the refinement in right-to-left direction
\[
  \logrel{}{(\expr_1 ; \expr_2) \orOp (\expr_1 ; \expr_3)}{\expr_1 ; (\expr_2 \orOp \expr_3)}{\type}
\]
is possible, and, by \ruleref{rel-or-l}, it boils down to proving two refinements:
\[
  \logrel{}{\expr_1 ; \expr_2}{\expr_1 ; (\expr_2 \orOp \expr_3)}{\type}
\qquad
  \logrel{}{\expr_1 ; \expr_3}{\expr_1 ; (\expr_2 \orOp \expr_3)}{\type}.
\]
However, proving the refinement in left-to-right direction is harder:
\[
  \logrel{}{\expr_1 ; (\expr_2 \orOp \expr_3)}{(\expr_1 ; \expr_2) \orOp (\expr_1 ; \expr_3)}{\type}.
\]
If we want to use the symbolic execution rules for $\orOp$, we have to ``synchronize'' both sides on $\expr_1$.
To do that, we have to pick a branch for $\orOp$ on the right-hand side before we get to use \ruleref{rel-or-l} on the left-hand side, but we do not know ahead of time which branch to pick.
To resolve this dependency, we use a prophecy variable to speculate on which branch $\expr_2 \orOp \expr_3$ will be taken on the left-hand side, and use the value of this prophecy variable to choose the appropriate branch of ${(\expr_1 ; \expr_2) \orOp (\expr_1 ; \expr_3)}$ on the right-hand side.

The intermediate program that is instrumented with prophecy variables is as follows:
\begin{align*}
& \Let p {} = \newProph in \\
& \expr_1 ;\
  \big((\resolveProph{p}{0}; \expr_2) \orOp (\resolveProph{p}{1}; \expr_3)\big)
\end{align*}
We can easily verify that the original program ${\expr_1 ; (\expr_2 \orOp \expr_3)}$ refines the instrumented one.
To verify that the instrumented program refines ${(\expr_1 ; \expr_2) \orOp (\expr_1 ; \expr_3)}$ we symbolically execute $\newProph$ and obtain a predicate $\Proph{p}{\vec{v}}$ associating a vector of future values $\vec{v}$ to the newly created prophecy variable $p$.
Then we examine the head element $\valB$ of the prophecy values $\vec{v}$.
If $\valB$ is $0$, then we apply \ruleref{rel-or-r-1}, otherwise we apply \ruleref{rel-or-r-2}.
Without loss of generality, suppose that $\valB$ is $0$; that is, $\vec{v} = \conS{0}{\vec{\valB}}$ for some tail $\vec{\valB}$.
First we ``synchronize'' the refinement proof on $\expr_1$ on both sides.
Then we apply \ruleref{rel-or-l}.
Because the premises of \ruleref{rel-or-l} are joined by intuitionistic conjunction $\wedge$, we can use the resource $\Proph{p}{\vec{v}}$ for verifying both refinements:
\begin{align*}
& \Proph{p}{\conS{0}{\vec{v'}}} \wand \logrel{}{\resolveProph{p}{0}; \expr_2}{\expr_2}{\type} \\
& \Proph{p}{\conS{0}{\vec{v'}}} \wand \logrel{}{\resolveProph{p}{1}; \expr_3}{\expr_2}{\type}
\end{align*}
The first refinement is reduced to $\logrel{}{\expr_2}{\expr_2}{\type}$, which follows from the fundamental property (\Cref{reloc:lem:fundamental}) and the assumption that $\expr_2$ is well-typed.
To prove the second refinement we symbolically execute $\resolveProph{p}{1}$ on the left-hand side, at which point we reach a contradiction $0=1$.


\section{The logical relations model of \reloc}
\label{reloc:sec:model}
\reloc extends Iris with logical connectives and corresponding proof rules for reasoning about refinements.
In this section we show how this is achieved by modeling the connectives of \reloc through a shallow embedding in Iris and proving the logical rules of \reloc as mere lemmas in Iris.
We describe how the refinement judgment $\logrel{}{\expr_1}{\expr_2}{\type}$ is modeled through Iris's weakest preconditions and a \emph{ghost thread pool} construction (\Cref{reloc:sec:model:refinement}) combined with a \emph{binary logical relation} $\Sem{\type}_\Delta$ that describes when values are related (\Cref{reloc:sec:model:logrel}).
We then summarize how the \reloc proof rules (\Cref{reloc:sec:model:rules}) and soundness theorem (\Cref{reloc:sec:model:soundness}) are proved.
The key definitions of the \reloc model are shown in \Cref{reloc:fig:logrel_def}.

The construction of our model generalizes prior work by Turon \etal~\cite{turon:thamsborg:ahmed:birkedal:dreyer:2013,CaReSL}, which culminated in the CaReSL logic, and was subsequently mechanized in Iris by Krebbers \etal~\cite{irisIPM} and Timany~\cite{amin:thesis}.
We discuss the differences in \Cref{reloc:sec:soundness:related}.

\begin{figure}
\raggedright
\textbf{Refinement judgments}:
\medskip
\begin{align*}
  \logrel[\mask]{\Delta & }{\expr_1}{\expr_2}{\type} \eqdef
  \All i,\lctx.
  \begin{array}[t]{@{} l}
  \specctx \wand i \tpto \fillctx\lctx[\expr_2] \vsW[\mask][\top] \\
  \wpre{\expr_1}{\Ret \val_1. \Exists \val_2. i \tpto \fillctx\lctx[\val_2] \ast \Sem{\type}_{\Delta}(\val_1, \val_2)}
  \end{array}
\end{align*}

\bigskip
\textbf{Interpretation of types}:
\medskip
\begin{align*}
\Sem{\tvar}_{\Delta} \eqdef{}& \Lam (\val_1, \val_2). \Delta(\tvar)(\val_1, \val_2) \\
\Sem{\tunit}_\Delta \eqdef{}& \Lam (\val_1, \val_2). \val_1 = \val_2 = \unittt \\
\Sem{\tbool}_\Delta \eqdef{}& \Lam (\val_1, \val_2). (\val_1 = \val_2 = \<true>) \vee (\val_1 = \val_2 = \<false>) \\
\Sem{\tint}_\Delta \eqdef{}& \Lam (\val_1, \val_2). \Exists n \in \mathbb{Z}. \val_1 = \val_2 = n \\
\Sem{\type \times \typeB}_\Delta \eqdef{}& \Lam (\val_1, \val_2).
\Exists \valB_1,\valB_2,\valB'_1,\valB'_2.
\begin{array}[t]{@{}l}
   \val_1 = (\valB_1, \valB_2) \ast \val_2 = (\valB'_1, \valB'_2) \ast{} \\
   \Sem{\type}_\Delta(\valB_1, \valB'_1) \ast
     \Sem{\typeB}_\Delta(\valB_2, \valB'_2)
\end{array}\\
\Sem{\type + \typeB}_\Delta \eqdef{}& \Lam (\val_1, \val_2).
   \Exists \valB_1,\valB_2.
\begin{array}[t]{@{}l}
    \left( \val_1 = \<inl>(\valB_1) \ast \val_2 = \<inl>(\valB_2) \ast{}
           \Sem{\type}_{\Delta}(\valB_1, \valB_2) \right) \vee{} \\
    \left( \val_1 = \<inr>(\valB_1) \ast \val_2 = \<inr>(\valB_2) \ast
           \Sem{\typeB}_{\Delta}(\valB_1, \valB_2) \right)
\end{array}\\
\Sem{\type \to \typeB}_\Delta \eqdef{}& \Lam (\val_1, \val_2).
  \always \left(\All \valB_1,\valB_2. \Sem{\type}_{\Delta}(\valB_1, \valB_2) \wand
                                       (\logrel{\Delta}{\val_1\ \valB_1}{\val_2\ \valB_2}{\typeB})\right) \\
\Sem{\tforall{\Ret \tvar.\type}}_\Delta \eqdef{}& \Lam (\val_1, \val_2).
\always \left(\All \pred \in \Val \times \Val \to \PersistentProp.
   (\logrel{\deltainsert{\tvar}{\pred}{\Delta}}{\tapp{\val_1}}{\tapp{\val_2}}{\type})\right) \\
\Sem{\texists{\Ret \tvar.\type}}_\Delta \eqdef{}& \Lam (\val_1, \val_2).
\Exists \pred \in \Val \times \Val \to \PersistentProp. \Sem{\type}_{\deltainsert{\tvar}{\pred}{\Delta}}(\val_1, \val_2) \\
\Sem{\trec{\Ret \tvar.\type}}_{\Delta} \eqdef{}& \MU \pred. \Lam (\val_1, \val_2).
\Exists \valB_1,\valB_2. \val_1 = \<fold>(\valB_1) \ast
\val_2 = \<fold>(\valB_2) \ast
   \later \Sem{\type}_{\deltainsert{\tvar}{\pred}{\Delta}}(\valB_1, \valB_2) \\[0.3em]
\Sem{\tref\ \type}_{\Delta} \eqdef{}& \Lam (\val_1, \val_2).
 \Exists \loc_1,\loc_2 \in \Loc.
 \begin{array}[t]{@{}l}
\val_1 = \loc_1 \ast \val_2 = \loc_2 \ast{} \\
\knowInv{(\loc_1, \loc_2)}
{
\Exists \valB_1,\valB_2. \loc_1 \mapstoI \valB_1 \ast \loc_2 \mapstoS \valB_2 \ast \Sem{\type}_{\Delta}(\valB_1, \valB_2)
}
\end{array}
\end{align*}
\caption{The model of \reloc in Iris.}
\label{reloc:fig:logrel_def}
\end{figure}

\newcommand{\ghostThreadPoolFigure}{
\begin{figure}
\begin{mathpar}
\inferH{step-pure}
  {\specctx \and
   i \tpto \expr \and
   \pureexec{\expr}{\expr'}}
  {\pvs[\mask] i \tpto \expr'}
\and
\inferH{step-alloc}
  {\specctx \and
   i \tpto \fillctx\lctx[\Ref(\val)]}
  {\pvs[\mask] \Exists \loc.
   i \tpto \fillctx\lctx[\loc] \ast \loc \mapstoS \val}
\and
\inferH{step-store}
  {\specctx \and
   i \tpto \fillctx\lctx[\loc \gets \valB] \and
   \loc \mapstoS \val}
  {\pvs[\mask]
   i \tpto \fillctx\lctx[\unittt] \ast \loc \mapstoS \valB}
\and
\inferH{step-fork}
  {\specctx \and
   i \tpto \fillctx\lctx[\Fork{\expr}]}
  {\pvs[\mask] \Exists j.
   i \tpto \fillctx\lctx[\unittt] \ast j \tpto \expr }
\end{mathpar}
\caption{Selected rules for the ghost thread pool.}
\label{reloc:fig:ghost_thread_pool}
\end{figure}}

\subsection{The refinement judgment}
\label{reloc:sec:model:refinement}
Recall from \Cref{reloc:subsec:grammar} that the intuitive meaning of the refinement proposition $\logrel{}{\expr_1}{\expr_2}{\type}$ is that \emph{any} behavior of $\expr_1$ can be simulated by \emph{some} behavior of~$\expr_2$.
This intuitive idea is modeled in Iris as follows:
\begin{align*}
  \logrel[\mask]{\Delta & }{\expr_1}{\expr_2}{\type} \eqdef
  \All i,\lctx.
  \begin{array}[t]{@{} l}
  \specctx \wand i \tpto \fillctx\lctx[\expr_2] \vsW[\mask][\top] \\
  \wpre{\expr_1}{\Ret \val_1. \Exists \val_2. i \tpto \fillctx\lctx[\val_2] \ast \Sem{\type}_{\Delta}(\val_1, \val_2)}
  \end{array}
\end{align*}

This definition is quite a mouthful, so let us go over it piece by piece.
First, it involves Iris's \emph{weakest precondition} connective $\wpre {\expr} \pred$, which gives the weakest precondition under which execution of $\expr$ is safe, and when $\expr$ returns with value $\val$, the postcondition $\pred(\val)$ holds.
Second, it involves the \emph{ghost thread pool} connective $i \tpto \expr$, which is defined through Iris's ghost theory, and states that the $i$-th ghost thread is executing a program $\expr$.
Putting these pieces together (ignoring $\specctx$ and $\vsW[\mask][\top]$ for now), this definition states that if a (ghost) thread $i$ is executing right-hand side $\expr_2$, and left-hand side $\expr_1$ reduces to some value $\val_1$, then a corresponding execution can be made so that (ghost) thread $i$ is executing right-hand side~$\val_2$.
The result values $\val_1$ and $\val_2$ of the left-hand and right-hand side should be related via the value interpretation $\Sem{\type}_{\Delta}(\val_1, \val_2)$, which we model in \Cref{reloc:sec:model:logrel} via a logical relation.
The quantification over $\lctx$ closes the definition under evaluation contexts.
The expression $\expr_1$ on the left-hand side does not need to be closed under evaluation contexts because weakest preconditions enjoy the rule: $\wpre \expr {\Ret \valB. \wpre {\fillctx\lctx[\valB]} \pred} \wand
\wpre {\fillctx\lctx[\expr]} \pred$.

The ghost thread pool predicates satisfy a number of symbolic execution rules corresponding to executions in the operational semantics.
A selection of these rules is given in \Cref{reloc:fig:ghost_thread_pool}.
The $\specctx$ proposition is an Iris invariant that ties together the thread pool connectives $i \tpto \expr$ and the heap assertions $\loc \mapstoS \val$ with a matching execution on the right-hand side.
We will explain the role of $\specctx$ in \Cref{reloc:sec:model:soundness}.

We should emphasize that the combination of the weakest precondition and the ghost thread pool in the definition of $\logrel[\mask]{\Delta}{\expr_1}{\expr_2}{\type}$ model the demonic nature of $\expr_1$ and the angelic nature of $\expr_2$.
To prove the weakest precondition $\wpre{\expr_1}{\Ret \val_1. \Exists \val_2. i \tpto \fillctx\lctx[\val_2] \ast \Sem{\type}_{\Delta}(\val_1, \val_2)}$ one has to consider all behaviors of $\expr_1$, but has to establish only a single matching execution for $\expr_2$ by using the appropriate rules for the ghost thread pool.

\ghostThreadPoolFigure

\subsection{The logical relation}
\label{reloc:sec:model:logrel}

The interpretation of types $\Sem{\type}_{\Delta}(\val_1, \val_2)$, as defined in \Cref{reloc:fig:logrel_def}, expresses when two values $\val_1$ and $\val_2$ are related at type $\type$ (in context $\Delta$).
The definition of $\Sem{\type}_{\Delta}(\val_1, \val_2)$ follows the usual structure of a logical relation, it is defined recursively on the structure of the type $\type$ and uses the corresponding logical connectives via the Curry-Howard isomorphism.
For example, products are defined via (separating) conjunction, sums are defined via disjunction, functions are defined via (separating) implication, universal types are defined via universal quantification, \etc

The interpretation of recursive types and reference types are somewhat more interesting, as they make use of Iris-specific connectives.
The interpretation of the recursive type $\trec{\Ret \tvar.\type}$ makes use of Iris's guarded fixed point operator $\MU \var. \term$, which is used to define recursive predicates without a restriction of the variance of the recursive occurrence $\var$ in $\term$, but requires $\var$ to appear in \emph{guarded} position, \ie under the later modality $\later$ \cite[Section 5.6]{irisJFP}.
To define the interpretation of the reference type $\tref\ \type$, we use the invariant
\[
\knowInv{(\loc_1, \loc_2)}
{
\Exists \valB_1,\valB_2. \loc_1 \mapstoI \valB_1 \ast \loc_2 \mapstoS \valB_2 \ast \Sem{\type}_{\Delta}(\valB_1, \valB_2)
},
\]
which states that whatever values are stored in $\loc_1$ and $\loc_2$ are always related at type $\Sem{\type}_{\Delta}$.

The persistence modality $\always$ in the interpretation for function types and universal types is used to ensure that the type interpretation is persistent and prevents the kind of issues described in \Cref{reloc:sec:persistence}.
Similarly, in the interpretation of the universal and existential types we quantify over a \emph{persistent} predicate $\pred \in \Val \times \Val \to \PersistentProp$, where $\PersistentProp$ is the subset of Iris propositions that is persistent.

\subsection{Differences with prior work.}
\label{reloc:sec:soundness:related}

The definition of the refinement $\logrel[\mask]{\Delta}{\expr_1}{\expr_2}{\type}$ and value interpretation $\Sem{\type}_{\Delta}(\val_1, \val_2)$ generalize the versions by Krebbers \etal~\cite{irisIPM} and Timany \etal~\cite{amin:thesis}, which in turn adapted ghost thread pools by Turon \etal~\cite{turon:thamsborg:ahmed:birkedal:dreyer:2013,CaReSL} by modeling these in Iris.
The main novelty is that our refinement judgment $\logrel[\mask]{\Delta}{\expr_1}{\expr_2}{\type}$ is a first-class Iris proposition, instead of a meta-logical proposition.
As we have demonstrated throughout this paper, this modification is simple, albeit crucial for writing conditional refinements and to obtain high-level proof rules for refinements.

Furthermore, to obtain high-level proof rules for invariants, we have equipped the refinement judgment with a mask $\mask$, which keeps track of the invariants that may be opened.
To give the appropriate semantics to the mask $\mask$, our definition involves the update modality $\vsW[\mask][\top]$.
Note that the definition by Krebbers \etal~\cite{irisIPM} and Timany~\cite{amin:thesis} is logically equivalent to $\vdash \logrel[\top]{\Delta}{\expr_1}{\expr_2}{\type}$, where the derivability relation $\proves$ of Iris is used to turn the judgment into a meta theoretical proposition, and the mask is set to $\top$.

\subsection{Deriving the primitive rules}
\label{reloc:sec:model:rules}

In \Cref{reloc:sec:calculus} we have demonstrated that \reloc's primitive monadic (\ruleref{rel-return} and \ruleref{rel-bind}) and symbolic execution rules can be used to derive \reloc's high-level proof rules, such as its type-directed structural rules.
In this section, we indicate how \reloc's primitive rules are proved by unfolding the definition of the refinement judgment.
We prove the symbolic execution rules through the following auxiliary rules, which allow us to lift Iris's rules for weakest preconditions and the ghost thread pool rules (\Cref{reloc:fig:ghost_thread_pool}) to the refinement judgment:
\begin{mathparpagebreakable}
\inferH{rel-wp-l}
  {\wpre{\expr_1}[\top]{\Ret \val_1. \logrel{}{\fillctx\lctx[\val_1]}{\expr_2}{\type}}}
  {\logrel{}{\fillctx\lctx[\expr_1]}{\expr_2}{\type}}
\and
\inferH{rel-wp-atomic-l}
  {\pvs[\top][\mask] \wpre{\expr_1}[\mask]{\Ret \val_1. \logrel[\mask]{}{\fillctx\lctx[\val_1]}{\expr_2}{\type}}
   \and \physatomic{\expr_1}}
  {\logrel{}{\fillctx\lctx[\expr_1]}{\expr_2}{\type}}
\and
  \inferH{rel-step-r}
  {\All j, \lctx'. \specctx \ast j \tpto \fillctx\lctx'[\fillctx\lctx[\expr_2]]
   \vsW[\mask] \Exists \val_2. j \tpto \fillctx\lctx'[\fillctx\lctx[\val_2]] {}\ast{}
      \logrel[\mask]{}{\expr_1}{\fillctx\lctx[\val_2]}{\type}}
  {\logrel[\mask]{}{\expr_1}{\fillctx\lctx[\expr_2]}{\type}}
\end{mathparpagebreakable}
The rule \ruleref{rel-wp-l} says that we can ``take out'' an expression $\expr_1$ in context $\lctx$ on the left-hand side, and reason about it using Iris's weakest precondition.
The rule \ruleref{rel-wp-atomic-l} is similar, but it also allows for opening an invariant around $\expr_1$, in case $\expr_1$ is atomic.\footnote{Iris's weakest precondition connective $\wpre \expr [\mask] \pred$ is also equipped with a mask to keep track of which invariants may be opened.
This was the inspiration for the mask annotation at \reloc's refinement judgment.}
The rule \ruleref{rel-step-r} says that if we have an expression $\expr_2$ on the right-hand side in an evaluation context $\lctx$, and we can reduce $\expr_2$ to a value $\val_2$, using the ghost thread pool rules,
then we can reduce the refinement proposition to $\logrel[\mask]{}{\expr_1}{\fillctx\lctx[\val_2]}{\type}$.

\subsection{Soundness}
\label{reloc:sec:model:soundness}
Utilizing the definitions in this section, we outline the proof of the soundness theorem (\Cref{reloc:thm:soundness_full}), which says that \reloc's refinement judgment is sound \wrt contextual refinement.
Formally, if $\logrel{\Delta}[\vctx]{\expr_1}{\expr_2}{\type}$ is derivable in \reloc for any $\Delta$ with $\tenv \subseteq \dom(\Delta)$, then $\ctxref{\tenv \mid \vctx}{\expr_1}{\expr_2}{\type}$.
To prove this theorem we make use of two key lemmas: adequacy of the refinement judgment (\Cref{reloc:lem:refines_adequate}), and the fact that the refinement judgment is a precongruence (\Cref{reloc:lem:refines_precongruence}).

\begin{thm}[Adequacy of \reloc]
  \label[thm]{reloc:lem:refines_adequate}
  If $\proves \logrel{\Delta}{\expr_1}{\expr_2}{\type}$ is derivable in \reloc,
  and $(\expr_1, \stateS) \tpstep^{\ast} (\conS{\val_1}\efs_1, \stateS'_1)$,
  then there exists $\val_2, \efs_2$, and $\stateS'_2$ such that
  $(\expr_2, \stateS) \tpstep^{\ast} (\conS{\val_2}{\efs_2}, \stateS'_2)$.
\end{thm}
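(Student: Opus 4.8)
The plan is to unfold the model definition of the refinement judgment from Figure~\ref{reloc:fig:logrel_def}, reduce the statement to Iris's adequacy theorem for weakest preconditions, and then use the $\specctx$ invariant to transport the resulting conclusion about the ghost thread pool back into an actual reduction of $\expr_2$.

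First I would instantiate the universal quantifiers $i$ and $\lctx$ in the definition of $\logrel{\Delta}{\expr_1}{\expr_2}{\type}$ with the main thread $i = 0$ and the empty evaluation context $\lctx = \hole$, so that $\fillctx\lctx[\expr_2] = \expr_2$. This turns the hypothesis into
\[
  \specctx \wand \big(0 \tpto \expr_2 \vsW[\top][\top]
  \wpre{\expr_1}{\Ret \val_1. \Exists \val_2. 0 \tpto \val_2 \ast \Sem{\type}_\Delta(\val_1, \val_2)}\big).
\]
Next I would allocate the ghost state backing the right-hand side: I install a ghost thread pool whose thread $0$ is $\expr_2$ (giving $0 \tpto \expr_2$), together with a ghost copy of the specification heap initialized to $\stateS$, and I allocate the invariant $\specctx$ that ties these pieces together. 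Feeding $\specctx$ and $0 \tpto \expr_2$ into the implication above leaves me, under a leading update modality $\pvs[\top][\top]$, with the weakest precondition $\wpre{\expr_1}{\Ret \val_1. \Exists \val_2. 0 \tpto \val_2 \ast \Sem{\type}_\Delta(\val_1, \val_2)}$ holding under the freshly allocated Iris resources---exactly the shape consumed by the adequacy theorem.

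I would then appeal to Iris's adequacy theorem for weakest preconditions, applied to the given execution $(\expr_1, \stateS) \tpstep^\ast (\conS{\val_1}{\efs_1}, \stateS'_1)$. Adequacy ensures that every such execution is safe and that, since the main thread has reduced to the value $\val_1$, the postcondition is realized in the final state; I use the strong form of adequacy that grants access, at that point, to the contents of all invariants. Thus I obtain a witness $\val_2$, the fragment $0 \tpto \val_2$, and $\Sem{\type}_\Delta(\val_1, \val_2)$ (the latter can be discarded, since the statement does not require the result values to be related).

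The crux, and the step I expect to be the main obstacle, is extracting an \emph{actual} reduction of $\expr_2$ from the ghost assertion $0 \tpto \val_2$. This is exactly the purpose of $\specctx$: it maintains the authoritative thread-pool and specification-heap resources together with an embedded (pure, meta-level) reachability fact stating that the currently tracked configuration $\cfgg$ satisfies $(\expr_2, \stateS) \tpstep^\ast \cfgg$. Soundness of this invariant is guaranteed once and for all by the design of the ghost thread pool rules of Figure~\ref{reloc:fig:ghost_thread_pool} (\ruleref{step-pure}, \ruleref{step-alloc}, \ruleref{step-store}, \ruleref{step-fork}): each advances $\cfgg$ by precisely one step of $\tpstep$, so reachability from $(\expr_2, \stateS)$ is preserved under every update. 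Opening $\specctx$ at the end and combining the fragment $0 \tpto \val_2$ with the authoritative configuration $\cfgg$ (via agreement of the authoritative resource algebra) forces the main thread of $\cfgg$ to be $\val_2$; writing $\cfgg = (\conS{\val_2}{\efs_2}, \stateS'_2)$ then yields the promised reduction $(\expr_2, \stateS) \tpstep^\ast (\conS{\val_2}{\efs_2}, \stateS'_2)$, which is the meta-level goal.
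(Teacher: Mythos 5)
Your proposal matches the paper's proof: allocate $\specctx$ and $0 \tpto \expr_2$, unfold the refinement judgment to obtain the weakest precondition, apply Iris's adequacy theorem, and then use the reachability fact stored inside $\specctx$ (together with agreement between the authoritative ghost thread pool and the fragment $0 \tpto \val_2$) to extract the actual reduction of $\expr_2$. The paper likewise leaves the ghost-theory bookkeeping to the Coq mechanization, so your level of detail is appropriate.
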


\begin{lem}
\label[lem]{reloc:lem:refines_precongruence}
Let $\pctx$ be a well-typed context
$\typedctx{\pctx}{\tenv}{\vctx}{\type}{\tenv'}{\vctx'}{\type'}$, then we have
\[
  \always (\All \Delta. \logrel{\Delta}[\vctx]{\expr_1}{\expr_2}{\type})
  \wand (\All \Delta'. \logrel{\Delta'}[\vctx']{\fillctx\pctx[\expr_1]}{\fillctx\pctx[\expr_2]}{\type'})
\]
where $\Delta$ and $\Delta'$ contain at least the type variables in $\tenv$ and $\tenv'$, respectively.
\end{lem}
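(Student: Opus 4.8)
The plan is to prove the statement by induction on the typing derivation of the context $\pctx$, that is, on the structure of the well-typed context $\typedctx{\pctx}{\tenv}{\vctx}{\type}{\tenv'}{\vctx'}{\type'}$. At every step I rely on the open-term type-directed structural rules (compatibility lemmas) from \Cref{reloc:sec:fundamental} --- one for each context former --- together with the fundamental theorem (\Cref{reloc:lem:fundamental}) for the subexpressions that sit alongside the hole. For the base case $\pctx = \Box$, the context typing forces $\tenv = \tenv'$, $\vctx = \vctx'$, and $\type = \type'$, and $\fillctx{\pctx}[\expr_i] = \expr_i$; the goal $\All \Delta'. \logrel{\Delta'}[\vctx']{\expr_1}{\expr_2}{\type'}$ is then immediate from the hypothesis by stripping the persistence modality with \ruleref{always-elim}.

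For the inductive cases, I would handle each context former with its corresponding compatibility lemma. Unary formers that neither bind variables nor introduce siblings, such as $\proj_i(\pctx')$ or $\<fold>\,\pctx'$, are dispatched by applying the matching compatibility lemma and discharging its premise with the induction hypothesis for the subcontext $\pctx'$. Formers that place the hole beside a fixed well-typed subexpression, such as the application contexts $\pctx'(\expr_2)$ and $\expr_1(\pctx')$ or the pairing context $(\pctx', \expr_2)$, are handled by combining the induction hypothesis for $\pctx'$ with the self-refinement of the fixed operand; the latter is exactly \Cref{reloc:lem:fundamental} applied to the typing premise of the context, since every subexpression of a well-typed context is itself well typed.

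Two points require care, both absorbed by the shape of the hypothesis $\always(\All \Delta. \logrel{\Delta}[\vctx]{\expr_1}{\expr_2}{\type})$. First, context formers that place the hole under a binder, such as the recursive-function context $\Rec f \var = \pctx'$, are discharged by compatibility lemmas (the open-term analogue of \ruleref{rel-rec}) whose premise must be proven under $\always$. I would establish that premise by introducing the persistence modality with \ruleref{always-intro}; this is sound precisely because the sole hypothesis in scope is persistent, so it survives the introduction and remains available when I invoke the induction hypothesis for $\pctx'$. Second, formers that bind type variables, such as $\tlam{\pctx'}$, $\<pack>(\pctx')$ and the unpack context, are handled by compatibility lemmas that extend the type-variable environment; here the leading $\All \Delta$ lets me instantiate the hypothesis at the ambient environment extended by a fresh persistent interpretation for the bound variable, matching what the induction hypothesis for $\pctx'$ requires.

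I expect the principal difficulty to be bookkeeping rather than conceptual depth. The essential content --- that refinement is preserved by each individual program construct --- has already been packaged into the compatibility lemmas, so the precongruence proof is a routine structural induction once those lemmas are in hand. The real work is to confirm that a compatibility lemma is available for every context former in the grammar (including those elided by $\dots$ in \Cref{reloc:sec:language}), and to check that each application threads the invariant masks, the persistence modality, and the type-variable environments exactly as the induction demands.
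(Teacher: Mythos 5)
Your proposal is correct and follows essentially the same route as the paper, which proves this lemma by induction on $\pctx$ using the type-directed structural rules (compatibility lemmas) for open terms from \Cref{reloc:sec:fundamental}. The additional details you supply---invoking \Cref{reloc:lem:fundamental} for the well-typed siblings of the hole, and using the persistence of the hypothesis together with \ruleref{always-intro} and the leading $\All \Delta$ to handle term-level and type-level binders---are exactly the bookkeeping the paper elides.
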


\Cref{reloc:lem:refines_precongruence} is proved by induction on $\pctx$ making using of \reloc's type-directed structural rules (\Cref{reloc:sec:fundamental}).
The proof of \Cref{reloc:lem:refines_adequate} is rather involved, so before we discuss that, let us see how we prove the soundness theorem by putting these two lemmas together.

\begin{proof}[Proof of \Cref{reloc:thm:soundness_full} (Soundness for open terms).]
Let $\tenv$ be a type environment, and suppose that $\logrel{\Delta}[\vctx]{\expr_1}{\expr_2}{\type}$ is derivable in \reloc for any $\Delta$ with $\tenv \subseteq \dom(\Delta)$.
To prove $\ctxref{\tenv \mid \vctx}{\expr_1}{\expr_2}{\type}$,
suppose we have typed context $\typedctx{\pctx}{\tenv}{\vctx}{\type}{\emptyset}{\emptyset}{\type'}$,
and reduction $(\fillctx\pctx[\expr_1], \emptyset) \tpstep^{\ast} (\conS{\val_1}{\efs_1}, \stateS_1)$.
By \Cref{reloc:lem:refines_precongruence}, we have
$
  \logrel{}{\fillctx\pctx[\expr_1]}{\fillctx\pctx[\expr_2]}{\type'}
$.
Then, by \Cref{reloc:lem:refines_adequate}, we get that $(\fillctx\pctx[\expr_2], \emptyset) \tpstep^{\ast} (\conS{\val_2}{\efs_2}, \stateS_2)$ for some $\val_2, \efs_2$ and $\stateS_2$, which concludes the proof.
\end{proof}

\begin{proof}[Proof of \Cref{reloc:lem:refines_adequate} (Adequacy of \reloc)]
Suppose that $\logrel{\Delta}{\expr_1}{\expr_2}{\type}$ is derivable in \reloc, and we have $(\expr_1, \stateS) \tpstep^{\ast} (\conS{\val_1}\efs_1, \stateS'_1)$.
Now we should exhibit $\val_2, \efs_2$, and $\stateS'_2$ such that $(\expr_2, \stateS) \tpstep^{\ast} (\conS{\val_2}{\efs_2}, \stateS'_2)$.
The high-level structure of the proof is as follows.
First, we allocate the thread pool invariant $\specctx$ and $0 \tpto \expr_2$ for the main-thread of the right-hand side.
Second, by definition of the refinement judgment, we obtain a weakest precondition $\wpre{\expr_1}{\Ret \val_1. \Exists \val_2. 0 \tpto \val_2 \ast \Sem{\type}_{\Delta}(\val_1, \val_2)}$.
Third, by opening $\specctx$ and using adequacy of Iris's weakest preconditions, we obtain $(\expr_2, \stateS) \tpstep^{\ast} (\conS{\val_2}{\efs_2}, \stateS'_2)$.

Carrying out these steps in detail---notably, setting up the required ghost theory for the ghost thread pool---involves some intricate reasoning using Iris features that are out of scope for this paper.
We thus refer the interested reader to the Coq mechanization, and only highlight the key part---the definition of the thread pool invariant $\specctx$:
\[
  \specctx \eqdef \Exists \vec\expr_0,\stateS_0. \knowInv{\namesp_{\textnormal{\reloc}}}{
    \Exists \vec{\expr},\stateS.
    \specinv(\vec{\expr}, \stateS) \ast
    (\vec{\expr}_0, \stateS_0) \tpstep^{\ast} (\vec{\expr}, \stateS)
  }.
\]
The invariant asserts that given an initial configuration $(\vec\expr_0,\stateS_0)$ for the right-hand side (which we set to be $(\expr_2,\stateS)$ when allocating the invariant), the configuration $(\vec{\expr}, \stateS)$ can be reached via the reduction $(\vec{\expr}_0, \stateS_0) \tpstep^{\ast} (\vec{\expr}, \stateS)$.
Here, $\specinv(\vec{\expr}, \stateS)$ is a connective defined using Iris's ghost theory that keeps track of the configuration of the ghost thread pool and ensures it is consistent with the $\tpto$ and $\mapstoS$ connectives.
The latter is essential, as it allows us to conclude from $\specinv(\vec{\expr}, \stateS)$ and $0 \tpto \val_2$ (as given by the post condition of the weakest precondition in the definition of the refinement judgment) that $\vec{\expr}$ is equal to $\conS{\val_2}{\efs_2}$ for some $\efs_2$.
By definition of the invariant $\specctx$, this gives us a reduction $(\expr_2, \stateS) \tpstep^{\ast} (\conS{\val_2}{\efs_2}, \stateS'_2)$ for the right-hand side, which is needed to conclude the third step of the proof.
\end{proof}


\section{The Coq mechanization of \reloc}
\label{reloc:sec:formalization}
The Coq mechanization of \reloc provides a soundness proof of \reloc and infrastructure to carry out interactive tactic-based refinement proofs.
It is built on top of the mechanization of Iris in Coq~\cite{irisWWW} and the Iris Proof Mode/MoSeL framework for tactic-based proofs in separation logic~\cite{irisIPM,MoSeL}.
In this section we examine the way \reloc's language and type system are defined (\Cref{reloc:subsec:coq:language}), and how the \reloc logic is defined on top of that (\Cref{reloc:subsec:coq:logic}).
We then describe \reloc's tactic support for interactive refinement proofs, which allows us to seamlessly carry out proofs in Coq similar to those we have seen in this paper (\Cref{reloc:subsec:coq:tactic}).
Finally, we give an overview of the source code (\Cref{reloc:subsec:coq:overview}).

\subsection{The programming language}
\label{reloc:subsec:coq:language}

Iris is a programming language independent framework, which means that it can be instantiated with a programming language of choice.
In this paper, we do not make use of this generality, and use \HeapLang---the default language shipped with Iris's Coq development, which is essentially an untyped version of the language we considered in \Cref{reloc:sec:language}.
\HeapLang is represented via a deep embedding and comes with a set of notations so that programs can be written in Coq-style syntax.
For example, the Boolean implementation $\bitbool$ of the bit module from \Cref{reloc:subsec:representation_independence} is written as follows:
\begin{lstlisting}[language=Coq]
Definition bit_bool : expr :=
  (#true, (λ: "b", ~"b"), (λ: "b", "b")).
\end{lstlisting}

Binders in \HeapLang are represented as strings, which makes it possible to write programs in a human-readable way.
This works well in practice because expression-level substitution only acts on closed terms, and thus does not need to be capture avoiding.

We equip \HeapLang with a type system in the usual way---types \coqe{type} are defined as an inductive data type, and the typing judgment \coqe{typed} is defined as an inductive relation:
\begin{lstlisting}[language=Coq]
Inductive type :=
  | TVar : var → type
  | TProd : type → type → type
  | TArrow : type → type → type
  | TExists : {bind 1 of type} → type
  | (* ... *).
Inductive typed : stringmap type → expr → type → Prop :=
  | Var_typed Γ x τ :
     Γ !! x = Some τ → (Γ ⊢ₜ Var x : τ)
  | Pair_typed Γ e1 e2 τ1 τ2 :
     (Γ ⊢ₜ e1 : τ1) → (Γ ⊢ₜ e2 : τ2) → (Γ ⊢ₜ (e1, e2) : τ1 * τ2)
  | Fst_typed Γ e τ1 τ2 :
     (Γ ⊢ₜ e : τ1 * τ2) → (Γ ⊢ₜ Fst e : τ1)
  | (* ... *).
\end{lstlisting}
We use the notation \coqe{Γ ⊢ₜ e : τ} for \coqe{typed Γ e τ}, and overload the standard Coq notations for types, \eg we use the notation \coqe{τ1 * τ2}\, for \coqe{TProd τ1 τ2}\, and \coqe{τ1 → τ2}\, for \coqe{TArrow τ1 τ2}\,.
Since type-level substitution acts on (potentially) open terms, and therefore needs to be capture avoiding,
we use De Bruijn indices to represent type-level binders through the Autosubst Coq library \cite{schafer:tebbi:smolka:2015}.
For example, the type $\tbit \eqdef \texists{\tvar. \tvar \times (\tvar \to \tvar) \times (\tvar \to \tbool)}$ from \Cref{reloc:subsec:representation_independence} is represented in Coq as follows (\texttt{\#} is notation for \coqe{TVar}):
\begin{lstlisting}[language=Coq]
Definition bitτ : type := ∃: #0 * (#0 → #0) * (#0 → TBool).
\end{lstlisting}

\subsection{The \reloc logic}
\label{reloc:subsec:coq:logic}

Recall from \Cref{reloc:sec:model} that \reloc is defined as a shallow definition in Iris---the \reloc connectives are definitions in Iris, and the \reloc proof rules are lemmas in Iris.
In Coq we follow the same approach.
At the core of \reloc we have the definition \coqe{lrel} of \emph{semantic types}, \ie persistent Iris relations over \HeapLang values:
\begin{lstlisting}[language=Coq]
Record lrel Σ := LRel {
  lrel_car :> val → val → iProp Σ;
  lrel_persistent v1 v2 : Persistent (lrel_car v1 v2)
}.
\end{lstlisting}
Here, \coqe{iProp Σ} is the type of Iris propositions.\footnote{The parameter $\Sigma$ describes the kind of ghost state available in Iris.
It is an important but technical detail that can safely be ignored for the purpose of this paper.
An interested reader is directed to \cite[\S 4.7]{irisJFP}.}
The record bundles together a relation together with a proof that it is persistent.
The notation \coqe{:>} declares the field \coqe{lrel_car} as a coercion.
In the Coq mechanization of \reloc we generalize the refinement judgment $\logrel[\mask]{\Delta}{\expr_1}{\expr_2}{\type}$ to range over semantic types (\coqe{lrel}) instead of syntactic types (\coqe{type}):
\begin{lstlisting}[language=Coq]
Definition refines (E : coPset) (e1 e2 :$\,$expr) (A$\,$:$\,$lrel Σ) : iProp Σ :=
  ∀ j K, spec_ctx -∗ j ⤇ fill K e2 ={E,⊤}=∗
         WP e1 {{ v1, ∃ v2, j ⤇ fill K (of_val v2) ∗ A v1 v2 }}.
\end{lstlisting}
The parameter \coqe{E} corresponds to the mask $\mask$, and the semantic type \coqe{A} corresponds to the type interpretation $\Sem{\type}_{\Delta}$.
We use the notation \coqe{REL e1 << e2 @ E : A} for \coqe{refines E e1 e2 A}.
This definition makes use of the ghost thread pool connectives \coqe{spec_ctx} and \coqe{j ⤇ e}, as discussed in \Cref{reloc:sec:model:refinement}, and which were originally defined in Coq in~\cite{irisIPM}.

To formalize the refinement judgment on syntactic types, we first define the semantic interpretation $\Sem{\type}_{\Delta}$, denoted as \coqe{interp τ Δ} in Coq, which maps syntactic types $\type$ to semantic types.
To define the semantic interpretation, we define semantic type formers, which are combinators on semantic types corresponding to each syntactic type former.
For example, the semantic product type is defined as follows:
\begin{lstlisting}[language=Coq]
Definition lrel_prod (A B : lrel Σ) : lrel Σ := LRel (λ v1 v2,
  ∃ w1 w2 w1' w2', ⌜v1 = (w1,w1')%V⌝ ∧ ⌜v2 = (w2,w2')%V⌝ ∧ A w1 w2 ∗ B w1' w2').
\end{lstlisting}
Here, we use Iris's notion \coqe{⌜φ⌝} to embed Coq propositions \coqe{φ : Prop} into Iris, although on paper we take the equality predicate to be primitive.
With the above definitions at hand, we can now define \reloc's refinement judgment $\logrel[\mask]{\Delta}{\expr_1}{\expr_2}{\type}$ as \coqe{REL e1 << e2 @ E : interp τ Δ}\,.

\paragraph{The proof rules.}
For example, the rule \ruleref{rel-load-r} is formalized as the following lemma:
\begin{lstlisting}[language=Coq]
Lemma refines_load_r E K l q v e1 A :
  ↑ relocN ⊆ E →
  l ↦ₛ{q} v -∗
  (l ↦ₛ{q} v -∗ REL e1 << fill K (of_val v) @ E : A) -∗
  REL e1 << fill K !#l @ E : A.
\end{lstlisting}
The lemma states that, under the assumption that $\namecl{\textlog{relocN}} \subseteq \mask$ (\ie \reloc's internal invariants are available in the mask $\mask$), the following separation logic formula holds:
\[
  \loc \fmapstoS[q] \val \wand (\loc \fmapstoS[q] \val \wand{} \logrel[\mask]{}{\exprB}{\fillctx\lctx[\val]}{A})
  \wand \logrel[\mask]{}{\exprB}{\fillctx\lctx[\deref \loc]}{A}
\]
This is exactly the internalization of \ruleref{rel-load-r}.
The other \reloc proof rules are mechanized in a similar way.

\paragraph{Soundness.}
The versions of \reloc's soundness theorem for closed (\Cref{reloc:thm:soundness}) and open terms (\Cref{reloc:thm:soundness_full}) are stated in Coq as follows:
\begin{lstlisting}[language=Coq]
Lemma refines_sound Σ `{relocPreG Σ} e1 e2 τ :
  (∀ `{relocG Σ} Δ, ⊢ REL e1 << e2 : interp τ Δ) →
  ∅ ⊨ e1 ≤ctx≤ e2 : τ.
Lemma refines_sound_open Σ `{relocPreG Σ} Γ e1 e2 τ :
  (∀ `{relocG Σ} Δ, ⊢ {Δ;Γ} ⊨ e1 ≤log≤ e2 : τ) →
  Γ ⊨ e1 ≤ctx≤ e2 : τ.
\end{lstlisting}
Here, \coqe{Γ ⊨ e1 ≤ctx≤ e2 : τ} is the notion of contextual refinement, \coqe{{Δ;Γ$\}$ ⊨  e1 ≤log≤ e2 : τ} is the refinement judgment lifted to open expressions, and \coqe{⊢ P} expresses that the Iris proposition \coqe{P} is derivable.

\paragraph{Example proof: refinement of the bit module.}
In order to prove the contextual refinement \coqe{∅ ⊨ bit_bool ≤ctx≤ bit_nat : bitτ} from \Cref{reloc:subsec:representation_independence}, it suffices to prove the following:
\begin{lstlisting}[language=Coq]
Lemma bit_refinement Δ : ⊢ REL bit_bool << bit_nat : interp bitτ Δ.
\end{lstlisting}
To prove this lemma, we use the relation \coqe{R}, which is the same as the one in \Cref{reloc:subsec:representation_independence}, but wrapped into a semantic type (\coqe{lrel}) to ensure it is persistent:
\begin{lstlisting}[language=Coq]
Definition R : lrel Σ := LRel (λ v1 v2,
  (⌜v1 = #true⌝ ∧ ⌜v2 = #1⌝) ∨ (⌜v1 = #false⌝ ∧ ⌜v2 = #0⌝)).
\end{lstlisting}
Using the relation \coqe{R}, a Coq proof of the desired refinement is as follows:
\begin{lstlisting}[language=Coq]
Lemma bit_refinement Δ : ⊢ REL bit_bool << bit_nat : interp bitτ Δ.
Proof.
  unfold bitτ; simpl. iApply (refines_exists R). (* apply $\mbox{\ruleref{rel-pack}}$ *)
  progress repeat iApply refines_pair. (* repeatedly apply $\mbox{\ruleref{rel-pair}}$ *)
  - rel_values. (* apply $\mbox{\ruleref{rel-return}}$ and solve the goal *)
  - (* ... *)
Qed.
\end{lstlisting}
Finally, we combine \coqe{bit_refinement} with the soundness theorem to get a closed proof of contextual refinement:
\begin{lstlisting}[language=Coq]
Theorem bit_ctx_refinement : ∅ ⊨ bit_bool ≤ctx≤ bit_nat : bitτ
Proof. auto using (refines_sound relocΣ), bit_refinement. Qed.
\end{lstlisting}
It is important to emphasize that the contextual refinements, which we obtain in theorems like \coqe{bit_ctx_refinement} above, are closed propositions in Coq.
The statement (the type) of \coqe{bit_ctx_refinement} does not refer to \reloc or Iris.
This illustrates that the only parts of the trusted code base of our development are the notions that are involved in the definition of contextual refinement, \ie the operational semantics and the typing of contexts.

\subsection{Tactic support for interactive proofs}
\label{reloc:subsec:coq:tactic}
To prove refinement judgments, like the bit refinement \coqe{REL bit_bool << bit_nat : interp bitτ Δ} from the previous section, we can repeatedly apply the Iris lemmas corresponding to the \reloc proof rules.
However, doing so directly quickly becomes unwieldy, as the user has to manually provide the resources (like the precondition \coqe{l ↦ₛ{q$\}$ v} of \coqe{refines_load_r}), and manually select the evaluation context \coqe{K}.
For better usability we provide tactic support for symbolic execution.

\paragraph{Interactive separation logic proofs.}
To explain the tactics for \reloc that we have defined, let us first look at the general tactic support in Iris.
The Iris Proof Mode (IPM)~\cite{irisIPM} and its successor MoSeL~\cite{MoSeL} allow us to carry out separation logic proofs interactively, in the style of regular tactic-based proofs in Coq.
\begin{figure}
\begin{minipage}[b]{.4\linewidth}
\begin{lstlisting}[language=Coq]
-------------------∗
P -∗ (P -∗ Q) -∗ Q
\end{lstlisting}
\subcaption{Before executing any tactics.}
\end{minipage}%
\qquad\qquad
\begin{minipage}[b]{.4\linewidth}
\begin{lstlisting}[language=Coq]
"H1" : P
"H2" : P -∗ Q
-------------------∗
Q
\end{lstlisting}
\subcaption{After \coqe{iIntros "H1 H2".}}
\end{minipage}
\caption{Interactive proof of lemma \coqe{example} in IPM.}
\label{reloc:fig:ipm_simple}
\end{figure}
IPM provides a convenient representation of sequents for separation logic and tactics for manipulating them, allowing for interactive proof development in the style of regular proofs in Coq.
To illustrate this, consider the following separation logic tautology:
\begin{lstlisting}[language=Coq]
Lemma example (P Q : iProp Σ) : P -∗ (P -∗ Q) -∗ Q.
Proof. iIntros "H1 H2". iApply ("H2" with "H1"). Qed.
\end{lstlisting}
The intermediate results can be seen in \Cref{reloc:fig:ipm_simple}.
Applying \coqe{iIntros "H1 H2".} introduces the hypothesis \coqe{P} and \coqe{P -∗ Q} into the IPM context, giving them names \coqe{H1} and \coqe{H2}, respectively.
Then, \coqe{iApply ("H2" with "H1").} applies the separating implication \coqe{P -∗ Q} to the goal, using the hypothesis \coqe{H1 : P} as the assumption.

\paragraph{Symbolic execution tactics.}
In addition to tactics like \coqe{iIntros} and \coqe{iApply}, IPM provides tactic for symbolic execution in weakest preconditions.
We built similar tactics on top of IPM for symbolic execution in refinement judgments.
For example, consider:
\begin{lstlisting}[language=Coq]
Lemma example_load l : l ↦ₛ #0 -∗ REL #2 << (!#l + #2) : lrel_int.
Proof. iIntros "Hl". rel_load_r. rel_pures_r. rel_values. Qed.
\end{lstlisting}
\begin{figure}
\begin{minipage}[b]{.48\linewidth}
\begin{lstlisting}[language=Coq]
"Hl" : l ↦ₛ #0
-------------------------------∗
REL #2 << (!#l + #2) : lrel_int
\end{lstlisting}
\subcaption{Before applying \coqe{rel\_load\_r}.}
\end{minipage}%
\qquad
\begin{minipage}[b]{.44\linewidth}
\begin{lstlisting}[language=Coq]
"Hl" : l ↦ₛ #0
------------------------------∗
REL #2 << (#0 + #2) : lrel_int
\end{lstlisting}
\subcaption{After applying \coqe{rel\_load\_r}.}
\end{minipage}
\begin{minipage}[b]{.49\textwidth}
\begin{lstlisting}[language=Coq]
"Hl" : l ↦ₛ #0
-----------------------------∗
REL #2 << #(0 + 2) : lrel_int
\end{lstlisting}
\subcaption{After applying \coqe{rel\_pures\_r}.}
\end{minipage}
\caption{Interactive refinement proof of lemma \coqe{example_load} in \reloc.}
\label{reloc:fig:load_r_ipm}
\end{figure}
The results of \coqe{rel_load_r} and \coqe{rel_pures_r} can be seen in \Cref{reloc:fig:load_r_ipm}.
The tactic \coqe{rel_load_r} symbolically executes the dereferencing operation, and the tactic \coqe{rel_pures_r} symbolically executes as many pure reduction steps as possible.
The tactic \coqe{rel_values} finishes the goal since both sides are values.
Similarly, we built tactics for all other language connectives (both on the left- and right-hand side).
The tactics were developed in a similar way to the weakest-precondition tactics from IPM, and we refer the reader to~\cite{irisIPM} for details.

\subsection{Overview of the source code.}
\label{reloc:subsec:coq:overview}
The Coq mechanization contains around 10300 lines of code, of which approximately
\begin{enumerate*}
\item 1315 lines for mechanization of the model of \reloc (\Cref{reloc:sec:model}), including the adequacy theorem (\Cref{reloc:lem:refines_adequate}), and the primitive and derived rules (\Cref{reloc:sec:rules});
\item 1200 lines for the tactics (\Cref{reloc:subsec:coq:tactic});
\item 1450 lines for the mechanization of the type system (\Cref{reloc:sec:language}), and the soundness theorem for open term (\Cref{reloc:thm:soundness_full});
\item 6050 lines for the examples and case studies (including the case studies we describe in the upcoming \Cref{reloc:sec:other_examples,reloc:sec:escape-hatch});
\item and 140 lines for tests (mainly regression tests for the tactics).
\end{enumerate*}


\section{Related work}
\label{reloc:sec:related-work}
We described some of the most closely related work in the introduction (\Cref{reloc:sec:intro}), we now
discuss other related work on logical relations models, relational logics, atomic specifications, speculative reasoning, and linearizability.

\paragraph{Logical relations models.}
Logical relations models over denotational and operational semantics have an extensive history.
To cover advanced programming language features such as recursive types and higher-order references, logical relations with step-indexing have been introduced~\cite{ahmed:appel:virga:2002,ahmed:thesis,ahmed:dreyer:rossberg:2009,birkedal:etal:11}.
Step-indexing has shown to be very effective by a large body of work on step-indexed logical relations models, \eg~\cite{neis:dreyer:rossberg:2011,hur:dreyer:2011,birkedal:Sieczkowski:thamsborg:2012,cicek:paraskevopoulou:garg:2016,rajani:garg:2018}.
However, in these papers step-indices appear explicitly in the definition of the logical relations model and the proofs about it.
In contrast in this paper we have used the ``logical approach'' to step-indexed logical relations.
This approach, pioneered by Dreyer \etal in the LSLR logic~\cite{dreyer:ahmed:birkedal:2009}, hides step-indices by abstracting and internalizing them in a logic using the later modality ($\later$)~\cite{appel:mellies:richards:vouillon:2007}.
Dreyer \etal used this approach to construct a binary logical relations model for System F with recursive types~\cite{dreyer:ahmed:birkedal:2009}, and later extended the approach as part of the LADR logic to cover existential types and references~\cite{dreyer:neis:rossberg:birkedal:2010}.

The logical approach to logical relations was further refined by Turon \etal~\cite{turon:thamsborg:ahmed:birkedal:dreyer:2013,CaReSL}, culminating in the CaReSL logic, who showed
how Hoare triples and ghost thread pools can be used to define a binary logical relation for fine-grained concurrency.
Subsequently, a version of this binary logical relation was defined and mechanized in Iris by Krebbers \etal~\cite{irisIPM} and Timany~\cite{amin:thesis}.
However, in these papers, logical refinement judgments are meta-logical statements, and because of that, there are no high-level proof rules for establishing and combining refinements.
Instead, to prove a refinement judgment, the user of the logic had to unfold the definition of the refinement judgment, and reason directly in CaReSL or Iris.
In this work we provide a generalization that makes refinement judgments first-class logical statements, which is crucial to reason abstractly about invariants and formulate atomic specifications.
The technical differences are discussed in \Cref{reloc:sec:soundness:related}.
Thus we really make use of the fact that Iris is a \emph{higher-order} logic---CaReSL is only a second-order logic and it would not be possible to make refinement judgments first-class in CaReSL (indeed Iris is not only based on CaReSL, but just as much on the \emph{higher-order} iCAP logic of Svendsen and Birkedal~\cite{svendsen:birkedal:2014}).
We also provide a mechanization in Coq with tactical support that supports the same backwards reasoning style that is employed for proving weakest preconditions in Iris~\cite{irisIPM}.

Apart from the directions that we explored in this paper, there has been an abundance of work on logical relations models in Iris.
Binary logical relations models in Iris have been used for proving contextual equivalence in the context of Haskell's \texttt{ST} monad~\cite{timany:stefanesco:krogh-jespersen:birkedal:2018}, first-class per-thread continuations~\cite{timany:birkedal:2019}, and types-and-effect systems~\cite{krogh-jespersen:svendsen:birkedal:2017}.
Unary logical relations models in Iris have been used for proving type safety and data-race freedom of the Rust type system~\cite{jung:jourdan:krebbers:dreyer:2018,dang:jourdan:kaiser:dreyer:2020,jung:jourdan:krebbers:dreyer:2020}, type safety of session types~\cite{hinrichsen:louwrink:krebbers:bengtson:2020}, type safety of Scala's core calculus DOT~\cite{giarrusso:stefanseso:timany:birkedal:krebbers:2020}, and robust safety~\cite{swasey:garg:dreyer:2017,sammler:garg:dreyer:litak:2020}.
Logical relations in Iris have also been used for showing other relational properties such as termination-preserving refinement~\cite{tassarotti:jung:harper:2017}, non-interference of concurrent programs~\cite{SeLoC}, and recovery refinements (refinements in the presence of potential crashes)~\cite{chajed:tassarotti:kaashoek:zeldovich:2019}.
Nearly all of the aforementioned developments have accompanying mechanizations in Coq, and in some of those mechanizations the authors define their own tactics.
They define tactics for either their version of weakest preconditions or for derived operations, but, to the best of our knowledge, they do not define tactics for reasoning about the logical relation directly.

\paragraph{Relational logics.}
Logics for proving relational properties of programs have a long history, going back to the earlier work of Plotkin and Abadi~\cite{plotkin:abadi:1993}.
Since then many relational logics have been developed addressing various applications,
\eg probabilistic properties in security \cite{barthe:gregoire:zanella:2009,barthe:kopf:olmedo:zanella:2012,barthe:dupressoir:gregoire:kunz:schmidt:strub:2014} and cost analysis~\cite{cicek:barthe:gaboardi:garg:hoffmann:2017,radivcek:barthe:gaboardi:garg:zuleger:2017}.
Here we discuss some more recent work on relational logics that are capable of proving program refinements, with a focus on logics with support for higher-order languages, languages with mutable state, and languages with concurrency.

Earlier work on relational logics targeted programming languages with mutable state, but no concurrency.
Relational Hoare logic \cite{benton:2004} and Relational Separation logic \cite{yang:2007} can be used for reasoning about relational properties for first-order imperative programs, and they have inspired several extensions, for example to probabilistic languages \cite{barthe:gregoire:zanella:2009}.

Relational Higher Order Logic (RHOL)~\cite{aguirre:barthe:gaboardi:garg:strub:2019} is a recent relational higher-order logic for reasoning about relational properties of programs using relational refinement types.
The main judgment of RHOL allows one to prove that a relational formula $\varphi$ holds for two expressions, which do not necessarily have the same type.
While it is not directly possible to reason about expressions with different types in \reloc,
we can relate them by using a type variable $\tvar$ and a suitable interpretation of $\tvar$ in the environment $\Delta$.
The authors prove soundness of RHOL and show how to embed a number of type systems into it.
They provide proofs of various relational properties such as non-interference and relative cost, as provided by the systems they embed into RHOL.
In our work we consider only one (family of) relation(s), namely the logical relation for contextual refinement.
The programming language considered in RHOL is a pure terminating variant of simply-typed PCF, while we consider a much richer programming language with general references and concurrency.

Liang and Feng developed a relational rely-guarantee style logic~\cite{liang:feng:2013},
which can be used to prove refinement for fine-grained concurrent algorithms (including those with helping) but, in contrast
to \reloc, it can only be used to reason about first-order programs.

A relational logic for a sequential class-based language with dynamically allocated objects has been introduced by Banerjee \etal~\cite{banerjee:naumann:nikouei:2016}.
Their relational logic is based on region logic~\cite{banerjee:naumann:rosenberg:2013}, a first-order logic, which is amenable to SMT-based automation.
Their relational logic is aimed at proving refinement and non-interference.
The approach was further extended in~\cite{nikouei:banerjee:naumann:2019} to cover representation independence proofs using per-modules invariants and coupling relations.
In contrast, we focus on reasoning about refinements, but also treat concurrent programs and higher-order store, and we provide tool support for tactic-based interactive verification in Coq.

While not a logic in the strict sense, Relational Hoare Type Theory (RHTT)~\cite{nanevski:banerjee:garg:2013} is a dependent type theory for specification and verification of relational properties of higher-order programs with mutable first-order state, capable of expressing information flow and access control properties.
The object programming language of RHTT and the type system itself are shallowly embedded in Coq.

\paragraph{Atomic specifications.}
To our knowledge, we are the first to study logically atomic specifications in the relational setting.
Logically atomic specifications originate in Hoare-style program logics.
Jacobs and Piessens~\cite{jacobs:piessens:2011} have originally developed a methodology for specifying logically atomic operations.
In their approach, specifications are parameterized by auxiliary code that is performed at the linearization point.
This approach was refined to what we refer to as HOCAP-style specifications, originally introduced in the context of the eponymous logic~\cite{HOCAP}, where the role of auxiliary code is filled by \emph{view shifts} \cite{dinsdale-young:birkedal:gardner:parkinson:yang:2013}, which in this paper are given by Iris's update modality $\vsW$ (\Cref{reloc:subsec:hocap_style}).
Compared to the original Jacobs-Piessens approach, in HOCAP-style specifications, the physical state that a logically atomic function operates on is hidden behind an abstract predicate.
Furthermore, HOCAP-style specifications can also be formulated for non-logically atomic operations, as we have seen in \Cref{reloc:subsec:hocap_nonlin_operations}.
The HOCAP-style specifications were later adopted in the iCAP logic~\cite{svendsen:birkedal:2014} and Iris logic~\cite[Chapter 11]{lecturenotes}.

Because Jacobs-Piessens and HOCAP-style specifications require parameterizing the (ghost) functions that are executed at the linearization points, such specifications are often referred to as higher-order.
As an alternative to this higher-order approach, da Rocha Pinto \etal have introduced the notion of logically atomic triples in their program logic TaDA \cite{daRochaPinto:dinsdale-young:gardner:2014,daRochaPinto:2017}.
Logically atomic triples are a first-order construct, built in as a primitive construct into the logic, which can be used to specify the atomic updates that a program performs.
The atomic triples can be systematically composed in the style of Hoare logic.
A more detailed comparison between the first-order and higher-order approach is given in~\cite{dinsdale-young:daRochaPinto:gardner:2018}.
TaDA-style logically atomic triples were adapted for Iris by Jung \etal~\cite{iris1,irisProph}.
Specifically, they are encoded as derived constructs, using the Jacobs-Piessens approach, that satisfy the TaDA-style rules.

\paragraph{Speculative reasoning.}
To facilitate speculative reasoning, we employ the mechanism for prophecy variables recently introduced in Iris~\cite{irisProph}.
Prophecy variables were first introduced by Abadi and Lamport~\cite{abadi:lamport:1991} for the purpose of proving refinements of state machines.
The idea to use prophecy variables in program logic originates in the rely-guarantee style logic of Vafeiadis \cite{vafeiadis:thesis}, although his treatment of prophecy variables is informal, and he appeals to Abadi and Lamport~\cite{abadi:lamport:1991} for soundness.

Prophecy variables are not the only tool for carrying out speculative reasoning.
Both CaReSL~\cite{CaReSL} and extended LRG~\cite{liang:feng:2013} are program logics capable of proving refinements of programs with future-dependent linearization points.
Both employ, albeit in different forms, a mechanism for recording multiple potential logical states of the program.
These multiple states can then be coalesced into a single one, once the linearization point is determined, and that resulting state is used for establishing the refinement.

Other approaches~\cite{khyza:dodds:gotsman:parkinson:2017,delbianco:sergey:nanevski:banerjee:2017} for proving linearizability of algorithms with future-dependent linearization points use Hoare logics with auxiliary state to track the abstract history of a program as a partial order.
The crucial property is that all total extensions of the partial order result in valid linear histories of the program.

\paragraph{Other work on linearizability.}
One of the main application of \reloc is to prove linearizability of concurrent algorithms, by reducing it to contextual refinements.
Proving linearizability has a long history, and the program logic based approach is not the only one.
Other methods include automated model checking based solutions \cite{yang:wei:liu:sun:2009,vechev:yahav:yorsh:2009,cerny:radhakrishna:zufferey:chaudhuri:alur:2010,burckhardt:dern:musuvathi:tan:2010}
and static analysis, in particular shape analysis, \cite{amit:rinetzky:reps:sagiv:yahav:2007,berdine:lev-ami:manevich:ramalingam:sagiv:2008,vafeiadis:2009}.
The model checking approaches in question do not prove linearizability, but automatically \emph{check} execution traces for linearizability, bounding the heap or the number of threads.
Indeed, model checking approaches are designed to find bugs in a ``push-button'' fashion and can generate counterexample traces.
Approaches based on static analysis are usually sound even for unbounded heaps and threads, but limited to first-order programs.


\section{Discussion and conclusion}
\label{reloc:sec:conclusion}
In this paper we have presented \reloc---the first mechanized relational logic for proving refinements of fine-grained concurrent higher-order programs.
We have demonstrated that \reloc is expressive enough to formally prove contextual refinements of concurrent programs in a modular way, by employing relational specifications of programs.
Moreover, the mechanization of \reloc in Coq allows us to carry out tactic-based interactive proofs in an intuitive way, by using \reloc's type-directed structural rules and symbolic execution rules, coupled with the powerful mechanisms from Iris, such as invariants, ghost state, and prophecy variables.

In the remainder of this paper we discuss other case studies that we have mechanized in \reloc (\Cref{reloc:sec:other_examples}), discuss the ``escape hatch'' of \reloc (\Cref{reloc:sec:escape-hatch}) for verifying programs that cannot be handled by \reloc, and outline some directions for future work (\Cref{reloc:sec:future-work}).

\subsection{Other examples and case studies}
\label{reloc:sec:other_examples}
In addition to the examples that we have presented in the paper, we have mechanized a number of examples from the literature on logical relations in \reloc in Coq.
Below we give a short summary of those examples.
\begin{itemize}
\item Linearizability of the Treiber stack \cite{treiber:1986};
\item Refinement of higher-order cell objects from \cite{koutavas:wand:2006,ahmed:dreyer:rossberg:2009};
\item Refinement of a symbol lookup table and a name generation module from \cite{ahmed:dreyer:rossberg:2009};
\item Many equivalences from \cite{dreyer:neis:birkedal:2012}, adapted for the concurrent setting, including variations of the ``awkward example'' from \cite{pitts:stark:1998}, and the ``higher-order profiling'' example modified to use the atomic increment function $\FGincrement$;
\item Equivalence between different ways of defining the fixed point combinators;
\item Equivalence between late-choice and early-choice examples from \cite{turon:thamsborg:ahmed:birkedal:dreyer:2013};
\item Algebraic laws for the parallel composition operation and its interaction with non-deterministic choice and sequential composition, inspired by the work on Concurrent Kleene Algebra \cite{hoare:moller:struth:wehrman:2011};
\item Linearizability of the Michael-Scott queue~\cite{michael:scott:1996}, mechanized by Friis Vindum and Birkedal~\cite{vindum:birkedal:2021}.
\end{itemize}

\subsection{The ``escape hatch''}
\label{reloc:sec:escape-hatch}
The rules of \reloc are sound, but not complete.
In particular, there are some examples that cannot be verified in \reloc completely.
One class of such examples that we know of, are refinements of fine-grained concurrent data structures with \emph{external} linearization points (as opposed to fixed linearization points or future-dependent linearization points; see \cite{dongol:derrick:2015} for a survey outlining the differences).
Such external linearization points are present, for example, in algorithms that use \emph{helping} or \emph{work-stealing}.
Fortunately, \reloc's model on top of Iris provides an ``escape hatch'' that still allows us to verify some data structures with helping.

In the appendix~\cite{appendix} we consider an example of such a data-structure: a fine-grained concurrent stack with helping, a simplified version of the elimination-backoff stack from \cite{hendler:shavit:yerushalmi:2004}.
We prove that this stack with helping refines a coarse-grained stack (thus showing that the stack with helping is linearizable).
The stack with helping is interesting because two threads that perform a push and pop operation concurrently can \emph{eliminate} each other, by exchanging data through a side channel, thus reducing the contention for the top node of the stack.
To verify this example we make use of \reloc's ``escape hatch''---we unfold the definition of \reloc's refinement judgment, and perform an explicit proof in terms of \reloc's model in Iris so we can explicitly manipulate the ghost thread pool.
As we demonstrate, the ``escape hatch'' does not render \reloc useless for this example: we still use \reloc's proof rules to carry out the majority of the proof.
Only for a small part of the proof we need to work in the model.
This is achieved by encapsulating the \emph{elimination} mechanism of the stack, for which we can provide a logically atomic relational specification that is proved in the model of \reloc.
This specification can then be used through \reloc's high-level rules to verify the complete data structure without further breaking the abstraction.

\subsection{Future work}
\label{reloc:sec:future-work}
In future work we would like to examine the possibility of a more principled approach to specifying and verifying algorithms with helping, without having to reason in the model of \reloc.
In addition, it would be interesting to explore alternative approaches to speculative reasoning that do not involve prophecy variables.
Furthermore, we would like to study applications of \reloc to type-directed program transformations (for example typed closure conversion \cite{ahmed:blume:2008}) and message-passing programs (for example, by integration with the Iris-based Actris logic~\cite{hinrichsen:bengtson:krebbers:2020,hinrichsen:louwrink:krebbers:bengtson:2020}).

It would also be interesting to see how the ReLoC approach can be used for verifying other kinds of refinements, for example termination-sensitive refinements~\cite{tassarotti:jung:harper:2017} or refinements in the presence of crashes~\cite{chajed:tassarotti:kaashoek:zeldovich:2019}.


\section*{Acknowledgments}
We thank the anonymous reviewers of this paper and the conference version at LICS'18 for their comments and suggestions.
We thank Herman Geuvers and Simon Friis Vindum for discussions,
and Amin Timany for his contributions to the linearizability proof of the stack with helping.

Dan Frumin was supported by the Dutch Research Council (NWO)
under
STW project 14319 (Sovereign) and VIDI Project No. 016.Vidi.189.046
(Unifying Correctness for Communicating Software).
Robbert Krebbers was supported by the Dutch Research Council (NWO), project 016.Veni.192.259.
Lars Birkedal was supported by a Villum Investigator grant (no.\ 25804), Center for Basic Research in Program Verification (CPV), from the VILLUM Foundation and by the ModuRes Sapere Aude Advanced Grant from The Danish Council for Independent Research for the Natural Sciences (FNU).


\bibliographystyle{alpha}
\bibliography{reloc}

\end{document}